\documentclass[prl, twocolumn, aps, preprintnumbers,showpacs]{revtex4}

\usepackage{amsfonts,amssymb,amsmath,amsthm,graphicx}

\usepackage[usenames,dvipsnames]{color}
\usepackage{epic} 
\usepackage{pict2e}   
\usepackage{calc}
\usepackage{times}

\newtheorem{texttheorem}{Theorem}
\newtheorem{textcorollary}{Corollary}
\newtheorem{theorem}{Theorem}
\newtheorem{lemma}{Lemma}

\newtheorem{corollary}{Corollary}

\newcommand*{\bbN}{\mathbb{N}}
\newcommand*{\bbC}{\mathbb{C}}
\newcommand*{\bbR}{\mathbb{R}}
\newcommand*{\bbZ}{\mathbb{Z}}

\newcommand*{\cH}{\mathcal{H}}

\newcommand*{\cK}{\mathcal{K}}

\newcommand*{\cP}{\mathcal{P}}

\newcommand*{\cS}{\mathcal{S}}
\newcommand*{\cU}{\mathcal{U}}
\newcommand*{\cT}{\mathcal{T}}

\newcommand*{\id}{\mathbf{1}}
\newcommand*{\complex}{\mathbb{C}}

\newcommand*{\la}{\lambda}

\newcommand*{\tr}{\mathrm{tr}}
\newcommand*{\ket}[1]{| #1 \rangle}
\newcommand*{\bra}[1]{\langle #1 |}
\newcommand*{\spr}[2]{\langle #1 | #2 \rangle}
\newcommand*{\braket}[2]{\langle #1 | #2 \rangle}
\newcommand*{\proj}[1]{\ket{#1}\!\bra{#1}}

\newcommand*{\diag}{\text{diag}}

\newcommand*{\spanv}{\mathrm{span}}

\newcommand*{\eps}{\varepsilon}
\newcommand*{\Sym}{\mathrm{Sym}}

\newcommand*{\fail}{\mathrm{fail}}

\newcommand*{\naturals}{\bbN}

\newcommand*{\CP}[1]{\mathbb{C}P^{#1}}

\newcommand*{\one}{\mathbf{1}}

\newcommand*{\poly}{\mathrm{poly}}

\newcommand*{\mult}{\mathrm{mult}}

\newcommand*{\CG}[6]{\left[\begin{array}{ccc} #1 & \ #2 &  #3 \\ #4 &
      #5 &  #6 \end{array}\right]}

\newcommand*{\CGG}[6]{\left\{\begin{array}{ccc} #1 & \ #2 &  #3 \\ #4 &  #5 &  #6 \end{array}\right\}}

\begin{document}

\title{Reliable Quantum State Tomography}  

\author{Matthias \surname{Christandl}}  \author{Renato \surname{Renner}}
\affiliation{Institute for Theoretical Physics, ETH Zurich,
  Wolfgang-Pauli-Strasse 27, CH-8093 Zurich, Switzerland}

\pacs{03.65.Wj, 02.50.-r, 03.67.-a}

\date{November 7, 2012}
\begin{abstract}

  Quantum state tomography is the task of inferring the state of a
  quantum system by appropriate measurements.  Since the frequency
  distributions of the outcomes of any finite number of measurements
  will generally deviate from their asymptotic limits, the estimates
  computed by standard methods do not in general coincide with the
  true state, and therefore have no operational significance unless
  their accuracy is defined in terms of error bounds.  Here we show
  that quantum state tomography, together with an appropriate data
  analysis procedure, yields reliable and tight error bounds,
  specified in terms of confidence regions|a concept originating from
  classical statistics. Confidence regions are subsets of the state
  space in which the true state lies with high probability,
  independently of any prior assumption on the distribution of the
  possible states. Our method for computing confidence regions can be
  applied to arbitrary measurements including fully coherent ones; it
  is practical and particularly well suited for tomography on systems
  consisting of a small number of qubits, which are currently in the
  focus of interest in experimental quantum information science.
  
  \end{abstract}

\maketitle

The state of a classical system can in principle be determined to
arbitrary precision by applying a single measurement to it.  Any
imprecisions are due solely to inaccuracies of the measurement
technique, but not of fundamental nature.  This is different in
quantum theory. It follows from Heisenberg's uncertainty
principle 
that measurements generally have a random component and that
individual measurement outcomes only give limited information about
the state of the system|even if an ideal measurement device is
used. To illustrate this difference, it is useful to take an
information-theoretic perspective. Assume, for instance, that we are
presented with a two-level system about which we have no prior
information except that it has been prepared in a pure state, and our
task is to determine this state. If the system was classical, there
are only two possible pure states, and one single bit of information
is therefore sufficient for its full description. Furthermore, a
single measurement of the system suffices to retrieve this bit. If the
system was quantum, however, the situation becomes more interesting. A
two-level quantum system (a qubit) admits a continuum of pure states
that can, for example, be parameterized by a point on the Bloch
sphere. To determine this point to a given accuracy $\Delta$, at least
$\log_2 (4/\Delta^2)$ bits of information are necessary~\footnote{A
  disc with (great-circle) radius $\Delta$ on the Bloch sphere has
  area $2 \pi (1- \cos \Delta) \leq \pi\Delta^2 $, whereas the full
  Bloch sphere has area $4 \pi$. Consequently, there are at least $(4
  \pi) /(\pi\Delta^2) = 4/\Delta^2$ such discs. Note also that the
  (great-circle) distance $\Delta$ between two pure states $\phi$ and
  $\psi$ is related to their fidelity, $F(\phi, \psi) =
  |\braket{\phi}{\psi}| = |\cos \frac{\Delta}{2}|$, as well as to
  their trace distance, $\| \proj{\psi} - \proj{\phi}\|_1 = 2 |\sin
  \frac{\Delta}{2}| \approx \Delta$.}.  Conversely, according to
Holevo's bound~\cite{Holevo73}, any measurement applied to a single
qubit will provide us with at most one bit of information. And even if
$n$ identically prepared copies of the qubit were measured, at most
$\log_2 (n+1)$ bits of information about their state can be
obtained~\footnote{The bound follows from the fact that the joint
  state of $n$ identically prepared copies of a pure state in $\cH =
  \mathbb{C}^2$ lies in the symmetric subspace of $\cH^{\otimes n}$,
  which has dimension $n+1$.}. Hence the accuracy, $\Delta$, to which
the state can be determined always remains finite ($\Delta \geq
\frac{2}{\sqrt{n+1}}$), necessitating the specification of error bars.

The impact that randomness in measurement data has on the accuracy of
estimates has been studied extensively in statistics and, in
particular, estimation theory~\cite{Kay93}. 
The latter is concerned with the general problem of estimating the
values of parameters from data that depend probabilistically on
them. The data may be obtained from measurements on a quantum system
with parameter-dependent state, as considered in quantum estimation
theory~\cite{Helstrom76}. Quantum state tomography can be seen as a
special instance of quantum estimation, where one aims to estimate a
set of parameters large enough to determine the system's state
completely~\cite{Fano57,VogRis89,Raymeretal93,LePaAr95,Hradil97,Arianoetal03,Sugiyama11}.

 An obvious choice of parameters are the matrix elements of a density
 operator representation of the state. Due to the finite accuracy,
 however, the individual estimates for the matrix elements do not
 generally correspond to a valid density operator (for instance, the
 matrix may have negative eigenvalues).  This problem is avoided with
 other techniques, such as maximum likelihood estimation
 (MLE)~\cite{Hradil97,Banaszeketal99,Hradiletal04}, which has been
 widely used in
 experiments~\cite{Kwiatetal01,Blattetal04,Zeilingeretal05,BlattWineland08,Wallraffetal09,Home09,
   ZollerBlattetal11}, or Bayesian
 estimation~\cite{Helstrom76,Jones91,Buzeketal98,Schacketal01,TanKom05,Blume10,
   kalman}.

 In MLE, an estimate for the error bars can be obtained from the width
 of the likelihood function, which is approximated by the Fisher
 information matrix \cite{Banaszeketal99, usami, hradilmogil, burgh,
   hradil-diag, Sugiyama11}. In current experiments one also uses
 numerical plausibility tests known as ``bootstrapping'' or, more
 generally, ``resampling''~\cite{Efron93, Home09} in order to obtain
 bounds on the errors. However, despite being reasonable in many
 practical situations, these bounds are not known to have a
 well-defined operational interpretation and, in the case of the
 resampling method, may lead to an underestimate of the
 errors~\cite{Blume12}.

 In contrast, Bayesian methods can be used to calculate ``credibility
 regions'', i.e., subsets of the state space in which the state is
 found with high probability. This probability, however, depends on
 the choice of a ``prior'', corresponding to an assumption about the
 distribution of the states before the measurements (in particular,
 the assumption can not be justified by the experimental
 data). Furthermore, we remark that most known techniques are based on
 the assumption of independent and identical measurements (a notable
 exception is the one-qubit adaptive tomography analysis
 of~\cite{Sugiyama12}).  We refer to~\cite{Blume10} for a further
 discussion of currently used approaches to quantum state tomography,
 including pedagogical examples illustrating their limitations.

 In this Letter we introduce a method to obtain \emph{confidence
   regions}, that is, regions in state space which contain the true
 state with high probability. A point in the region may then serve as
 estimate and the maximal distance of the point to the border of the
 region as error bar. Our method allows to analyse data obtained from
 arbitrary quantum measurements, including fully coherent ones. The
 method does not rely on any assumptions about the prior distribution
 of the states to be measured. This makes it highly robust so that it
 can, for instance, be applied in the context of quantum cryptography,
 where the states to be estimated are chosen adversarially.

 The remainder of this Letter is organized as follows. We first
 describe a very general setup for tomography of quantum states
 prepared in a sequence of experiments, where we do not make the
 typical assumption that the states are independent and identically
 distributed (i.i.d.). We then show that, nevertheless, properties of
 the states can be inferred reliably using a suitable tomographic data
 analysis procedure (Theorem~\ref{thm:main}). In motivation and
 spirit, this result relates to recent research efforts on quantum de
 Finetti representations. We then specialise our setup to the case
 where, in principle, the experiments may be run an arbitrary number
 of times (while still only finitely many runs are used to generate
 data). This special case is (by the quantum de Finetti theorem)
 equivalent to an i.i.d.~preparation of the states, thereby justifying
 the common i.i.d.\ assumption in data analysis. The theorem, applied
 to this special case, then results in a construction for confidence
 regions for quantum state tomography (Corollary~\ref{cor:main}).

 \emph{General Scenario.|}Consider a collection $\cS_1, \ldots,
 \cS_{n+k}$ of finite-dimensional quantum systems with associated
 Hilbert space $\cH$, as depicted in Fig.~\ref{fig_setup} (see
 also~\cite{Renner07} and~\cite{chiri-defin}, where a similar setup is
 considered). We denote by $d$ the dimension of $\mathcal{H}$. For
 example, one may think of $n+k$ particles prepared in a series of
 experiments, where $\cH$ could correspond to the spin degree of
 freedom. From this collection, a \emph{sample} consisting of $n$
 systems is selected at random and measured according to an
 (arbitrary) Positive Operator Valued Measure (POVM) $\{B^n\}$, a
 family of positive semi-definite operators $B^n$ on $\cH^{\otimes n}$
 such that $\sum_{B^n} B^n = \id_{\cH}^{\otimes n}$. That is, each
 POVM element $B^n$ corresponds to a possible sequence of outcomes
 resulting from (not necessarily independent) measurements on the $n$
 systems.  The goal of quantum state tomography is to infer the state
 of the remaining $k$ systems, using the outcomes of these
 measurements.

 Note that the $k$ extra systems are not measured during data
 acquisition. Nevertheless, they play a role in the above scenario, as
 they are used to define operationally what state we are
 inferring. (In the special case of i.i.d.\ states, the extra systems
 are simply copies of the measured systems|see below.)  We also remark
 that, instead of measuring a sample of $n$ systems chosen at random,
 one may equivalently permute the initial collection of $n+k$ systems
 at random and then measure the first $n$ of them, i.e., $\cS_1,
 \ldots, \cS_{n}$. We will use this alternative description for our
 theoretical analysis.

\begin{figure}
\includegraphics[width=0.95\columnwidth,clip,trim = 25mm 206mm 111mm 18mm]{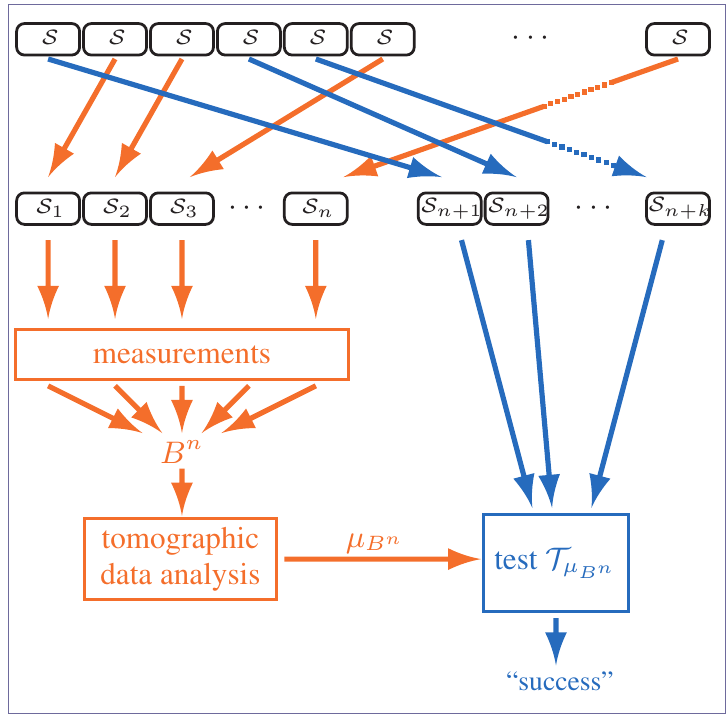} 
\caption{{\bf General scenario.}  Measurements are applied to a sample
  $\cS_1, \ldots, \cS_n$ consisting of $n$ systems, chosen at random
  from a collection of $n+k$ systems. The outcomes of the measurements
  are collected and given as input, $B^n$, to a data analysis
  procedure (orange). The aim of quantum state tomography is to make
  reliable predictions about the state of the remaining $k$
  (non-measured) systems $\cS_{n+1}, \ldots, \cS_{n+k}$. To model such
  predictions, we consider hypothetical tests, which output
  ``success'' whenever their input has a desired property
  (blue). Given only the output of the data analysis procedure,
  $\mu_{B^n}$, it is possible to characterize the tests
  $\cT_{\mu_{B^n}}$ that are passed with high
  probability|independently of the initial state of the $n+k$ systems
  (Theorem~\ref{thm:main}). \label{fig_setup} }
\end{figure}

In order to describe our main results, we imagine that the measurement
outcomes $B^n$ are processed by a data analysis routine that outputs a
probability distribution $\mu_{B^n}$ on the set of mixed states,
defined by
\begin{align*}
  \mu_{B^n}(\sigma)d\sigma = \frac{1}{c_{B^n}} \tr[\sigma^{\otimes n}
  B^n] d\sigma 
\end{align*}
(see Fig.~\ref{fig_estimateillustration} for an illustration).  Here
$d\sigma$ denotes the Hilbert-Schmidt measure with $\int
d\sigma=1$. Furthermore, $c_{B^n} =\tr[B^n \otimes \id_{\cK}^{\otimes
  n} \cdot \id_{\Sym^n(\cH \otimes \cK)}] / \binom{n+d^2-1}{d^2-1}$ is
a normalisation constant, where $\cK\cong \cH\cong \complex^d$ and
where $\id_{\Sym^n(\cH \otimes \cK)}$ is the projector onto the
symmetric subspace of $(\cH \otimes \cK)^{\otimes n}$.  Note that, in
Bayesian statistics, $\mu_{B^n}(\sigma)d\sigma$ corresponds to the
\emph{a posteriori} distribution when updating a Hilbert-Schmidt prior
$d \sigma$. Furthermore, in MLE, $\sigma \mapsto \tr[\sigma^{\otimes
  n} B^n]$ is known as the \emph{likelihood function}. Since our work
is not based on either of these approaches, however, we will not use
this terminology and simply refer to $\mu_{B^n}$.

\bigskip \emph{Reliable Predictions.|}We now show that $\mu_{B^n}$
contains all information that is necessary in order to make reliable
predictions about the state of the remaining systems $\cS_{n+1},
\ldots, \cS_{n+k}$. To specify these predictions, we consider
\emph{hypothetical tests}, a quantum version of a similar concept used
in classical statistics. Any such test acts on the joint system
consisting of $\cS_{n+1}, \ldots, \cS_{n+k}$ (see
Fig.~\ref{fig_setup}). Mathematically, a test is simply a measurement
with binary outcome, ``success'' or ``failure'', specified by a joint
POVM $\{T^{\fail},\id_{\cH}^{\otimes k} -T^{\fail}\}$ on $\cH^{\otimes
  k}$~\footnote{For our technical treatment (see \emph{Supplemental
    Information}), we also consider tests that act on a larger space,
  $({\cH \otimes \cK})^{\otimes k}$, which includes purifications of
  the systems.}. Note that the state of $\cS_{n+1}, \ldots, \cS_{n+k}$
could be inferred if we knew which hypothetical tests it would pass.
Hence, instead of estimating this state directly, we can equivalently
consider the task of predicting the outcomes of the hypothetical
tests.

Assume now that we carry out a test $\cT_{\mu_{B^n}} =
\{T^{\fail}_{\mu_{B^n}},{\id_{\cH}^{\otimes k}
  -T^{\fail}_{\mu_{B^n}}}\}$ depending on $\mu_{B^n}$. We denote by
$\rho^{n+k}$ the (unknown) joint state of the systems $\cS_1, \ldots,
\cS_{n+k}$ before the tomographic measurements. (As described above,
we can assume without loss of generality that the systems are permuted
at random, so that $\rho^{n+k}$ is permutation invariant.) If the
outcome of the tomographic measurement is $B^n$, then the
post-measurement state of the remaining systems is given explicitly by
$\rho_{B^n}^k = \frac{1}{\tr[B^n \rho^{n}]} \tr_{\cH^{\otimes n}}[B^n
\otimes \id_{\cH }^{\otimes k} \cdot\rho^{n+k}], $ where
$\tr_{\cH^{\otimes n}}$ denotes the partial trace over the $n$
measured systems.  Hence, the probability that the test
$\cT_{\mu_{B^n}}$ fails for the above state $\rho_{B^n}^k $ equals
$\tr[T^{\fail}_{\mu_{B^n}} \rho_{B^n}^k]$. The following theorem now
provides a criterion under which this failure probability is upper
bounded by any given $\eps > 0$. Crucially, the criterion only depends
on $\mu_{B^n}$, which is obtained by the tomographic data analysis. In
other words, $\mu_{B^n}$ allows us to determine which hypothetical
tests the state $\rho_{B^n}^k$ would pass.

\begin{figure}
\begin{center}
\setlength{\unitlength}{0.11mm} 
\centering
\includegraphics[width=0.44\columnwidth,clip,trim=30mm 220mm 130mm
 25mm]{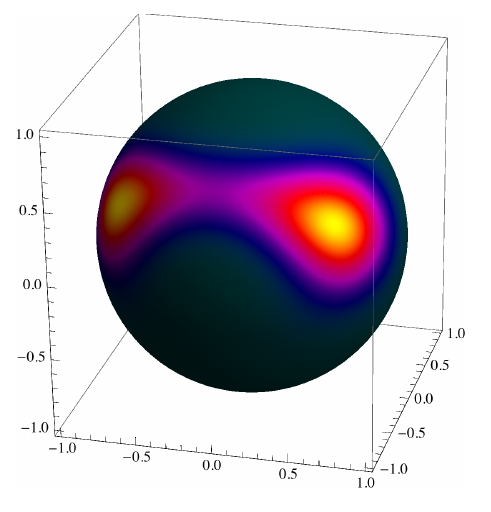}
\includegraphics[width=0.44\columnwidth,clip,trim=30mm 220mm 130mm 25mm]{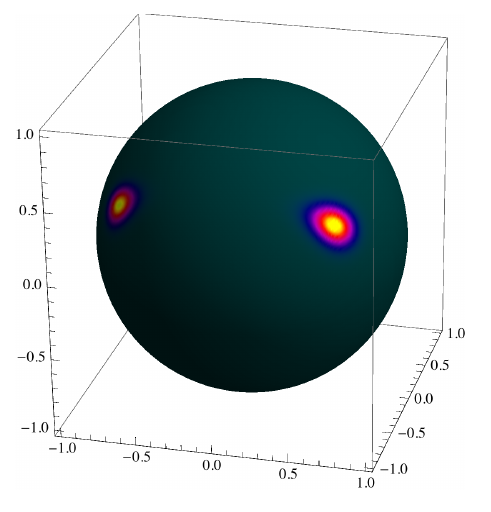}

\caption{{\bf Illustration of
    $\mu_{B^n}$.} \label{fig_estimateillustration} The graphs show
  $\mu_{B^n}$ for measurements performed on $n=20$ and $n=240$ qubits
  (for illustration purposes, we only depict the density $\mu_{B^n}$
  on the surface of the Bloch ball). Half of the qubits have been
  measured in the $z$ direction and half in the $y$ direction with
  relative frequencies of $(0.2, 0.8)$ and $(0.7, 0.3)$,
  respectively. One observes a rapid decrease in the size of the
  bright regions (which are connected by a bright tube inside the
  Bloch ball), which correspond to large values of $\mu_{B^n}$.}
  \end{center}
\end{figure}

\begin{texttheorem}[Reliable Predictions from
  $\mu_{B^n}$] \label{theorem:operational} \label{thm:main} For all
  $B^n$ let $T^{\fail}_{\mu_{B^n}}$ be a POVM element on $\cH^{\otimes
    k}$ such that
$$\int \mu_{B^n}(\sigma) \tr [T^{\fail}_{\mu_{B^n}} \sigma^{\otimes k}] d \sigma
\leq \eps c_{n+k, d}^{-1} \ , $$ where $c_{N, d} = {N+ d^2-1\choose
  d^2-1}$. Then, for any $\rho^{n+k}$,
\begin{align*}
 \big\langle \tr[T^{\fail}_{\mu_{B^n}} \rho_{B^n}^k]\big\rangle_{B^n}
  \leq \eps \ ,
\end{align*}
where $ \big\langle\cdot\big\rangle_{B^n}$ denotes the expectation
taken over all possible measurement outcomes $B^n$ when measuring
$\rho^n$ (i.e., outcome $B^n$ has probability $\tr[B^n \rho^{n}]$).
\end{texttheorem}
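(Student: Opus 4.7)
\emph{Proof plan.} The first step is to absorb the outcome weighting into a single joint trace. Using the explicit formula for $\rho_{B^n}^k$, the normalisation $\tr[B^n\rho^n]$ cancels the probability of outcome $B^n$, giving
\[ \big\langle \tr[T^{\fail}_{\mu_{B^n}} \rho_{B^n}^k]\big\rangle_{B^n} = \sum_{B^n} \tr\bigl[(B^n \otimes T^{\fail}_{\mu_{B^n}})\, \rho^{n+k}\bigr]. \]
By the remark in the text that sampling $n$ out of $n+k$ systems uniformly is equivalent to first symmetrising $\rho^{n+k}$ and then measuring the first $n$, I may assume $\rho^{n+k}$ to be permutation invariant.

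Next I would apply a symmetric-subspace purification argument. A permutation-symmetric mixed state on $\cH^{\otimes(n+k)}$ admits a purification $\ket{\psi}$ lying in $\Sym^{n+k}(\cH\otimes\cK)$ for $\cK\cong\cH$, so that $\proj{\psi} \leq \id_{\Sym^{n+k}(\cH\otimes\cK)}$. Extending $B^n$ and $T^{\fail}_{\mu_{B^n}}$ by the identity on the $\cK$-systems and inserting the standard resolution of the identity
\[ \id_{\Sym^{n+k}(\cH\otimes\cK)} = c_{n+k,d} \int \proj{\phi}^{\otimes(n+k)}\, d\phi, \]
where $d\phi$ is the unitarily invariant probability measure on unit vectors of $\cH\otimes\cK$, the tensor-product factorisation of $\proj{\phi}^{\otimes(n+k)}$ splits the trace into a piece on the $n$ measured and a piece on the $k$ remaining systems. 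Changing variables to $\sigma = \tr_{\cK}\proj{\phi}$, whose pushforward is precisely the Hilbert-Schmidt measure $d\sigma$, the bound becomes
\[ \big\langle \tr[T^{\fail}_{\mu_{B^n}} \rho_{B^n}^k]\big\rangle_{B^n} \leq c_{n+k,d} \sum_{B^n} \int d\sigma\, \tr[B^n \sigma^{\otimes n}]\, \tr[T^{\fail}_{\mu_{B^n}} \sigma^{\otimes k}]. \]

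Finally, recognising $\tr[B^n\sigma^{\otimes n}] = c_{B^n}\,\mu_{B^n}(\sigma)$ from the definition of $\mu_{B^n}$, the hypothesis bounds the inner integral by $c_{B^n}\cdot \eps/c_{n+k,d}$; the $c_{n+k,d}$ cancels, and $\sum_{B^n} c_{B^n}=1$ follows from POVM completeness $\sum_{B^n}B^n = \id_{\cH}^{\otimes n}$ together with $\tr[\id_{\Sym^n(\cH\otimes\cK)}] = c_{n,d}$, delivering the claimed bound $\eps$. I expect the main obstacle to lie in the middle step: justifying that every permutation-symmetric state on $\cH^{\otimes(n+k)}$ admits a purification in $\Sym^{n+k}(\cH\otimes\cK)$, and checking that the partial-trace pushforward of the Haar measure on pure states of $\cH\otimes\cK$ coincides with the Hilbert-Schmidt measure used to define $\mu_{B^n}$. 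Both are standard de Finetti-flavoured facts, but they are exactly the ingredients that pin down the tight constant $c_{n+k,d}$ appearing in the hypothesis.
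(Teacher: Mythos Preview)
Your proposal is correct and follows essentially the same route as the paper: purify the permutation-invariant $\rho^{n+k}$ into $\Sym^{n+k}(\cH\otimes\cK)$, bound by the de Finetti operator $c_{n+k,d}\int\proj{\phi}^{\otimes(n+k)}d\phi$ via Schur's lemma, factorise, push forward $d\phi$ to the Hilbert-Schmidt measure $d\sigma$ (which the paper records explicitly as the definition of $d\sigma$), and conclude using $\sum_{B^n} c_{B^n}=1$. The paper organises this by first proving the statement for tests on $(\cH\otimes\cK)^{\otimes k}$ using $\nu_{B^n}(\phi)$ and then specialising, whereas you do the pushforward to $\sigma$ in one go, but the content is identical and the two facts you flag as the crux (symmetric purification and the Hilbert-Schmidt pushforward) are exactly the ingredients the paper invokes.
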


As we shall see, the tests are typically chosen such that the integral
over $d \sigma$ decreases exponentially with $n$. The additional
factor $c_{n+k, d}^{-1}$, which is inverse polynomial in $n+k$, plays
therefore only a minor role in the criterion. We also emphasize that
the theorem is valid independently of how the systems $\cS_1, \ldots,
\cS_{n+k}$ have been prepared.  In particular, the (commonly made)
assumption that they all contain identical copies of a single-system
state is not necessary.

The proof of the theorem, together with a slightly more general
formulation, is provided in the \emph{Supplemental Information}.  It
makes crucial use of the following fact, which has also been used in
quantum-cryptographic security proofs: there exists a so-called
\emph{de Finetti state} $\tau^N$, i.e., a convex combination of tensor
products, such that $\rho^{N}\leq c_{N, d} \cdot \tau^N$ holds for all
permutation-invariant states $\rho^{N}$ on $\cH^{\otimes
  N}$~\cite{HayashiApprox, ChKoRe09}.

\bigskip \emph{Confidence Regions.|}A \emph{confidence region} is a
subset of the single-particle state space which is likely to contain
the ``true'' state. In order to formalize this, we consider the
practically relevant case of an experiment that can \emph{in
  principle} be repeated arbitrarily often. Within the above-described
general scenario, this corresponds to the limit where $k$ approaches
infinity while $n$, the number of actual runs of the experiment (whose
data is analyzed), is still finite and may be small.

Since the initial state $\rho^{n+k}$ of all $n+k$ systems can without
loss of generality be assumed to be permutation invariant (see above),
the Quantum de Finetti
Theorem~\cite{HudMoo76,RagWer89,CaFuSc02,CKMR06,Renner07} implies
that, for fixed $n, k' \in \mathbb{N}$, the marginal state
$\rho^{n+k'}$ on $n+k'$ systems is approximated by a mixture of
product states, i.e.,
\begin{align}  \label{eq_iid}
  \rho^{n+k'} = \tr_{k-k'}(\rho^{n+k}) \approx \int P(\sigma) \sigma^{\otimes (n+k')}
  d \sigma \ ,
\end{align}
for some probability density function $P$ and approximation error
proportional to $1/k$. In the limit of large $k$, the marginal state
$\rho^{n+k'}$ is thus fully specified by $P$. We can therefore
equivalently imagine that all systems were prepared in the same
unknown ``true'' state $\sigma$, which is distributed according to $P$
(see Fig.~\ref{fig_iid}). This corresponds to the i.i.d.\ assumption
commonly made in the literature on quantum state tomography, which is
therefore rigorously justified within our general setup.

As before, we assume that tomographic measurements are applied to the
systems $\cS_1, \ldots, \cS_n$, whereas the remaining systems,
$\cS_{n+1}, \ldots, \cS_{n+k'}$, undergo a test (depending on the
output $\mu_{B^n}$ of the data analysis procedure). We may now
consider tests that are passed if and only if the true state $\sigma$
is contained in a given subset $\Gamma^\delta_{\mu_{B^n}}$ of the
state space. The following corollary provides a sufficient criterion
under which the tests are passed, so that $\Gamma^\delta_{\mu_{B^n}}$
are confidence regions.  (Note that the criterion refers to additional
sets $\Gamma_{\mu_{B^n}}$ that are related to the confidence regions
$\Gamma^\delta_{\mu_{B^n}}$; see the \emph{Supplemental Information}
for an illustration.)

\begin{textcorollary}[Confidence Regions from
  $\mu_{B^n}$] \label{cor:main}
  For all $B^n$ let $\Gamma_{\mu_{B^n}}$ be a set of states on $\cH$
  such that
\begin{align}
  \int_{\Gamma_{\mu_{B^n}}} \mu_{B^n}(\sigma) d\sigma \geq 1- \frac{\eps}{2} c_{2n, d}^{-1} \ .
\end{align}
Then, for any $\sigma$, 
$$\mathrm{Prob}_{B^n}[\sigma \in \Gamma^{\delta}_{\mu_{B^n}}]\geq
1-\epsilon \ ,$$ where $\mathrm{Prob}_{B^n}$ refers to the
distribution of the measurement outcomes $B^n$ when measuring
$\sigma^{\otimes n}$ (i.e., outcome $B^n$ has probability $\tr[B^n
\sigma^{\otimes n}]$) and where
\begin{align} \label{eq_Gammadelta} \Gamma^\delta_{\mu_{B^n}}=
  \{\sigma: \exists \sigma' \in \Gamma_{\mu_{B^n}} \text{ with }
  F(\sigma, \sigma')^2 \geq 1- \delta^2 \} \ ,
\end{align}
with $\delta^2 =\frac{2}{n} (\ln \frac{2}{\eps}+ 2 \ln c_{2n, d})$ and
$F(\sigma, \sigma')= \|\sqrt{\sigma} \sqrt{\sigma'}\|_1$ the fidelity.
\end{textcorollary}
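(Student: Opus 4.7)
The plan is to derive Corollary~\ref{cor:main} from Theorem~\ref{thm:main} by designing a hypothetical test whose failure certifies that the true state lies outside $\Gamma^\delta_{\mu_{B^n}}$. First, perform the i.i.d.\ reduction already prepared in the text preceding the statement: letting $k\to\infty$ and invoking the quantum de Finetti theorem, the joint state becomes $\int P(\sigma)\sigma^{\otimes(n+k')}d\sigma$; since the corollary's claim is quantified over every $\sigma$, it suffices to take $P=\delta_\sigma$, in which case the conditional state of the remaining $k'$ systems given outcome $B^n$ is $\sigma^{\otimes k'}$. I would apply Theorem~\ref{thm:main} with the theorem's $k$ set equal to the corollary's $n$, so that the factor $c_{n+k,d}=c_{2n,d}$ appearing throughout the statement is produced.

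Working on the purified space $(\cH\otimes\cK)^{\otimes n}$ (as permitted by the footnote following the definition of tests), with canonical purifications $\ket{\psi_\sigma}\in\cH\otimes\cK$ chosen so that $|\braket{\psi_\sigma}{\psi_{\sigma'}}|\leq F(\sigma,\sigma')$, I would take the ``pretty good'' continuous POVM
$$E_{\sigma'}\,d\sigma'=c_{n,d}\,\proj{\psi_{\sigma'}}^{\otimes n}d\sigma',$$
which is a valid sub-POVM by the symmetric-subspace identity $c_{n,d}\int\proj{\psi}^{\otimes n}d\psi=\id_{\Sym^n(\cH\otimes\cK)}\leq\id$, and declare failure when the outcome $\sigma'$ falls outside $\Gamma_{\mu_{B^n}}$:
$$T^{\fail}_{\mu_{B^n}}=c_{n,d}\!\int_{\sigma'\notin\Gamma_{\mu_{B^n}}}\!\proj{\psi_{\sigma'}}^{\otimes n}d\sigma'.$$

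Two checks are then needed. For any $\sigma\notin\Gamma^\delta_{\mu_{B^n}}$, every $\sigma'\in\Gamma_{\mu_{B^n}}$ satisfies $F(\sigma,\sigma')^2<1-\delta^2$, so multiplicativity of fidelity yields $|\braket{\psi_\sigma}{\psi_{\sigma'}}|^{2n}\leq(1-\delta^2)^n\leq e^{-n\delta^2}$; the choice $\delta^2=\frac{2}{n}(\ln\frac{2}{\eps}+2\ln c_{2n,d})$ makes $c_{n,d}e^{-n\delta^2}\leq\frac{1}{2}$, so $\tr[T^{\fail}_{\mu_{B^n}}\proj{\psi_\sigma}^{\otimes n}]\geq\frac{1}{2}$. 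For the hypothesis of Theorem~\ref{thm:main} at parameter $\eps/2$, I would swap the order of integration in $\int\mu_{B^n}(\sigma)\tr[T^{\fail}_{\mu_{B^n}}\proj{\psi_\sigma}^{\otimes n}]d\sigma$ and apply the symmetric-subspace identity on $2n$ copies, $c_{2n,d}\int\proj{\psi_\sigma}^{\otimes 2n}d\sigma=\id_{\Sym^{2n}(\cH\otimes\cK)}$, together with the partial-trace identity $\tr_n[\id_{\Sym^{2n}(\cH\otimes\cK)}]=(c_{2n,d}/c_{n,d})\id_{\Sym^n(\cH\otimes\cK)}$; after the resulting polynomial factors of $c_{n,d}$ and $c_{2n,d}$ cancel, the double integral reduces to the mass condition $\int_{\sigma'\notin\Gamma_{\mu_{B^n}}}\mu_{B^n}(\sigma')d\sigma'\leq\frac{\eps}{2}c_{2n,d}^{-1}$.

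Theorem~\ref{thm:main} then supplies $\big\langle\tr[T^{\fail}_{\mu_{B^n}}\proj{\psi_\sigma}^{\otimes n}]\big\rangle_{B^n}\leq\eps/2$; combining this with the lower bound above and Markov's inequality,
$$\mathrm{Prob}_{B^n}[\sigma\notin\Gamma^\delta_{\mu_{B^n}}]\leq 2\bigl\langle\tr[T^{\fail}_{\mu_{B^n}}\proj{\psi_\sigma}^{\otimes n}]\bigr\rangle_{B^n}\leq\eps,$$
which is the corollary. The main obstacle is the polynomial bookkeeping in the verification of Theorem~\ref{thm:main}'s hypothesis: the partial-trace identity for the symmetric projector on $2n$ copies of $\cH\otimes\cK$ is what allows the factor $c_{2n,d}^{-1}$ to appear on \emph{both} sides of the hypothesis, and is precisely what dictates the choice $k=n$ (and therefore the appearance of $c_{2n,d}$ throughout the corollary).
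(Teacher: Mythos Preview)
Your overall architecture---build a test from the covariant measurement on $k=n$ purified copies, invoke Theorem~\ref{thm:main}, then Markov---is the paper's, but two steps do not go through as written.

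First, every identity you invoke of the form $c_{N,d}\int\proj{\psi_\sigma}^{\otimes N}d\sigma=\id_{\Sym^N(\cH\otimes\cK)}$ is false: the symmetric-subspace resolution holds for the integral over \emph{all} pure states $\psi\in\cP(\cH\otimes\cK)$ with the Haar-induced measure $d\psi$, not for the integral over mixed states $\sigma$ with a fixed purification section $\sigma\mapsto\psi_\sigma$. The fibre of purifications over each $\sigma$ is a full $U(d)$-orbit, so a section integral is strictly smaller than the pure-state integral and need not be a sub-POVM, need not sum (over $\Gamma$ and $\overline{\Gamma}$) to~$1$ on $\proj{\psi_\sigma}^{\otimes n}$, and does not reproduce $\id_{\Sym^{2n}}$. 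This breaks both your lower bound $\tr[T^{\fail}\proj{\psi_\sigma}^{\otimes n}]\geq\tfrac12$ and your verification of the Theorem~\ref{thm:main} hypothesis. The paper instead lifts $\Gamma_\mu$ to $\Omega_\mu=\{\psi:\tr_\cK\psi\in\Gamma_\mu\}\subset\cP(\cH\otimes\cK)$ and integrates the test over pure states.

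Second, and more structurally, even with the correct pure-state integrals your reduction of the Theorem~\ref{thm:main} hypothesis to the mass condition fails: the kernel $c_{n,d}|\braket{\phi}{x}|^{2n}$ is not reproducing, so $\int_x\nu_{B^n}(x)\tr[T^{\fail}\proj{x}^{\otimes n}]dx$ does not collapse to $\int_{\overline{\Gamma}}\mu_{B^n}$. The paper's fix is to integrate the test over the complement of the \emph{$\delta/2$-enlargement}, $T^{\fail}=\dim(n,d^2)\int_{\overline{\Omega_\mu^{\delta/2}}}\proj{\phi}^{\otimes n}d\phi$. This buys a two-sided separation via the triangle inequality for purified distance: for $x\in\Omega_\mu$ and $\phi\in\overline{\Omega_\mu^{\delta/2}}$ one has $P(\tr_\cK x,\tr_\cK\phi)>\delta/2$, so $\tr[T^{\fail}\proj{x}^{\otimes n}]\leq\dim(n,d^2)e^{-\delta^2 n/2}$; and symmetrically for $x\in\overline{\Omega_\mu^\delta}$ and $\phi\in\Omega_\mu^{\delta/2}$. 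Splitting the hypothesis integral as $\int_{\Omega_\mu}+\int_{\overline{\Omega_\mu}}$ then gives one piece bounded by $\dim(n,d^2)e^{-\delta^2 n/2}$ and another by $\int_{\overline{\Gamma}}\mu_{B^n}$, and the choice of $\delta$ absorbs all polynomial factors. Your test, which integrates over $\overline{\Omega_\mu}$ with no buffer, has no control on the $x\in\Omega_\mu$ piece since $x$ and $\phi$ can both sit on the boundary of $\Gamma_\mu$.
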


\begin{figure}
  \includegraphics[width=0.95\columnwidth,clip,trim = 25mm 198mm 111mm
  18mm]{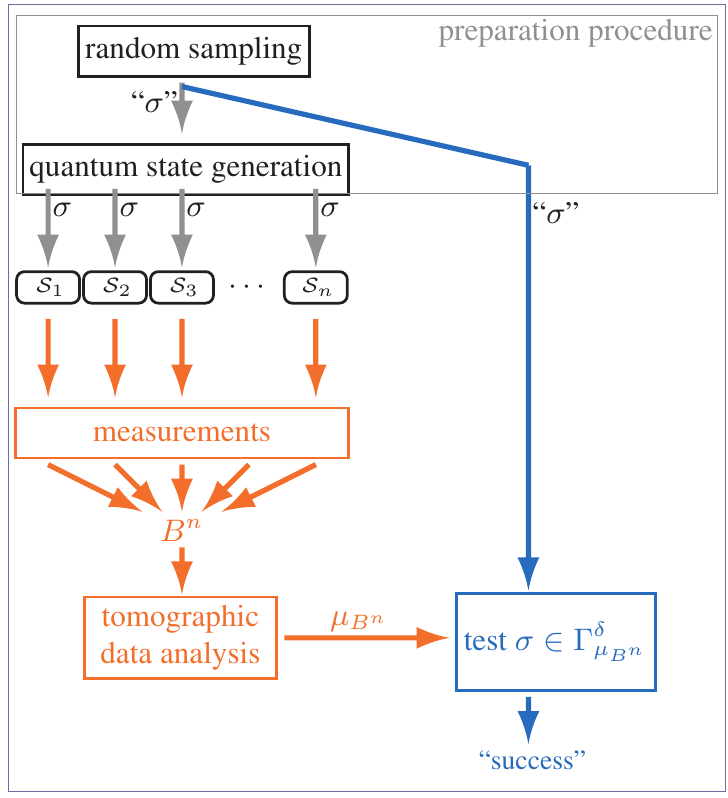}
  \caption{{\bf Tomography of identically prepared
      systems.} \label{fig_iid} This scenario falls into the framework
    depicted in Fig.~\ref{fig_setup}, corresponding to the limit where
    the number of extra systems, $k$, approaches infinity. In this
    case, we can assume without loss of generality that the systems
    have been prepared in a two-step process: first, a description
    ``$\sigma$'' of a single-system state is sampled at random
    (according to some probability density $P$); second, $n$ identical
    systems $\cS_1, \ldots, \cS_n$ are prepared in state $\sigma$. The
    $k$ extra systems of Fig.~\ref{fig_setup} are replaced by a
    classical variable carrying the description ``$\sigma$''.  Given
    only the output of the tomographic data analysis, $\mu_{B^n}$, it
    is possible to decide whether $\sigma$ is (with probability at
    least $1-\eps$) contained in a given set
    $\Gamma_{\mu_{B^n}}^\delta$ (Corollary~\ref{cor:main}). If this is
    the case then $\Gamma_{\mu_{B^n}}^\delta$ is a confidence region
    (with confidence level $1-\eps$).}
\end{figure}

The main idea for the proof of the corollary is to apply the above
theorem to tests (acting on $k' = n$ systems) derived from Holevo's
optimal covariant measurement~\cite{Holevo}. We refer
to the \emph{Supplemental Information} for the technical proof.

Note that $1-\eps$ can be interpreted as the \emph{confidence level}
of the statement that the true state $\sigma$ is contained in the set
$\Gamma^\delta_{\mu_{B^n}}$.  Crucially, the claim is valid for all
$\sigma$. In particular, it is independent of any initial probability
distribution, $P$, according to which $\sigma$ may have been chosen
(see Eq.~\ref{eq_iid}). In other words, the operational interpretation
of the sets $\Gamma_{\mu_{B^n}}^\delta$ as confidence regions does not
depend on any extra assumptions about the preparation procedure or on
the specification of a prior.  In fact, $\sigma$ could even be chosen
``maliciously'', for example in a quantum cryptographic context, where
an adversary may try to pretend that a system has certain properties
(e.g., that its state is entangled while in reality it is not).

Obviously, the assertion that a state $\sigma$ is contained in a
certain set $\Gamma_\mu^{\delta}$ can only be considered a good
approximation of $\sigma$ if the set $\Gamma_\mu^{\delta}$ is
small. This is indeed the case for reasonable choices of the
measurement $\{B^n\}$. For instance, in the practically important case
where each system is measured independently and identically with POVM
$\{E_i\}$, the confidence region is, for generic states,
asymptotically of size proportional to $\frac{1}{\sqrt{n}}$ in the
(semi-)norm on the set of quantum states induced by the POVM: $\|\cdot
\|_{\{E_i\}}=\sum_i |\tr (E_i \cdot) |$ (see \emph{Supplemental
  Information} and~\cite{MattewsWehnerWinter}).

\emph{Conclusion.|}Despite conceptual differences, our technique is
not unrelated to MLE and Bayesian estimation. As mentioned before,
$\mu(\sigma)$ is proportional to the likelihood function and,
therefore, methods to construct confidence regions with our technique
are likely to use adaptations of techniques from MLE. Also,
$\mu_{B^n}(\sigma)d\sigma$ corresponds to the probability measure
obtained from applying Bayes' updating rule to the Hilbert-Schmidt
measure; a fact that implies near optimality~\footnote{More precisely,
  the bound on the parameter $\eps$, which is usually exponentially
  decreasing in the size of the confidence region
  $\Gamma_{\mu_{B^n}}$, is tight up to a polynomial factor.} of our
method in the practically most relevant case of independent
tomographically complete measurements~\cite{TanKom05}.

Recently, another novel approach to quantum state tomography has been
proposed~\cite{CompressedTomo,EfficientTomo}, which yields reliable
error bounds similar to ours. A central difference between this
approach and ours is the level of generality. In~\cite{CompressedTomo,
  EfficientTomo} a specific sequence of measurement operations is
proposed, which is adapted to systems whose states are fairly
pure. Under this condition, the estimate converges fast and, in
addition, can be computed efficiently. In contrast, our method can be
applied to arbitrary measurements (i.e., any tomographic data may be
analyzed). Accordingly, the convergence of the confidence region
depends on the choice of these measurements. However, we do not
propose any specific algorithm for the efficient computation of
confidence regions.

Finally, we refer to the very recent work of
Blume-Kohout~\cite{Blume12} for an excellent discussion of the
notion of confidence regions in quantum state tomography. In
particular, he shows that confidence regions, as considered here, can
be defined via likelihood ratios.

\emph{Acknowledgements.|}We thank Robin Blume-Kohout for useful
comments on earlier versions of this work.  We acknowledge support
from the Swiss National Science Foundation (grants PP00P2-128455 and
200020-135048, and through the National Centre of Competence in
Research `Quantum Science and Technology'), the German Science
Foundation (grant \mbox{CH~843/2-1}), and the European Research
Council (grant 258932).

\bibliographystyle{apsrev}

\newpage

\onecolumngrid
\appendix

\section*{{\Large Supplemental Information}}

{\bf Note: This document contains additional material (Appendices~A,
  B, C, and~D) that is not included in the ``Supplemental
  Information''  accompanying the journal
  version.}

\bigskip

In the first part, we give precise statements and proofs of our
technical results (Section~1), discuss the practically important case
of independent measurements (Section~2) and present further remarks
(Section~3). The second part explains how to
represent states on the symmetric subspace (Appendix~A) as well as
functions on the state space (Appendices~B and~C), and concludes with
examples (Appendix~D). 

\section{1 $\quad$ Statements and Proofs} 

Let $\cH$ be a Hilbert space of finite dimension $d$, i.e., $\cH \cong
\complex^d$. We denote the set density matrices on $\cH$ by $\cS(\cH)$
and the subset of pure states by $\cP(\cH)$. Note that $\cP(\cH)$ can
be identified with $\CP{d-1}$, the complex projective space of
dimension $d-1$. $\CP{d-1}$ carries a natural action of the unitary
group $U(d)$. The Haar measure on $U(d)$ therefore descends to a
measure on $\cP(\cH)$ which is invariant under the action of
$U(d)$. We denote this measure by $d\phi$ and fix the normalisation so
that $\int d\phi =1$. The symmetric subspace $\Sym^n(\cH)$ of
$\cH^{\otimes n}$ is defined as the space of vectors that are
invariant under the action of the symmetric group $S_n$ that permutes
the tensor factors. Since the action of $S_n$ commutes with the action
of the unitary group on $\cH^{\otimes n}$, $U(d)$ acts on
$\Sym^n(\cH)$ as well. We denote the dimension of $\Sym^n(\cH)$ by
$\dim(n, d)$. The following lemma~\cite{HayashiApproxSupp,
  ChKoRe09Supp} is crucial in the derivation of the main results.
\begin{lemma}
\label{lem:postselect}
Let $\rho^n \in \cS(\Sym^n(\complex^d))$. Then 
$$\rho^n\leq \dim(n, d) \int \phi^{\otimes n} d\phi.$$
Furthermore, $\dim(n, d) = {n+d-1 \choose n} \leq (n+1)^{d-1}$.
\end{lemma}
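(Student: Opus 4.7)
The plan is to reduce the lemma to the standard identity
\begin{equation*}
\int \phi^{\otimes n} \, d\phi \;=\; \frac{1}{\dim(n,d)} \, \Pi_{\Sym^n(\cH)} \,,
\end{equation*}
where $\Pi_{\Sym^n(\cH)}$ denotes the orthogonal projector onto $\Sym^n(\cH) \subset \cH^{\otimes n}$. First I would establish this identity. The operator $A := \int \phi^{\otimes n} d\phi$ is manifestly supported on $\Sym^n(\cH)$, since each $\phi^{\otimes n}$ is. Moreover, by $U(d)$-invariance of the measure $d\phi$, the operator $A$ commutes with the action of $U(d)$ on $\Sym^n(\cH)$. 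Since $\Sym^n(\cH)$ is an irreducible representation of $U(d)$, Schur's lemma forces $A$ to be a scalar multiple of $\Pi_{\Sym^n(\cH)}$. Taking the trace gives $\tr A = \int \tr(\phi^{\otimes n}) d\phi = 1$ and $\tr \Pi_{\Sym^n(\cH)} = \dim(n,d)$, which fixes the constant.

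Next I would combine this with the trivial operator inequality $\rho^n \leq \Pi_{\Sym^n(\cH)}$. This holds because $\rho^n$ is a density operator supported on $\Sym^n(\cH)$, so its nonzero eigenvalues lie in $[0,1]$; hence $\rho^n \preceq \Pi_{\Sym^n(\cH)}$ in the Loewner order. Substituting the identity above yields
\begin{equation*}
\rho^n \;\leq\; \Pi_{\Sym^n(\cH)} \;=\; \dim(n,d) \int \phi^{\otimes n}\, d\phi \,,
\end{equation*}
which is exactly the claimed bound.

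Finally, for the dimension statement, I would recall the standard combinatorial fact that an orthonormal basis of $\Sym^n(\complex^d)$ is indexed by monomials of degree $n$ in $d$ variables, equivalently by weak compositions of $n$ into $d$ parts, giving $\dim(n,d)=\binom{n+d-1}{n}=\binom{n+d-1}{d-1}$. The inequality $\binom{n+d-1}{d-1}\leq (n+1)^{d-1}$ follows because a weak composition of $n$ into $d$ parts is determined by its first $d-1$ parts, each taking one of the $n+1$ values in $\{0,1,\dots,n\}$.

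The proof has no real obstacle: the only nontrivial ingredient is the Schur-lemma identification of $\int \phi^{\otimes n} d\phi$ with the normalized symmetric projector, which is classical and already cited in the paper (references \cite{HayashiApproxSupp,ChKoRe09Supp}). Thus the work is essentially to assemble Schur's lemma, the support-of-$\rho^n$ observation, and a small counting estimate.
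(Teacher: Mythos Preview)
Your proof is correct and matches the paper's approach essentially verbatim: both invoke Schur's lemma on the irreducible $U(d)$-module $\Sym^n(\complex^d)$ to identify $\int \phi^{\otimes n}\,d\phi$ with $\dim(n,d)^{-1}\,\id_{\Sym^n(\complex^d)}$, then use the trivial bound $\rho^n \leq \id_{\Sym^n(\complex^d)}$. Your treatment of the dimension count and the $(n+1)^{d-1}$ bound is slightly more explicit than the paper's, but the argument is the same.
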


The state defined by the integral on the right hand side is sometimes
called \emph{de Finetti state}.  Note that the corresponding statement
mentioned in the Letter for general permutation-invariant density
operators $\rho^n$ is obtained from this lemma by considering a
purification of $\rho^n$ in the symmetric subspace
(see~\cite{ChKoRe09Supp}).

\begin{proof}
The space $\Sym^n(\complex^{d})$ is irreducible under the action of the unitary group $U(d)$~\cite{FultonHarris91}. The operator $\int \phi^{\otimes n} d\phi$ is supported on $\Sym^n(\complex^d)$ and invariant under the action of $U(d)$. By Schur's lemma we therefore have 
$$\dim(n, d) \int \phi^{\otimes n} d\phi  = \id_{\Sym^n(\complex^d)}.$$
The claim follows since
$$\rho^n \leq \id_{\Sym^n(\complex^d)}$$
holds for any density operator. $\dim(n, d)$ equals ${n+d-1 \choose n}$ and is easily seen to be upper bounded by $(n+1)^{d-1}=\poly(n)$.
\end{proof}

Consider now the measure $d\phi$ on a tensor product space $\cH
\otimes \cK$, where $\cK\cong \cH \cong \complex^d $ and perform the
partial trace operation over system $\cK$. We denote the resulting
measure by $d\sigma$ on $\cS(\cH)$ and note that it may also be
defined as the measure induced by the Hilbert-Schmidt
metric~\cite{ZycSom01}. (In the Letter, we refer to $d \sigma$ as the
Hilbert-Schmidt measure.) For any POVM element $B^n$ on $\cH^{\otimes
  n}$, our data analysis procedure produces a probability distribution
$$\mu_{B^n}(\sigma)d\sigma:= \frac{1}{c_{B^n}} \tr[ B^n \sigma^{\otimes n}]d\sigma,$$
where $c_{B^n}= \int \tr B^n \sigma^{\otimes n} d\sigma= \tr [B^n
\otimes \id_{\cK}^{\otimes n}\cdot \id_{\Sym^n(\cH\otimes \cK)}]$. Let
$\rho^{n+k}$ be a permutation-invariant density operator on
$\cH^{\otimes n+k}$ and $\varrho^{n+k}$ a purification with support on
$\Sym^{n+k}(\cH \otimes \cK)$ (see e.g.~\cite{CKMR06Supp}). That is,
$\varrho^{n+k}$ is pure and $\tr_{\cK^{\otimes
    n}}\varrho^{n+k}=\rho^{n+k}$.  Denote by
$\varrho^{k}_{B^n}=\frac{1}{\tr [B^n \rho^n]}\tr_n [ ( B^n\otimes
\one_{\cK}^{\otimes n} ) \otimes \one_{\cH \otimes \cK}^{\otimes k}
\cdot \varrho^{n+k}] $ the post-measurement state and note that it
appears with probability $\tr [B^n \rho^n]$. Furthermore, we define
$\nu_{B^n}(x):= \frac{1}{c_{B^n}} \tr [B^n \otimes \one_{\cK}^{\otimes
  n} \cdot \proj{x}^{\otimes n}]$ for $x \in \cH \otimes \cK$. The
following theorem holds for any POVM element $T^{\fail}_{\mu_{B^n}}$
on $(\cH \otimes \cK)^{\otimes k}$.

\begin{theorem}[Reliable Predictions from $\mu_{B^n}$] 
If for all ${B^n}$
\begin{align} \label{eq_criterion} \int \nu_{B^n}(x) \tr [T^{\fail}_{\mu_{B^n}} \proj{x}^{\otimes k} ]dx
  \leq   \eps {n+k+ d^2-1\choose d^2-1}^{-1}, 
  \end{align}
then
\begin{align*}
 \big\langle \tr[T^{\fail}_{\mu_{B^n}} \varrho_{B^n}^k]\big\rangle_{B^n}
  \leq \eps \ ,
\end{align*}
where $ \big\langle\cdot\big\rangle_{B^n}$ denotes the expectation
taken over all possible measurement outcomes $B^n$ according to the
probability distribution $\tr[B^n \rho^{n}]$.
\end{theorem}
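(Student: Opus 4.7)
The plan is to reduce the expectation on the left-hand side to an integral against the de Finetti state on $(\cH\otimes\cK)^{\otimes(n+k)}$ and then invoke the hypothesis \eqref{eq_criterion} fibrewise in $B^n$. Since $\varrho^{n+k}$ is a pure state with support in $\Sym^{n+k}(\cH\otimes\cK)$, and $\cH\otimes\cK\cong\complex^{d^2}$, Lemma~\ref{lem:postselect} with $d$ replaced by $d^2$ gives
$$\varrho^{n+k}\;\leq\;\dim(n+k,d^2)\,\int \proj{x}^{\otimes(n+k)}\,dx\;=\;\binom{n+k+d^2-1}{d^2-1}\int \proj{x}^{\otimes(n+k)}\,dx.$$
This is the one nontrivial ingredient; it explains why the binomial coefficient $\binom{n+k+d^2-1}{d^2-1}$ appears on the right-hand side of the criterion.

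First I would unfold the definition of $\varrho^{k}_{B^n}$ to rewrite
$$\tr[B^n\rho^n]\,\tr[T^{\fail}_{\mu_{B^n}}\varrho^k_{B^n}]\;=\;\tr\!\bigl[(B^n\otimes\one_{\cK}^{\otimes n}\otimes T^{\fail}_{\mu_{B^n}})\,\varrho^{n+k}\bigr],$$
so that the expectation becomes
$$\bigl\langle\tr[T^{\fail}_{\mu_{B^n}}\varrho^k_{B^n}]\bigr\rangle_{B^n}\;=\;\sum_{B^n}\tr\!\bigl[(B^n\otimes\one_{\cK}^{\otimes n}\otimes T^{\fail}_{\mu_{B^n}})\,\varrho^{n+k}\bigr].$$
Since each operator $B^n\otimes\one_{\cK}^{\otimes n}\otimes T^{\fail}_{\mu_{B^n}}$ is positive semidefinite, I can substitute the de Finetti upper bound for $\varrho^{n+k}$ and move the integral over $x$ outside the trace.

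Next I exploit that the test operator acts against a tensor-product state: on the product $\proj{x}^{\otimes(n+k)}=\proj{x}^{\otimes n}\otimes\proj{x}^{\otimes k}$ the trace factors as
$$\tr[(B^n\otimes\one_{\cK}^{\otimes n})\proj{x}^{\otimes n}]\cdot\tr[T^{\fail}_{\mu_{B^n}}\proj{x}^{\otimes k}]\;=\;c_{B^n}\,\nu_{B^n}(x)\,\tr[T^{\fail}_{\mu_{B^n}}\proj{x}^{\otimes k}],$$
using the very definition of $\nu_{B^n}$. Applying the hypothesis \eqref{eq_criterion} to the $x$-integral for each fixed $B^n$ then yields the uniform bound
$$\bigl\langle\tr[T^{\fail}_{\mu_{B^n}}\varrho^k_{B^n}]\bigr\rangle_{B^n}\;\leq\;\eps\sum_{B^n}c_{B^n},$$
and the proof closes by observing $\sum_{B^n}c_{B^n}=\int\tr[(\sum_{B^n}B^n)\sigma^{\otimes n}]\,d\sigma=\int 1\,d\sigma=1$, which is just the POVM completeness $\sum_{B^n}B^n=\id_{\cH}^{\otimes n}$ together with the normalisation of $d\sigma$.

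The only real obstacle is conceptual, not computational: one must work with the purification $\varrho^{n+k}$ on the doubled space $\cH\otimes\cK$ (rather than with $\rho^{n+k}$ directly) so that Lemma~\ref{lem:postselect} applies to a state on a symmetric subspace, and so that the pure-tensor structure $\proj{x}^{\otimes(n+k)}$ factorises cleanly under the trace. Once this set-up is in place, everything reduces to positivity, Fubini, and the identity $\sum_{B^n}c_{B^n}=1$; in particular, no assumption on $\rho^{n+k}$ beyond permutation invariance (which was already used to obtain the symmetric purification) is needed anywhere in the argument.
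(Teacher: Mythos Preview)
Your proposal is correct and follows essentially the same approach as the paper's proof: unfold the expectation, apply Lemma~\ref{lem:postselect} on $\Sym^{n+k}(\cH\otimes\cK)$ (dimension $d^2$), factorise the trace over the tensor power, and invoke the hypothesis together with $\sum_{B^n}c_{B^n}=1$. The only difference is cosmetic: you spell out the identity $\sum_{B^n}c_{B^n}=1$ explicitly, whereas the paper absorbs it silently into the final inequality.
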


Note that the probability distribution $\mu_{B^n}(\sigma)d\sigma$ is
obtained from the probability distribution $\nu_{B^n}(x) dx$ by taking
the partial trace over the purifying system~$\cK$. The above theorem
therefore immediately implies the theorem in the Letter, which
corresponds to the specialisation where $T^{\fail}_{\mu_{B^n}}$ acts
only on $\cH^{\otimes k}$.

\begin{proof}
\begin{align*}
\big\langle \tr [T^{\fail}_{\mu_{B^n}} \cdot  & \varrho_{B^n}^k]\big\rangle_{B^n}
 = \sum_{B^n} \tr [(B^n \otimes \one_{\cK}^{\otimes n}) \otimes T^{\fail}_{\mu_{B^n}} \cdot \varrho^{n+k}]\\
& \leq \dim(n+k, d^2) \sum_{B^n} \int \tr [(B^n \otimes \id_{\cK}^{\otimes n} ) \otimes T^{\fail}_{\mu_{B^n}} \cdot \proj{x}^{\otimes n+k}] dx\\
&=  \dim(n+k, d^2) \sum_{B^n} c_{B^n} \int \nu_{B^n}(x)   \tr[T^{\fail}_{\mu_{B^n}} \proj{x}^{\otimes k}]dx\\
& \leq \eps,
\end{align*}
where we used Lemma~\ref{lem:postselect} in the first inequality and
the assumption, \eqref{eq_criterion}, in the second.
\end{proof}

For a subset $\Gamma_{\mu}$ of $\cS(\cH)$, we define 
$$\Gamma^\delta_\mu= \{\sigma: \exists \sigma' \in \Gamma_\mu \text{ with } P(\sigma, \sigma')\leq  \delta \},$$
where $P(\sigma, \sigma')$ is the purified distance defined as
$\sqrt{1-F(\sigma, \sigma')^2}$, where  $F(\sigma, \sigma')= \|
\sqrt{\sigma} \sqrt{\sigma'} \|_1 = \tr \sqrt{\sqrt{\sigma} \sigma' \sqrt{\sigma}}$ is the fidelity. For more details regarding the purified distance as well as its relation to the trace distance see~\cite{dualities}.

\begin{corollary}[Confidence Regions from
  $\mu_{B^n}$]
For all $B^n$, let $\Gamma_{\mu_{B^n}}$ be such that 
\begin{align}\label{eq:gamma}
  \int_{\Gamma_{\mu_{B^n}}} \mu_{B^n}(\sigma) d\sigma \geq 1- \frac{\eps}{2} {2n+d^2-1\choose d^2-1}^{-1} \ 
\end{align}
and let $\delta :=\sqrt{\frac{2}{n} (\ln \frac{2}{\eps}+ 2 \ln {2n+d^2-1\choose d^2-1} )}$. Then for all $\sigma$, 
$$\text{Prob}_{B^n}[\sigma \in \Gamma^{\delta}_{\mu_{B^n}}]\geq 1-\epsilon$$
where the probability is with respect to the measurement outcomes
$B^n$ according to the probability distribution $\tr[B^n
\sigma^{\otimes n}]$. 
\end{corollary}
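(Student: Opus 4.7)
The plan is to apply the generalized Theorem from Section~1 with $k=n$ prediction systems, choosing the hypothetical test to be built from Holevo's optimal covariant measurement. The intuition: Holevo's measurement on $\sigma^{\otimes n}$ returns an estimate that concentrates around $\sigma$ in purified distance, so ``the Holevo estimate has reduced state in $\Gamma^\delta_{\mu_{B^n}}$'' is a reliable proxy for ``$\sigma\in\Gamma^\delta_{\mu_{B^n}}$''.

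Working in the purified setting, I would set
\[
T^{\fail}_{\mu_{B^n}} \;=\; c_{n,d}\int \mathbb{1}[\tr_{\cK}\phi\notin\Gamma^\delta_{\mu_{B^n}}]\,\phi^{\otimes n}\,d\phi,
\]
which is a valid POVM element on $(\cH\otimes\cK)^{\otimes n}$ by Lemma~\ref{lem:postselect} applied in dimension $d^2$. To verify the Theorem's criterion $\int\nu_{B^n}(x)\tr[T^{\fail}_{\mu_{B^n}}\proj{x}^{\otimes n}]dx\leq \eps c_{2n,d}^{-1}$, I would use the overlap bound $|\langle x|\phi\rangle|^{2n}\leq F(\tr_{\cK}x,\tr_{\cK}\phi)^{2n}$ together with the fact that $\tr_{\cK}$ pushes $\nu_{B^n}(x)dx$ and the Haar measure $d\phi$ forward to $\mu_{B^n}(\sigma)d\sigma$ and the Hilbert--Schmidt measure $d\sigma'$, respectively. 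The criterion then reduces to the double integral $c_{n,d}\int_{\sigma'\notin\Gamma^\delta_{\mu_{B^n}}}d\sigma'\int\mu_{B^n}(\sigma)F(\sigma,\sigma')^{2n}d\sigma$, which I would split along $\sigma\in\Gamma_{\mu_{B^n}}$ versus $\sigma\notin\Gamma_{\mu_{B^n}}$: on the first region $\sigma'\notin\Gamma^\delta$ forces $P(\sigma,\sigma')>\delta$ by the very definition of $\Gamma^\delta$, giving $F^{2n}\leq(1-\delta^2)^n\leq e^{-n\delta^2}$; on the second region the assumption~\eqref{eq:gamma} supplies the required small $\mu$-tail. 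The value $\delta^2=\tfrac{2}{n}(\ln\tfrac{2}{\eps}+2\ln c_{2n,d})$ is tuned so that both pieces fit under the $\eps c_{2n,d}^{-1}$ budget after multiplying by the POVM prefactor $c_{n,d}\leq c_{2n,d}$.

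With the criterion verified, the Theorem delivers $\langle\tr[T^{\fail}_{\mu_{B^n}}\varrho^n_{B^n}]\rangle_{B^n}\leq\eps$. In the i.i.d.\ limit $\varrho^n_{B^n}$ is the $n$-fold tensor power of a purification of $\sigma$, so the left-hand side is the $B^n$-averaged probability that Holevo's measurement on $\sigma^{\otimes n}$ returns an estimate whose reduced state is outside $\Gamma^\delta_{\mu_{B^n}}$. A concentration argument for Holevo's estimate in purified distance then converts this bound into the claimed lower bound on $\mathrm{Prob}_{B^n}[\sigma\in\Gamma^\delta_{\mu_{B^n}}]$. The main obstacle I expect is the constant-chasing in the criterion verification: the POVM normalisation forces a prefactor $c_{n,d}$, the Theorem's budget is only $\eps c_{2n,d}^{-1}$, and~\eqref{eq:gamma} already spends half of that on the $\mu$-tail, so the exponential Holevo-concentration term $e^{-n\delta^2}$ must be squeezed below $\eps/(2c_{n,d}c_{2n,d})$ --- which is precisely what the logarithmic scaling of $\delta^2$ arranges.
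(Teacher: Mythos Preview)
Your overall strategy---apply the Theorem with $k=n$ and a Holevo-type test built from $\int_{\overline{\Omega}}\phi^{\otimes n}d\phi$---is exactly the paper's, and your criterion verification (splitting along $\sigma\in\Gamma_{\mu_{B^n}}$ versus $\sigma\notin\Gamma_{\mu_{B^n}}$) is the right idea. One minor point: for the $\sigma\notin\Gamma_{\mu_{B^n}}$ piece you should \emph{not} push forward to the $d\sigma'$ integral and crudely bound $F^{2n}\leq 1$, because that leaves a stray factor $c_{n,d}$; instead keep the form $c_{n,d}\int_{\overline{\Omega^\delta}}|\langle x|\phi\rangle|^{2n}d\phi\leq 1$ (it is a sub-POVM) and then integrate $\nu_{B^n}$ over the complement of $\Omega_{\mu_{B^n}}$ to pick up the $\mu$-tail directly.

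The genuine gap is in your last step. With $T^{\fail}_{\mu_{B^n}}$ defined by integrating over $\overline{\Omega^{\delta}_{\mu_{B^n}}}$, the ``concentration argument'' you invoke cannot close: if $\sigma$ sits just outside $\Gamma^\delta_{\mu_{B^n}}$, the Holevo mass $c_{n,d}\int|\langle\phi|\psi_\sigma\rangle|^{2n}d\phi$ is split roughly evenly across the boundary of $\Omega^\delta_{\mu_{B^n}}$, so $\tr[T^{\fail}_{\mu_{B^n}}(\psi_\sigma)^{\otimes n}]$ is bounded away from $1$ and you cannot dominate the indicator $\mathbb{1}[\sigma\notin\Gamma^\delta_{\mu_{B^n}}]$ by it. The paper resolves this with a buffer: it defines the test over the complement of $\Omega^{\delta/2}_{\mu_{B^n}}$ rather than $\Omega^{\delta}_{\mu_{B^n}}$. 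Then for $\sigma\notin\Gamma^\delta_{\mu_{B^n}}$ every $\phi\in\Omega^{\delta/2}_{\mu_{B^n}}$ satisfies $P(\sigma,\tr_\cK\phi)>\delta/2$ by the triangle inequality, which yields $\tr[T^{\fail}(\psi_\sigma)^{\otimes n}]\geq 1-\dim(n,d^2)e^{-n\delta^2/4}$ (paper's $\eps''$); and simultaneously, in the criterion verification, $\sigma\in\Gamma_{\mu_{B^n}}$ and $\tr_\cK\phi\notin\Gamma^{\delta/2}_{\mu_{B^n}}$ still force $P(\sigma,\tr_\cK\phi)>\delta/2$, so the exponential bound survives there too. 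The stated value of $\delta$ is calibrated so that the two $\eps''$ contributions and the $\mu$-tail together stay under $\eps$. In short: replace $\Gamma^\delta$ by $\Gamma^{\delta/2}$ in your definition of $T^{\fail}_{\mu_{B^n}}$ and make the two-sided use of the buffer explicit; then your sketch becomes the paper's proof.
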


See Figure~\ref{fig_convergence} for an illustration
of~$\Gamma^{\delta}_{\mu_{B^n}}$ and its relation to $\mu_{B^n}$.

\begin{figure}
\vspace{-0.4cm}
\setlength{\unitlength}{0.11mm} 
\centering
\includegraphics[width=0.7\columnwidth,clip]{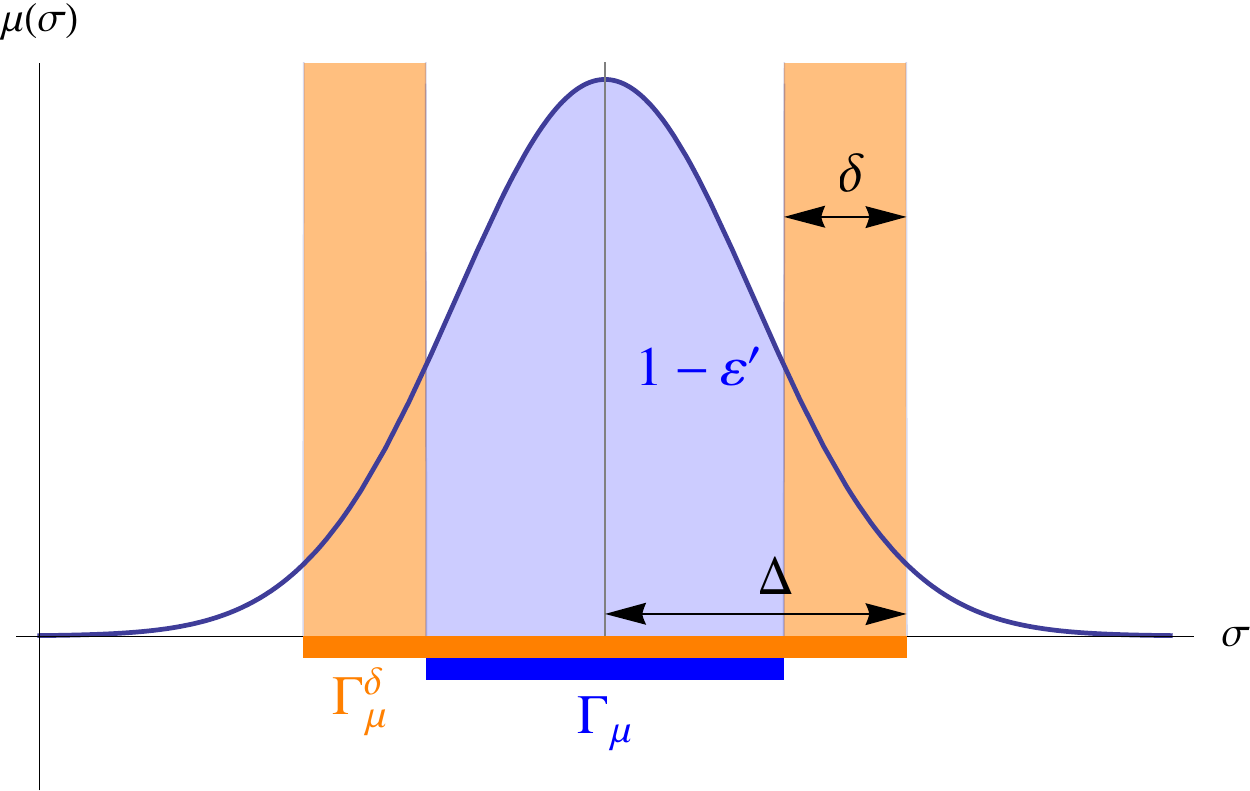}
\caption{{\bf Construction of the confidence region.}  The graph
  schematically illustrates a possible distribution $\mu \equiv
  \mu_{B^n}$ (blue line), a high probability set $\Gamma_{\mu}$ of
  $\mu$ (blue bar), and the set $\Gamma_{\mu}^{\delta}$ which includes
  a $\delta$-region around $\Gamma_{\mu}$ (orange bar). In the
  scenario depicted by Fig.~3 in the Letter, the state $\sigma$ chosen
  by the preparation procedure is with high probability (at least
  $1-\eps$) in $\Gamma_{\mu}^{\delta}$, which can therefore be seen as
  a confidence region.  }
\label{fig_convergence}
\end{figure}

\begin{proof}
The failure probability of the test is given by
\begin{align*}
P^{\fail}(P)& := \int P(\sigma) \sum_{B^n} \tr [B^n \cdot \sigma^{\otimes n}] \Theta_{\overline{\Gamma_{\mu_{B^n}}^\delta}}(\sigma) d\sigma
\end{align*}
where $\Theta_{\overline{\Gamma_{\mu_{B^n}}^\delta}}(\sigma)$ equals one for $\sigma$ in the set $\overline{\Gamma_{\mu_{B^n}}^\delta}$ and zero otherwise ($\overline{A}$ denotes the complement of a subset $A$ of $\cS(\cH)$). This can be rewritten more conveniently as
\begin{align*}
P^{\fail}(P)= \sum_{B^n} c_{B^n} \int_{\overline{\Gamma_{\mu_{B^n}}^\delta}} P(\sigma)  \mu_{B^n}(\sigma) d\sigma.
\end{align*}
Instead of the a priori probability density $P(\sigma)$ we consider a probability density $Q(\psi)$ on the pure states $\cP(\cH \otimes \cK)$, where $\cK \cong \cH$, that gives rise to $P(\sigma)$. That is, $Q(\psi)$ satisfies
$$\int_{A} P(\sigma) d\sigma = \int_{\psi: \tr_{\cK} \psi \in A} Q(\psi) d\psi$$
for all measurable subsets $A$ of $\cS(\cH)$. Note that such a
probability density $Q(\psi)$ always exists as we can purify the state
of the particle with a purifying space $\cK \cong
\cH$. $\mu_{B^n}(\sigma)$ is replaced by
$$\nu_{B^n}(\psi):=\frac{1}{c_{B^n}} \tr [( B^n \otimes \one_{\cK}^{\otimes n}) \cdot  \psi^{\otimes n}].$$ 
Furthermore we consider the following extension of the set
$\Gamma_{\mu}$:
$$\Omega_{\mu} = \{\psi \in \cP(\cH \otimes \cK): \tr_{\cK} \psi  \in \Gamma_\mu \}$$
and of $\Gamma^{\delta}_{\mu}$:
$$\Omega^{\delta}_{\mu} = \{\psi \in \cP(\cH \otimes \cK): \tr_{\cK} \psi  \in \Gamma^\delta_\mu \}.$$
The failure probability of the test can then be expressed as 
\begin{align}\label{eq:fail}
P^{\fail}(P)=\sum_{B^n} c_{B^n} \int_{\overline{\Omega_{\mu_{B^n}}^\delta}} Q(\psi)  \nu_{B^n}(\psi) d\psi.
\end{align}
Let $k' \in \naturals$ be the number of systems that we use to approximate the test by a test procedure that is given by a POVM $\{T^{\fail}, \one- T^{\fail}\}$. Defining 
$$T^{\fail}_{\Omega_\mu^{\delta/2}}:= \dim(k', d^2)\int_{\overline {\Omega_\mu^{\delta/2}}} \phi^{\otimes k'} d\phi. $$
we see that for all 
$\psi \in \overline{\Omega_{\mu}^\delta}$
\begin{align*}
\tr  T^{\fail}_{\Omega_\mu^{\delta/2}} \psi^{\otimes k'}& =1- \dim(k', d^2)\int_{\Omega_\mu^{\delta/2}} \tr[\phi^{\otimes k'} \psi^{\otimes k'} ]d\phi \\
 & \geq 1- \dim(k', d^2)\max_{\phi \in \Omega_\mu^{\delta/2}} F(\tr_{\cK} \phi, \tr_{\cK} \psi)^{k'} \\
& \geq 1- \underbrace{ \dim(k', d^2)e^{-\frac{\delta^2}{2}  k'}}_{=:\eps''}.
\end{align*}
Inserting this estimate into~\eqref{eq:fail} leads to
\begin{align*}
P^{\fail}(P)& \leq \sum_{B^n} c_{B^n} \int_{\overline{\Omega_{\mu_{B^n}}^\delta}} Q(\psi)  \nu_{B^n}(\psi) (\tr [ T^{\fail}_{\Omega^{\delta/2}_{\mu_{B^n}}}\cdot \psi^{\otimes k'}]+\eps'') d\psi.
\end{align*}
We now remove the restriction in the integral, thereby further weakening the estimate and obtain
\begin{align*}
P^{\fail}(P)& \leq \eps'' + \sum_{B^n} c_{B^n}\int Q(\psi)  \nu_{B^n}(\psi) \tr [ T^{\fail}_{\Omega^{\delta/2}_{\mu_{B^n}}}\cdot \psi^{\otimes k'}]d\psi,\\
& =\eps'' + \sum_{B^n} \int Q(\psi) \tr [(B^n \otimes \id_{\cK}^{\otimes n}) \cdot \psi^{\otimes n}]  \tr [ T^{\fail}_{\Omega^{\delta/2}_{\mu_{B^n}}}\cdot \psi^{\otimes k'}] d\psi,\\
& =\eps'' + \sum_{B^n}\tr [(B^n \otimes \id_{\cK}^{\otimes n})\otimes T^{\fail}_{\Omega^{\delta/2}_{\mu_{B^n}}}\cdot ( \int Q(\psi)  \psi^{\otimes n+k'}   d\psi) ].
\end{align*}
Applying Lemma~\ref{lem:postselect} to the state $\int Q(\psi)
\psi^{\otimes n+k'}d \psi$ we can find an upper bound on this quantity
which is independent of the initial distribution $P(\sigma)$ (or
$Q(\psi)$):
\begin{align*}
P^{\fail}(P)& \leq \eps'' + \dim(n+k', d^2) \sum_{B^n}\tr [ (B^n \otimes \id_{\cK}^{\otimes n}) \otimes T^{\fail}_{\Omega^{\delta/2}_{\mu_{B^n}}} \cdot( \int \psi^{\otimes n+k'}   d\psi)],\\
&=\eps''+  \dim(n+k', d^2) \sum_{B^n}c_{B^n}\int_{\Omega_{\mu_{B^n}}} \nu_{B^n}(\psi) \tr  [T^{\fail}_{\Omega^{\delta/2}_{\mu_{B^n}}}\cdot  \psi^{\otimes k'}]   d\psi \\
& \quad +  \dim(n+k', d^2)\sum_{B^n}c_{B^n}\int_{\overline{\Omega_{\mu_{B^n}}}} \nu_{B^n}(\psi) \tr [ T^{\fail}_{\Omega^{\delta/2}_{\mu_{B^n}}}\cdot  \psi^{\otimes k'} ]  d\psi. 
\end{align*}
Since for $\psi \in \Omega_{\mu_{B^n}}$
\begin{align*}
\tr [ T^{\fail}_{\Omega_\mu^{\delta/2}}\cdot \psi^{\otimes k'}]& \leq \dim(k', d^2)\max_{\phi \in\overline{\Omega^{\delta/2}}} F( \tr_{\cK} \phi, \tr_{\cK} \psi)^{k'}\leq  \dim(k', d^2)e^{-\frac{\delta^2}{2}  k'}=\eps''
\end{align*}
and since
$$ \tr [ T^{\fail}_{\Omega^{\delta/2}_{\mu_{B^n}}} \cdot \psi^{\otimes k'} ]\leq 1$$
we find
\begin{align*}
P^{\fail}(P)&\leq \eps''+   \dim(n+k', d^2)\eps'' + \dim(n+k', d^2)\sum_{B^n}c_{B^n}\int_{\overline{\Omega_{\nu_{B^n}}}} \nu_{B^n}(\psi)   d\psi\\
& = \eps''+   \dim(n+k', d^2)\eps'' + \dim(n+k', d^2)\sum_{B^n}c_{B^n}\int_{\overline{\Gamma_{\mu_{B^n}}}} \mu_{B^n}(\sigma)   d\sigma.
\end{align*}
We now set $k'=n$ and use the assumption 
$$  \int_{\Gamma_{\mu_{B^n}} }\mu_{B^n}(\sigma) d\sigma \geq 1- \frac{\eps}{2} \dim(2n, d^2)^{-1}. $$
for all $\mu_{B^n}$. This results in 
\begin{align*}
P^{\fail}(P)&\leq \eps''+   \dim(2n, d^2)\eps'' + \frac{\eps}{2}\leq \dim(2n, d^2)^2 e^{-\frac{\delta^2}{2} n} +\frac{\eps}{2}.
\end{align*}
Choosing $\delta=\sqrt{ \frac{2}{n} (\ln{\frac{2}{\eps}}+ 2 \ln \dim(2n, d^2))}$ ensures that 
\begin{align*}
P^{\fail}(P)&\leq \eps.
\end{align*}
Inserting for $P$ a Dirac delta distribution concludes the proof. 
\end{proof}

\section{2 $\quad$ Independent Measurements}

We now restrict our discussion to the case where $B^n$ is of product
form, i.e.
$$B^n = \prod_{i=1}^r E_i^{\otimes f^{(i)}}$$
where $E_i$ are the elements of a POVM, i.e., $E_i \geq 0$ and
$\sum_iE_i=\id$. $f=(f^{(1)}, \ldots, f^{(r)})$ is the vector
containing the frequencies with which the outcomes occur,
i.e. $f^{(i)} \in \naturals $ and $\sum_{i=1}^n f^{(i)}=n$. We find
$$\mu_{B^n}(\sigma) =\frac{1}{c_{B^n}} \prod_{i=1}^r (\tr E_i \sigma)^{f^{(i)}},$$
where $c_{B^n}=\int \tr B^n \sigma^{\otimes n} d \sigma$. We now want
to investigate the maximum of this function which we assume for
simplicity to be unique, i.e. we assume that the POVM is
tomographically complete. Since the log function is monotonically
increasing, the $\sigma$ which maximises this function can also be
expressed as
\begin{align} \label{eq:MLE} 
\sigma_{\max}:= \mathrm{argmax}_{\sigma} (\sum_i \bar{f}^{(i)} \log  \tr E_i \sigma),
\end{align}
where $\bar{f}^{(i)}=\frac{f^{(i)}}{n}$ are the relative frequencies. This shows that the value at which $\mu_{B^n}$ is maximised coincides with the density matrix that the Maximum Likelihood Estimation method infers since $\sum_i f^{(i)} \log  \tr E_i \sigma$ is the so-called ''log likelihood function''~\cite[Eqs. (2) and (3)]{jezek-hradil}. 

The MLE method is therefore consistent with our method and our work
can be seen as a theoretical justification of it. We emphasize that in
contrast to MLE, our work shows how to compute reliable error
bars. This implies in particular that most of the likely states
(i.e.~the states within the error bars) are not on the boundary of the
state space, even though the maximum might lie on the boundary. Our
method can therefore be seen as a resolution of the ``problem'' that
the states predicted by MLE are unphysical because they lie on the
boundary.

We now want to study the decay of $\mu_{B^n}$ around its maxima. In general this is not easy, as the maxima might lie on the boundary of the set. Useful statements can be made however, when the maxima are in the interior of the set of states. We consider the decay exponent (i.e.~$\frac{-1}{n}$ times the natural logarithm) of the function $\tr B^n \sigma^{\otimes n}$ which is
\begin{align}\label{eq:decay}
- \sum_i \bar{f}^{(i)}\ln \tr E_i \sigma.
\end{align}
We then search for the extreme points of the function 
$$- \sum_i \bar{f}^{(i)}\ln \tr E_i \sigma + c (\tr \sigma -1)$$
where we introduced the Lagrange multiplier $c$ in order to take care of the normalisation of the density matrix. The extreme points are characterised by the equations
$$- \sum_i \bar{f}^{(i)}\frac{E_i}{\tr E_i \sigma}=  c\id $$
$$\tr \sigma =1.$$
Let us for simplicity restrict to the case, where the $E_i$ are linearly independent, then, since the $E_i$ form a POVM, we have 
\begin{align} \label{eq:extreme}\bar{f}^{(i)}= \tr E_i \sigma \end{align}
as a condition for an extreme point. 

In order to carry over these findings to the estimate density, we need to understand the behaviour of the normalisation constant $c_{B^n}=\int \tr B^n \sigma^{\otimes n}d\sigma$.
Since 
$$ \int \tr B^n \sigma^{\otimes n}d\sigma \leq \max_\sigma \tr B^n \sigma^{\otimes n} $$
and by Lemma~\ref{lem:postselect}
$$\int \tr B^n \sigma^{\otimes n}d\sigma \geq \frac{1}{\poly(n)} \max_\sigma \tr B^n \sigma^{\otimes n}$$ we find that $\frac{1}{n}\ln c_{B^n}$ approaches the maximum of \eqref{eq:decay}, $- \sum_i \bar{f}^{(i)}\ln  \bar{f}^{(i)}$, for large $n$. The decay exponent,  $-\frac{1}{n}\ln \mu_{B^n}$, is therefore asymptotically equal to the relative entropy (in units of the natural logarithm)
$$D(\bar{f}\| E(\sigma))= \sum_i \bar{f}^{(i)}(\ln  \bar{f}^{(i)}-\ln  \tr E_i\sigma)$$
where $E(\sigma)=(\tr E_1 \sigma, \ldots, \tr E_r \sigma)$ (this shows in particular that the extreme points are maxima since the relative entropy is nonnegative). Pinsker's inequality 
$$D(\bar{f}\| E(\sigma))\geq \frac{1}{2}\|\bar{f}- E(\sigma)\|_1^2.$$
implies that the error bars around the maxima are therefore given by 
$$\epsilon =  O\left(\frac{1}{\sqrt{n}}\right) $$
in the distance on the set of density matrices induced by the norm~\cite{MattewsWehnerWinterSupp} 
$$\|X\|:= \|E(X)\|_1=\sum_i |\tr E_i X|.$$

\section{3 $\quad$ Remarks} 

We remark that it is possible to adapt our method to the case where
additional information, e.g., about the rank of the state or its
symmetry, is available, and thereby to establish confidence regions
also in these situations. For example, if it is known that the state
$\rho^{n+k}$ is invariant under local actions (on the individual
systems $\cH$) of unitaries from a given set $\cU$, then the integral
in~\eqref{eq:gamma} can be restricted to a subset $\cS$ of states
$\sigma$ on $\cH$ such that $U \sigma U^{\dagger} = \sigma$ for any $U
\in \cU$. Accordingly, the confidence region is given by ${\cS \cap
  \Gamma_{\mu_{B^n}}^{\delta}}$, provided the criterion of the
corollary is satisfied.

We also note that the output of the data analysis, $\mu_{B^n}$, can be
specified using a representation in terms of generalised spherical
harmonics. Data obtained from measurements on $n$ systems specify
exactly the moments of degree less than $n$ of this representation. In
particular, these moments contain all information that is needed for a
later update of $\mu_{B^n}$ based on additional measurement
data (see Appendices). While in the case of i.i.d.\ measurements (with a
finite number of outcomes) this representation of the measurement data
is costly ($O(\text{poly}(n))$ bits) when compared to simply storing
the frequencies ($O(\log n)$ bits), it may be useful in the case of
non-i.i.d.\ measurements or measurements with an unbounded (or even
continuous) set of outcomes, since it does not depend on the number of
outcomes.

\section{A $\quad$ Quasi-Probability Distributions} 

In this section we derive quasi-probability distribution representations for operators on $\Sym^n(\complex^d)$ similar to the P- and Q-representations that are well-known from quantum optics. 

\begin{theorem}[Q-representation]\label{thm:Qrep}
Let $B$ be an operator on $\Sym^{n}(\bbC^{d})$. Then $B$ is uniquely determined by its Q-representation, the function
$$Q_{B}(x)=  \bra{x}^{\otimes n} B \ket{x}^{\otimes n},$$
where $\ket{x}=\sum_i x_i \ket{i}$, $x=(x_1, \ldots, x_d)^T \in \complex^d$ and $\sum_i |x_i|^2=1$. When convenient we will view $Q$ as a function on $\CP{d-1}$.
\end{theorem}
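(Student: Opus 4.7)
The plan is to show that the linear map $B \mapsto Q_B$ from operators on $\Sym^n(\bbC^d)$ into functions on the unit sphere of $\bbC^d$ is injective; since the statement is purely about uniqueness, this suffices. First I would fix an orthonormal basis of $\Sym^n(\bbC^d)$ labelled by occupation numbers: for $\vec{k}=(k_1,\dots,k_d)\in\bbN^d$ with $\sum_i k_i = n$, let $\ket{e_{\vec{k}}}$ denote the normalized totally symmetric state with $k_i$ copies of $\ket{i}$. A direct multinomial expansion gives
$$\ket{x}^{\otimes n} \;=\; \sum_{\vec{k}:\,|\vec{k}|=n} c_{\vec{k}}\, x^{\vec{k}}\, \ket{e_{\vec{k}}},\qquad c_{\vec{k}}=\sqrt{\binom{n}{k_1,\dots,k_d}}>0,$$
with $x^{\vec{k}}=x_1^{k_1}\cdots x_d^{k_d}$. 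Substituting into the definition of $Q_B$ yields
$$Q_B(x) \;=\; \sum_{\vec{k},\vec{l}:\,|\vec{k}|=|\vec{l}|=n} c_{\vec{k}} c_{\vec{l}}\, \bar{x}^{\vec{k}}\, x^{\vec{l}}\, B_{\vec{k},\vec{l}},\qquad B_{\vec{k},\vec{l}}=\bra{e_{\vec{k}}} B \ket{e_{\vec{l}}},$$
so that $Q_B$ is a polynomial in the variables $x_1,\dots,x_d,\bar{x}_1,\dots,\bar{x}_d$ which is homogeneous of degree $n$ separately in $x$ and in $\bar{x}$.

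Next I would argue injectivity. Suppose $Q_B(x)=0$ for every unit $x\in\bbC^d$. Using the bi-homogeneity, for arbitrary nonzero $y$ we have $Q_B(y)=\|y\|^{2n}\,Q_B(y/\|y\|)=0$, so the polynomial vanishes identically on $\bbC^d$. Regarding the $x_i$ and $\bar{x}_i$ as algebraically independent indeterminates (equivalently, separating real and imaginary parts), the monomials $\{\bar{x}^{\vec{k}} x^{\vec{l}}:|\vec{k}|=|\vec{l}|=n\}$ are linearly independent. Hence every coefficient $c_{\vec{k}} c_{\vec{l}} B_{\vec{k},\vec{l}}$ vanishes and, since $c_{\vec{k}}>0$, all matrix elements $B_{\vec{k},\vec{l}}$ vanish, giving $B=0$.

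I expect no genuine obstacle: the only subtlety is cleanly separating the "polynomial in $x,\bar{x}$'' structure (needed to invoke linear independence of monomials) from the restriction to the unit sphere (removed by homogeneity). A slightly more operational variant of the last step, which I would mention in passing, is that the coefficients $B_{\vec{k},\vec{l}}$ can be recovered explicitly from $Q_B$ by applying the differential operator $(c_{\vec{k}} c_{\vec{l}})^{-1}\,\partial_{\bar{x}}^{\vec{k}}\partial_x^{\vec{l}}$ at the origin after extending $Q_B$ homogeneously to $\bbC^d$. As a sanity check on dimensions, both the space of operators on $\Sym^n(\bbC^d)$ and the space of polynomials of bi-degree $(n,n)$ in $d$ complex variables have dimension $\binom{n+d-1}{d-1}^2$, so injectivity automatically implies that $B\mapsto Q_B$ is an isomorphism onto such polynomials.
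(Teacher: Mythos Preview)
Your argument is correct. Both the expansion of $\ket{x}^{\otimes n}$ in the occupation-number basis and the resulting identification of $Q_B$ with a bi-homogeneous polynomial of degree $(n,n)$ are clean, and the linear-independence step is sound (your Wirtinger-derivative remark is precisely the rigorous version of ``treat $x_i$ and $\bar x_i$ as independent'').

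The paper's proof reaches the same conclusion by a different route, adapted from Perelomov's coherent-state argument. Rather than expanding in a basis and reading off monomial coefficients, it considers the two-variable kernel $\bra{x}^{\otimes n}B\ket{x'}^{\otimes n}$, observes that it is a polynomial in $(\bar x,x')$, performs the linear change of variables $\alpha=(\bar x+x')/2$, $\beta=i(\bar x-x')/2$, and then uses that a polynomial in $(\alpha,\beta)\in\bbC^{2d}$ is determined by its values on real $(\alpha,\beta)$, which correspond exactly to the diagonal $x=x'$. So the paper recovers the full off-diagonal kernel from the diagonal by analytic continuation, whereas you bypass the kernel entirely and work directly with monomial coefficients. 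The underlying analytic fact is the same (a polynomial vanishing on a maximal totally real subspace vanishes identically), but your packaging is more concrete: it exhibits the explicit isomorphism between operators and bi-homogeneous polynomials and immediately yields the dimension count. The Perelomov-style argument in the paper, on the other hand, generalises verbatim to other families of coherent states where an explicit monomial basis is less convenient.
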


\begin{proof}
We adopt an argument very similar to the one used in
\cite[p.~30]{Perelomov86} in the context of Glauber coherent states. Note that the values
$$\bra{x}^{\otimes n} B \ket{x}^{\otimes n}$$ for $x \in \complex^d$ are determined by $\bra{x}^{\otimes n} B \ket{x}^{\otimes n}$ for $x \in \complex^d$ with $\sum_i |x_i|^2=1$. It therefore suffices to show that the values $ \bra{x}^{\otimes n} B \ket{x}^{\otimes n}$, $x \in \mathbb{C}^{d}$ determine $ \bra{x}^{\otimes n} B \ket{x'}^{\otimes n}$, $x, x' \in \mathbb{C}^{d}$ uniquely. Let $\{ \ket{m} \}$ be the
Gelfand-Zetlin basis for $\Sym^{n} (\bbC^{d})$ (see Appendix~B) and
write
\[
B = \sum_{m,m'} B_{m,m'} \ket{m} \bra{m'}.
\]
The function $ \spr{m'}{x'}^{\otimes n}$ is a polynomial in $x' \in
\bbC^{d}$ and $\bra{x}^{\otimes n}\ket{m}$ is a polynomial in
$\bar{x}$, the complex conjugate of $x$. Defining
\[
\alpha = \frac{\bar{x}+x'}{2},\quad \beta = i \frac{\bar{x}-x'}{2}
\]
we see that
\[
x' = \alpha + i \beta,\quad \bar{x} = \alpha - i \beta
\]
and hence $\bra{x}^{\otimes n} B \ket{x'}^{\otimes n}=\sum_{m, m'}
B_{m, m'}\bra{x}^{\otimes n}\ket{m}\spr{m'}{x'}^{\otimes n}$ is a
polynomial in $\alpha$, $\beta$. Note that every polynomial (in fact
every entire function) is determined by its values for real
parameters
, i.e. by
$\alpha, \beta \in \bbR^d$ in our case. This can be seen by writing the polynomial in the form of a Taylor series (around a real point, e.g. 0). The coefficients in this series are partial derivatives (evaluated at 0), which can be taken in real directions without losing generality and are therefore only dependent on real values of the polynomial. Since for real $\alpha, \beta$
\begin{align*}
\mathrm{Im} (\alpha) &= \frac{1}{2} (-\mathrm{Im} (x) + \mathrm{Im}( x') ) = 0
\\
\mathrm{Im} (\beta) &= \frac{1}{2} ( \mathrm{Re}( x) - \mathrm{Re} (x') ) = 0
\end{align*}
or, in other words $x=x'$, it follows that $\bra{x}^{\otimes n} B
\ket{x'}^{\otimes n}$, as a function of $x, x'$, is wholly
determined by the values $\bra{x}^{\otimes n} B \ket{x}^{\otimes n},
x \in \bbC^d$.
\end{proof}

\begin{theorem}[P-representation]\label{thm:Prep}
Let $B$ be an operator on $\Sym^{n}(\bbC^{d})$.  Then $B$ may be
represented in the form
\[
B = \int_{\complex P^{d-1}} P_{B}(x) \proj{x}^{\otimes n} dx
\]
where
  \[
    P_{B}(x) = \sum_{\ell \in \naturals} \sum_{m} p_{B}(\ell, m) y_{\ell,m}(x)
  \]
  with the second sum extending over the Gelfand-Zetlin basis for the
  irreducible representation of $U(d)$ with highest weight
  $\underbrace{(\ell, 0, \ldots, 0, -\ell)}_{d}$ (see Appendix~B). The
  constants $p_{B}(\ell, m)$ are uniquely determined by $B$ for $\ell
  \leq n$ and are arbitrary otherwise.
\end{theorem}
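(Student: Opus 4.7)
I view the assignment
\[ \Phi : f \mapsto \int_{\CP{d-1}} f(x) \proj{x}^{\otimes n} dx \]
as a $U(d)$-equivariant linear map from $L^2(\CP{d-1})$ to $\cA := \cL(\Sym^n(\complex^d))$, where $U(d)$ acts on functions by $(U \cdot f)(x) = f(U^{-1}x)$ and on operators by conjugation. The plan is to read off the statement by decomposing both sides into $U(d)$-irreducibles and invoking Schur's lemma, with the only substantial ingredient being the nonvanishing of the resulting scalars, which I will extract from Theorem~\ref{thm:Qrep}.

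The decomposition of $L^2(\CP{d-1})$ under $U(d)$ is classical for the compact Hermitian symmetric space $\CP{d-1} = U(d)/(U(1)\times U(d-1))$: it is the multiplicity-free direct sum $\bigoplus_{\ell \geq 0} V_{(\ell, 0, \ldots, 0, -\ell)}$, and the vectors $y_{\ell, m}$ form a Gelfand-Zetlin basis for the $\ell$-th summand. On the other hand $\cA \cong \Sym^n(\complex^d) \otimes \Sym^n(\complex^d)^*$, and a character (or Littlewood-Richardson) computation yields the matching multiplicity-free decomposition $\cA = \bigoplus_{\ell=0}^n V_{(\ell, 0, \ldots, 0, -\ell)}$. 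Because $\Phi$ is equivariant and every irreducible appears at most once on either side, Schur's lemma forces $\Phi$ to act as a scalar $c_\ell \in \complex$ on the $\ell$-th source summand for $\ell \leq n$, and to vanish on summands with $\ell > n$ since no such component exists in $\cA$. The ``arbitrary for $\ell > n$'' assertion is then immediate, and both existence and uniqueness (for $\ell \leq n$) reduce to showing $c_\ell \ne 0$ for every $\ell \leq n$.

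The main obstacle is this nonvanishing. To handle it I combine $\Phi$ with the Q-map of Theorem~\ref{thm:Qrep} and consider
\[ T : \cA \to \cA , \qquad T(B) := \Phi(Q_B) = \int_{\CP{d-1}} Q_B(x) \proj{x}^{\otimes n} dx . \]
Since $Q$ and $\Phi$ are both $U(d)$-equivariant, so is $T$; by Schur it acts as a scalar on each irreducible summand of $\cA$. A direct calculation with the Hilbert-Schmidt pairing gives
\[ \tr\bigl(T(B) B^*\bigr) = \int_{\CP{d-1}} Q_B(x) \overline{Q_B(x)} \, dx = \int_{\CP{d-1}} |Q_B(x)|^2 \, dx , \]
which is strictly positive whenever $B \ne 0$ by the injectivity of $B \mapsto Q_B$ guaranteed by Theorem~\ref{thm:Qrep}. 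Thus $T$ is Hermitian and positive-definite, so its scalar on every summand is strictly positive. Since this scalar factors as $c_\ell$ times the (necessarily nonzero) Schur scalar by which $Q$ maps the $\ell$-th summand of $\cA$ into the $\ell$-th summand of $L^2(\CP{d-1})$, we conclude $c_\ell \ne 0$ for all $\ell \leq n$. Expanding any $B \in \cA$ in its $U(d)$-isotypic decomposition and dividing by $c_\ell$ on each summand then yields the $P$-representation, with $p_B(\ell, m)$ unique for $\ell \leq n$ and unconstrained for $\ell > n$.
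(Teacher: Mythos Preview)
Your argument is correct and takes a genuinely different route from the paper. The paper establishes existence by a duality argument (Lemma~\ref{KlaSkalem}: the Hilbert--Schmidt orthogonal complement of operators admitting a $P$-representation equals the space of operators with vanishing $Q$-representation, which by Theorem~\ref{thm:Qrep} is $\{0\}$), and then proves the uniqueness/arbitrariness structure via an explicit orthogonality computation (Lemma~\ref{lem:onbasis}), which in turn rests on the Clebsch--Gordan formula of Corollary~\ref{cor:productexpansion2} to show that $\int y_{\ell,m}(x)\proj{x}^{\otimes n}dx$ is nonzero precisely when $\ell\leq n$. You instead work globally with Schur's lemma on the multiplicity-free decompositions of $L^2(\CP{d-1})$ and $\cL(\Sym^n(\complex^d))$, and replace the explicit Clebsch--Gordan input by the slick positivity argument $\tr(T(B)B^*)=\|Q_B\|_{L^2}^2>0$ for $T=\Phi\circ Q$, which forces every Schur scalar $c_\ell q_\ell$ (and hence each $c_\ell$) to be nonzero. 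Your approach is more conceptual and avoids the explicit coefficient computation; the paper's approach, however, actually identifies the scalars in terms of Clebsch--Gordan data, which is reused later in the appendix (e.g.\ in the analysis of Holevo's covariant measurement). Note incidentally that you do not need to assert $q_\ell\neq 0$ separately: $c_\ell q_\ell>0$ already gives $c_\ell\neq 0$.
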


Two lemmas will be needed in order to prove the theorem.

\begin{lemma}\emph{(\cite[p.~35]{KlaSka85})} 
Let $D$ be the space of operators on $\Sym^n(\bbC^d)$ that can be represented in the form
$$B = \int_{\CP{d-1}} P_B(x) \proj{x}^{\otimes n} dx$$
for some $P_B \in L^2(\CP{d-1})$. Furthermore let $E$ be the space of operators on $\Sym^n(\bbC^d)$ with vanishing Q-representation. Then $D^{\bot} = E$, where $D^{\bot} = \{ A : \tr A B^{\dagger} = 0\quad \forall B \in D \}$.
\label{KlaSkalem}
\end{lemma}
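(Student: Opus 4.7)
The plan is to read the statement $D^{\bot} = E$ as a duality identity between two natural pairings: the Hilbert--Schmidt pairing $\langle A, B\rangle = \tr(A B^{\dagger})$ on operators over $\Sym^n(\bbC^d)$, and the $L^2$-pairing $\langle P, Q\rangle = \int \overline{P(x)}\, Q(x)\, dx$ on $\CP{d-1}$. The bridge between the two is the single computation
\begin{equation*}
\tr\bigl(A B^{\dagger}\bigr)
= \int \overline{P_B(x)}\, \tr\bigl(A \proj{x}^{\otimes n}\bigr)\, dx
= \int \overline{P_B(x)}\, Q_A(x)\, dx,
\end{equation*}
valid for every $B = \int P_B(x)\proj{x}^{\otimes n}\, dx \in D$. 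Once this identity is in hand, the lemma follows from a standard separation argument.

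First I would establish $E \subseteq D^{\bot}$. Given $A \in E$, by definition $Q_A \equiv 0$ on $\CP{d-1}$, so the displayed integral vanishes for every admissible $P_B$, and hence $\tr(AB^{\dagger}) = 0$ for every $B \in D$.

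For the reverse inclusion $D^{\bot} \subseteq E$, take $A \in D^{\bot}$. The identity above tells us that $\int \overline{P(x)}\, Q_A(x)\, dx = 0$ for every $P \in L^2(\CP{d-1})$. Since $\CP{d-1}$ is compact and $Q_A(x) = \bra{x}^{\otimes n} A \ket{x}^{\otimes n}$ is continuous (in fact a polynomial in $x$ and $\bar x$), we have $Q_A \in L^2(\CP{d-1})$, so we may choose the test function $P = Q_A$. This yields $\|Q_A\|_{L^2}^2 = 0$, and continuity then promotes this to $Q_A \equiv 0$, i.e.\ $A \in E$.

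Essentially no step is hard; the only point requiring care is bookkeeping of the complex conjugation (so that $\overline{P_B}$ appears rather than $P_B$ in the integrand, matching the $L^2$-orthogonality that will be inverted via the choice $P = Q_A$) and the brief appeal to continuity of $Q_A$ to upgrade ``vanishing almost everywhere'' to ``vanishing everywhere'' at the end.
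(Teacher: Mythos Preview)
Your proof is correct and follows essentially the same route as the paper: both directions rest on the identity $\tr(A B^{\dagger}) = \int \overline{P_B(x)}\,Q_A(x)\,dx$, and for $D^{\bot}\subseteq E$ the paper likewise invokes that the only $L^2$ function orthogonal to everything (``in particular, to itself'') is zero. Your explicit mention of continuity to pass from $\|Q_A\|_{L^2}=0$ to pointwise vanishing is a small refinement over the paper's phrasing, but the argument is the same.
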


\begin{proof}
If $A \in E$, then for all $B \in D$
\[
\tr A B^{\dagger} = \tr A \int \overline{P_B(x)} \proj{x}^{\otimes n} dx = \int \overline{P_B(x)} \bra{x}^{\otimes n} A \ket{x}^{\otimes n} dx = 0,
\]
hence $A \in D^{\bot}$.
Conversely let $A \in D^{\bot}$, then
\[
\tr A B^{\dagger} = 0 \quad \forall B \in D.
\]
Writing this out results in $\int \overline{P_B(x)} \bra{x^{\otimes n}} A \ket{x^{\otimes n}} dx = 0$, for all functions $P_B(x)$ on $\CP{d-1}$.  This implies
\[
\bra{x^{\otimes n}} A \ket{x^{\otimes n}} =0 \quad \forall x \in \CP{d-1},
\]
since only the identically vanishing function is orthogonal to all square integrable functions on $\CP{d-1}$ (and, in particular, to itself).
\end{proof}

\begin{lemma} \label{lem:onbasis}
The operators $ \int dx \ y_{\ell, m}(x) \proj{x}^{\otimes n}$ are non-vanishing and orthogonal with respect to the Hilbert-Schmidt inner product for $\ell \leq n$ and $m$ a corresponding Gelfand-Zetlin pattern. For $\ell >n$, $ \int dx \ y_{\ell, m}(x) \proj{x}^{\otimes n}=0$.
\end{lemma}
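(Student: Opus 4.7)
The plan is to exploit the $U(d)$-equivariance of the linear map
\[
\Phi:L^{2}(\CP{d-1})\longrightarrow \cB(\Sym^{n}(\complex^{d})),\qquad \Phi(f)=\int f(x)\proj{x}^{\otimes n}\,dx,
\]
and to read off both claims from Schur's lemma. Here $U(d)$ acts on $L^{2}(\CP{d-1})$ via $x\mapsto Ux$ and on $\cB(\Sym^{n}(\complex^{d}))$ by conjugation through $U^{\otimes n}$; $\Phi$ intertwines these actions, because the measure $dx$ is $U(d)$-invariant and $\ket{Ux}^{\otimes n}=U^{\otimes n}\ket{x}^{\otimes n}$.

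Next I would record the two isotypic decompositions. The standard Peter--Weyl analysis on the homogeneous space $\CP{d-1}=U(d)/(U(1)\times U(d-1))$ gives a multiplicity-free decomposition $L^{2}(\CP{d-1})=\bigoplus_{\ell\geq 0} V_{\ell}$, where $V_{\ell}$ is the irreducible $U(d)$-representation with highest weight $(\ell,0,\dots,0,-\ell)$, and the Gelfand--Zetlin basis $\{y_{\ell,m}\}_{m}$ is an orthonormal basis of $V_{\ell}$. Dually, $\cB(\Sym^{n}(\complex^{d}))\cong \Sym^{n}(\complex^{d})\otimes \overline{\Sym^{n}(\complex^{d})}$ decomposes, again multiplicity-freely, as $\bigoplus_{\ell=0}^{n}V_{\ell}$, a standard Clebsch--Gordan calculation for $\Sym^{n}\otimes \overline{\Sym^{n}}$ (one way to see this is via characters $h_{n}(z)h_{n}(z^{-1})$).

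With these decompositions in hand, Schur's lemma forces $\Phi|_{V_{\ell}}$ to be either zero or a scalar multiple of the unique equivariant embedding into the $V_{\ell}$-isotypic component of the target. This immediately handles the case $\ell>n$: the target has no $V_{\ell}$-component, so $\Phi|_{V_{\ell}}=0$ and therefore $\int dx\,y_{\ell,m}(x)\proj{x}^{\otimes n}=0$. For $\ell\leq n$, non-vanishing of $\Phi|_{V_{\ell}}$ is obtained by a surjectivity argument using the material already proved: Theorem~\ref{thm:Qrep} says the Q-representation map is injective, so the subspace $E$ of operators with vanishing Q-representation is $\{0\}$, and Lemma~\ref{KlaSkalem} then gives $\Phi(L^{2}(\CP{d-1}))^{\perp}=E=\{0\}$, whence $\Phi$ is surjective onto the finite-dimensional space $\cB(\Sym^{n}(\complex^{d}))$. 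Equivariance forces $\Phi|_{V_{\ell}}$ to surject onto the $V_{\ell}$-component for every $\ell\leq n$, so in particular $\Phi(y_{\ell,m})\neq 0$. Orthogonality is then free: distinct isotypic components are Hilbert--Schmidt orthogonal because the HS inner product is $U(d)$-invariant under conjugation, and within a fixed $V_{\ell}$ the HS inner product agrees up to a positive scalar with the $L^{2}$ inner product by uniqueness of invariant inner products on an irreducible of a compact group, so the orthonormality of $\{y_{\ell,m}\}_{m}$ transports to orthogonality of $\{\Phi(y_{\ell,m})\}_{m}$.

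The main obstacle is the bookkeeping of the two isotypic decompositions and verifying that the Gelfand--Zetlin functions $y_{\ell,m}$ indeed sit inside the claimed irreducible $V_{\ell}$ of $L^{2}(\CP{d-1})$; once this representation-theoretic input is in place, the lemma reduces to Schur's lemma combined with the already-established surjectivity of $\Phi$ coming from injectivity of the Q-representation.
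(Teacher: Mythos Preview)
Your argument is correct and takes a genuinely different route from the paper's proof. The paper computes the Hilbert--Schmidt inner product directly: writing $\tr\bigl[\Phi(y_{\ell,m})\Phi(y_{\ell',m'})^{\dagger}\bigr]=\int\!\!\int y_{\ell,m}(x)\overline{y_{\ell',m'}(z)}\,|\braket{x}{z}|^{2n}\,dx\,dz$, it expands the reproducing kernel $|\braket{x}{z}|^{2n}$ via Corollary~\ref{cor:productexpansion2} into a sum $\sum_{\ell''\leq n}\sum_{m''}c_{\ell''}\,\overline{y_{\ell'',m''}(x)}y_{\ell'',m''}(z)$ and then uses orthonormality of the $y$'s to obtain $\delta_{\ell,\ell'}\delta_{m,m'}$ times an explicit Clebsch--Gordan factor that is nonzero precisely for $\ell\leq n$. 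Your approach instead leverages the $U(d)$-equivariance of $\Phi$ together with the multiplicity-free isotypic decompositions of source and target, so that Schur's lemma forces $\Phi|_{V_{\ell}}$ to be a scalar multiple of an isomorphism (or zero), with vanishing for $\ell>n$ automatic and non-vanishing for $\ell\leq n$ supplied by the surjectivity of $\Phi$ that you extract from Theorem~\ref{thm:Qrep} and Lemma~\ref{KlaSkalem}. The paper's computation buys an explicit formula for the Hilbert--Schmidt norms (the Clebsch--Gordan coefficients $\CG{\nu}{\nu^*}{\la}{0}{0}{0}$), which is used elsewhere; your argument buys a cleaner conceptual explanation and obtains the non-vanishing without having to verify separately that a particular Clebsch--Gordan coefficient is nonzero. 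Both proofs ultimately rest on the same representation-theoretic input (the multiplicity-free decomposition of $\Sym^{n}\otimes\overline{\Sym^{n}}$), but yours packages it abstractly via Schur's lemma rather than through the kernel expansion.
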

\begin{proof}
We calculate
\begin{align*}
\tr [ \int dx \ y_{\ell, m}(x) & \proj{x}^{\otimes n} \int dz \ \overline{y_{\ell, m}(z)} \proj{z}^{\otimes n}]\\
& =
    \int \int y_{\ell,m}(x) \overline{y_{\ell',m'}(z)}  |\braket{x}{z}|^{2n} dxdz\\
& =
    \int \int y_{\ell,m}(x) \overline{y_{\ell',m'}(z)} \\
   & \qquad \times \frac{1}{\dim(n, d)^2} \sum_{\ell''} \CG{\nu}{\nu^*}{\lambda''}{0}{0}{0}  \sum_{m''} \overline{y_{\ell'', m''}(x)}y_{\ell'', m''}(z)  dxdz\\
& =  \delta_{\ell, \ell'} \delta_{m, m'}  \frac{1}{\dim(n, d)^2}  \CG{\nu}{\nu^*}{\lambda''}{0}{0}{0}  \ , 
\end{align*}
where we have used Corollary~\ref{cor:productexpansion2} (Appendix~B)
in the second equality sign and the orthonormality of the $y$
functions in the third equality sign. Since $ \mult(\ell, n, d)$ is
nonzero for $\ell\leq n $ and vanishes for $\ell>n$ (see
Corollary~\ref{cor:productexpansion2}), this concludes the proof.
\end{proof}

\begin{proof}[Proof of Theorem~\ref{thm:Prep}]
By Theorem~\ref{thm:Qrep}, $\bra{x}^{\otimes n} A \ket{x}^{\otimes n} = 0$ for all $x$ implies $A=0$. Therefore the operator space $E$ from Lemma~\ref{KlaSkalem} contains only the identically vanishing operator. As a consequence, the space $D$ of operators that have a P-representation equals the space of operators on $\Sym^n(\complex^d)$ which proves the first part of the claim.

The Fourier decomposition of $P_B$ (see Appendix~B)
$$P_B(x)= \sum_{\ell, m} p_B(\ell, m) y_{\ell, m}(x)$$
implies the decomposition
$$B= \sum_{\ell, m} p_B(\ell, m)\left( \int dx \ y_{\ell, m}(x) \proj{x}^{\otimes n} \right).$$
From Lemma~\ref{lem:onbasis} we see that the coefficients $p_B(\ell, m)$ are determined by the operator $B$ for $\ell \leq n$ and are arbitrary for $\ell>n$.
\end{proof}

\section{B $\quad$ Spherical Harmonics for Higher Dimensions} 

As we have seen, the functions on $\cS(\cH)$, $\cP(\cH)$ and therefore $\CP{d-1}$ play a central role in the present work. In this section, we perform a Fourier decomposition of the functions defined on $\CP{d-1}$ and derive properties that will enable us to work with these functions very effectively. Whereas this may be considered standard by some readers, we include it for the benefit of completeness. Our construction of an orthonormal basis of functions on $\CP{d-1}$ uses the representation theory of $U(d)$ and its subgroup $U(d-1)\times U(1)$. As general references on representation theory of the unitary group we recommend~\cite{FultonHarris91, CarterSegalMacDonald95}.

A (complex) representation $V$ of a group $G$ is a finite-dimensional complex vector space $V$, equipped with an action of $G$ preserving the group operation. $V$ is irreducible if the only invariant subspaces of $V$ are the empty subspace and $V$ itself. For $H$ a subgroup of a group $G$, let $V \downarrow^{G}_{H}$ denote the restriction of a representation of $G$ to $H$. 

Let $G=U(d)$, $V$ a (holomorphic) representation of $U(d)$, i.e.~a representation whose representing matrices have entries that are holomorphic functions in the variables of $U(d)$, and let $H=T(d)$ be the torus of diagonal matrices in $U(d)$. $V$ decomposes according to
$$V\downarrow^{U(d)}_{T(d)}\cong \bigoplus_w  W_w,$$
where $W_w$ are the isotypic components of the irreducible representations of $T(d)$. Since $T(d)$ is abelian its irreducible representations are one-dimensional. Vectors in $W_w$ are known as \emph{weight vectors} with \emph{weight}   $w=(w_1, \ldots, w_d)$, $w_i \in \bbZ$, that is, for all $\ket{v} \in W_w$: 
$$T(d) \ni t : \ket{v} \mapsto t \ket{v}=t_1^{w_1} \cdots t_d^{w_d} \ket{v},$$
where $t=\diag(t_1, \ldots, t_d)$.
The $w$ with $\dim W_w>0$ are called \emph{weights} of $V$. The lexicographical ordering on the set of weights is the relation $w> w'$ if for the smallest $i$ with $w_i\neq w'_i$, $w_i> w'_i$. It turns out that every irreducible representation $V$ of $U(d)$ has a unique highest weight $\lambda$ satisfying $\dim W_\lambda=1$. $\lambda$ is furthermore \emph{dominant}, i.e.~$\lambda=(\lambda_1, \ldots, \lambda_d)$ satisfies $\lambda_i \geq \lambda_{i+1}$. To every dominant $\lambda$, there also exists an irreducible representation denoted by $V_\lambda$. Two irreducible representations $V_\lambda$ and $V_{\lambda'}$ are equivalent if and only if $\lambda =\lambda'$.

In the case where $G = U(d)$ and $H = U(d-1)$ (embedded as $H\ni h \mapsto \left(\begin{array}{cc} h & 0 \\ 0 & 1 \end{array}\right)\in G$ and using the definition $\ket{i}=(\underbrace{0, \ldots, 0, 1}_{i}, 0, \ldots, 0)^T$ for all $i$) and where the representation of $U(d)$ is irreducible with highest weight $\lambda$ one has the following decomposition, known as the \emph{branching rule} for $U(d)$:
 \begin{align} \label{eq:branching}
    V_\lambda \downarrow^{U(d)}_{U(d-1)}
  \cong
    \bigoplus_{\mu} V_{\mu}
 \end{align}
  where the sum extends over dominant weights $\mu=(\mu_1, \ldots, \mu_{d-1})$ that are \emph{interlaced} by $\lambda$, i.e.~that satisfy
  \begin{equation}
    \lambda_{i+1} \leq \mu_i \leq \lambda_i
  \quad \forall i \in \{1, \ldots, d-1\} \ .
  \label{betcond}
  \end{equation}
Iteratively using the branching rule allows us to define an orthonormal basis of the
representation $V_\lambda$, called \emph{Gelfand-Zetlin basis},
where any basis vector is labeled by a sequence of $d$ diagrams
$\underbrace{\lambda^{(d)}}_{=\lambda}, \underbrace{\lambda^{(d-1)}, \ldots, \lambda^{(1)}}_{=:m}$ such that $\lambda^{(i+1)}$ is interlaced by $\lambda^{(i)}$.  $m$ is called a \emph{Gelfand-Zetlin} pattern for $\lambda$. 
The state with Gelfand-Zetlin pattern $m=( (0^{d-1}), (0^{d-2}), \ldots, (0) )$, where $(0^i)=(\underbrace{0, \ldots, 0}_i)$, will be abbreviated by $m=0$. The corresponding state is $\ket{\lambda, 0}$. 

We denote by $dg$ the volume element of the Haar measure on $U(d)$ with normalisation $\int dg =1$. 
We now consider the Hilbert space of square integrable functions on $U(d)$ with the inner product
$$\int  \overline{\alpha(g)}\beta(g) dg,$$
for two functions $\alpha(g)$ and $\beta(g)$. $L^2(U(d))$ carries a representation of $U(d)\times  U(d)$ when equipped with the action 
$$U(d)\times U(d) \ni (g_1, g_2): \alpha(g) \mapsto \alpha(g_1^{-1} g g_2).$$
Let
\[
  t_{\lambda,m,m'}(g) :=d_\la \bra{\lambda,m} g \ket{\lambda,m'}
\]
be the characteristic (or \emph{representative}) functions, i.e. the matrix elements of the irreducible representations of $U(d)$ (multiplied by $d_\la:=\dim V_\lambda$). Note that these functions are orthonormal with respect to the above defined inner product and --- for fixed $\lambda$ --- span an irreducible representation of $U(d)\times U(d)$ with a pair of highest weights $(\lambda^*, \lambda)$, where the $ \lambda^*$ denotes the highest weight of the $V_\lambda^*$, the representation dual to $V_\lambda$. It is not difficult to check that $\lambda^*=(-\lambda_d, \ldots, -\lambda_1)$. The Peter-Weyl theorem asserts that these functions are dense in the $L^2(U(d))$. Note that one can interpret this theorem as a Fourier theorem on $U(d)$ as it shows that any square integrable function can be expressed as a linear combination of the basis functions $t_{\la, m, m'}$. In the following we want to derive a similar statement for functions on $\CP{d-1}$.

When a group $G$ acts transitively on a set $X$ one can identify $X$
with the set $G \slash H$ of left-cosets of the stabiliser group $H$ of
a point $x_0 \in X$, i.e., the group $H := \{g \in G: g x_0 = x_0\}$ and
the isomorphism $G \slash H \rightarrow X$ is $g H \mapsto g x_0$.
[See \cite[p.~59]{CarterSegalMacDonald95}]. In the following we consider the transitive action of $U(d)$ on $\CP{d-1}$ and let $x_{0}$ be the point with homogeneous coordinates $[0: \cdots: 0: 1]$.  Then $H= U(d-1)\times U(1)$ and $\CP{d-1} \cong U(d) / [ U(d-1)\times U(1)]$, \cite[p.~278]{VilKlimVol2}. 

We will show below that the vectors $\ket{\lambda, 0}$ for 
\begin{align}
\label{eq:special}
\lambda = (\ell, 0, \ldots, 0, -\ell)
\end{align}
 are exactly the ones being stabilized by $U(d-1)\times U(1)$. For such $\lambda$, we can therefore define the functions $y_{\ell, m}$ on $\CP{d-1}$ by
\begin{equation}
 y_{\ell, m}(x) := t_{\lambda, m, 0}(g)
\end{equation}
for $g \in x$. The index $\ell$ is sometimes called a
\emph{moment}. Since the measure $dg$ on $U(d)$ descends to a measure
$dx$ on $\CP{d-1}$, these functions are also square integrable and
orthonormal with respect to the standard inner product. The following
theorem, the main statement of this section, asserts that these
functions span $L^2(\CP{d-1})$ densely.

\begin{theorem} \label{thm:decomposition}
  Let $\mu \in L^2(\CP{d-1})$. Then
  \[
    \mu(x) = \sum_{\ell \in \naturals} \sum_{m} \mu(\ell, m) y_{\ell,m}(x)
  \]
  where the second sum ranges over Gelfand-Zetlin pattern $m$ associated to the irreducible representation
  $(\ell, 0, \cdots, 0, -\ell)$ of $U(d)$. The constants $\mu(\ell, m)$ are square summable.
\end{theorem}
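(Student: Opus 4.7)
The plan is to lift functions on $\CP{d-1}$ to right-$H$-invariant functions on $U(d)$ (with $H = U(d-1)\times U(1)$), and then apply the Peter--Weyl theorem already invoked in the text. Under the identification $\CP{d-1}\cong U(d)/H$, a function $\mu$ on $\CP{d-1}$ corresponds to $\tilde\mu(g):=\mu(g\cdot x_0)$, which satisfies $\tilde\mu(gh)=\tilde\mu(g)$ for all $h\in H$. Since the measure $dx$ is the pushforward of $dg$, this gives an isometric identification $L^2(\CP{d-1})\cong L^2(U(d))^{H}$, where the superscript denotes the closed subspace of right-$H$-invariant functions.

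By the Peter--Weyl theorem, the matrix coefficients $t_{\lambda,m,m'}(g)=d_\lambda\bra{\lambda,m}g\ket{\lambda,m'}$ form an orthonormal basis of $L^2(U(d))$. Since $t_{\lambda,m,m'}(gh)=d_\lambda\bra{\lambda,m}g\,(h\ket{\lambda,m'})$, such a matrix coefficient is right-$H$-invariant iff $\ket{\lambda,m'}$ lies in the subspace $V_\lambda^{H}$ of $H$-fixed vectors. It therefore suffices to determine, for each dominant $\lambda$, the space $V_\lambda^{H}$.

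To analyse $V_\lambda^{H}$ I would apply the branching rule~\eqref{eq:branching} to $U(d-1)\subset U(d)$: the $U(d-1)$-fixed subspace of $V_\lambda$ equals the isotypic component of the trivial representation $\mu=(0,\ldots,0)$ of $U(d-1)$, which by the interlacing condition $\lambda_{i+1}\leq 0\leq \lambda_i$ is one-dimensional precisely when $\lambda_2=\cdots=\lambda_{d-1}=0$, $\lambda_1\geq 0$ and $\lambda_d\leq 0$, and is zero otherwise. The (unique up to scalar) $U(d-1)$-fixed vector is then the Gelfand--Zetlin state $\ket{\lambda,0}$, whose weight under the full torus $T(d)$, computed as the successive differences of row sums of the Gelfand--Zetlin pattern, equals $(0,\ldots,0,\lambda_1+\lambda_d)$. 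The remaining $U(1)$ factor in $H$ acts via the last torus coordinate, so $\ket{\lambda,0}$ is $H$-fixed iff $\lambda_1+\lambda_d=0$; that is, iff $\lambda=(\ell,0,\ldots,0,-\ell)$ for some $\ell\in\naturals$.

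Combining these observations, the functions $t_{\lambda,m,0}$ with $\lambda=(\ell,0,\ldots,0,-\ell)$, $\ell\in\naturals$, and $m$ a Gelfand--Zetlin pattern for $\lambda$, form an orthonormal basis of $L^2(U(d))^{H}\cong L^2(\CP{d-1})$; these are by definition the $y_{\ell,m}$. Expanding $\tilde\mu$ in this basis yields the claimed decomposition, and the square-summability of the coefficients $\mu(\ell,m)$ is just Parseval's identity. The main technical hurdle is the identification of $V_\lambda^{H}$: the branching rule quickly pins down the allowed shape of $\lambda$ to $(\lambda_1,0,\ldots,0,\lambda_d)$, but the extra constraint $\lambda_d=-\lambda_1$ emerges only after a careful computation of the $T(d)$-weight of the candidate Gelfand--Zetlin state $\ket{\lambda,0}$ under the final $U(1)$ factor---a small but essential step, without which one would overcount the admissible highest weights.
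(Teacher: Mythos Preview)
Your proposal is correct and follows essentially the same route as the paper: both arguments reduce to identifying $V_\lambda^{H}$ for $H=U(d-1)\times U(1)$ via the branching rule and then invoke Peter--Weyl. The only cosmetic differences are that the paper quotes the $L^2(G/H)\cong\bigoplus V_\lambda^*\otimes V_\lambda^{H}$ form of Peter--Weyl directly (rather than passing through right-$H$-invariant functions on $G$), and it handles the $U(1)$ factor by branching to $U(d-1)\times U(1)$ in one step (so the constraint $\lambda_1+\lambda_d=0$ appears as $|\lambda|-|\mu|=0$) rather than first branching to $U(d-1)$ and then computing the torus weight of $\ket{\lambda,0}$ separately as you do.
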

The proof is based on the following extension of the Peter-Weyl theorem. Define
\[
  V_\lambda^H
:=
  \{v \in V_\lambda: \, h \ket{v} = \ket{v}, \, \forall h \in H\}
\]
as the $H$ invariant subspace of $V_{\lambda}$. 
\begin{theorem}[Peter-Weyl theorem] \label{thm:PeterWeyl}
  \[
    L^2(G/H) \cong \bigoplus_\lambda^\wedge V_{\lambda}^* \otimes {V_{\lambda}}^H
  \]
  where ${\displaystyle \bigoplus^\wedge}$ is the completion of the
  direct sum.  A basis for $V_{\lambda}^* \otimes V_{\lambda}$ is given by $t_{\lambda,m,m'}$.
\end{theorem}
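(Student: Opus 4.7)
The plan is to deduce this homogeneous-space version of the Peter-Weyl theorem from the classical Peter-Weyl theorem for the compact group $G$ itself, together with a bookkeeping argument using the right action of $H$. The classical version asserts that $L^2(G)\cong\bigoplus_\lambda^\wedge V_\lambda^*\otimes V_\lambda$ as a representation of $G\times G$, where the first copy of $G$ acts by left translation and the second by right translation; moreover the normalized matrix coefficients $t_{\lambda,m,m'}(g)=d_\lambda\bra{\lambda,m}g\ket{\lambda,m'}$ give an orthonormal basis of the summand $V_\lambda^*\otimes V_\lambda$.

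First I would identify $L^2(G/H)$ isometrically with the closed subspace of $L^2(G)$ consisting of right-$H$-invariant functions, i.e.\ functions satisfying $f(gh)=f(g)$ for all $h\in H$ and almost every $g\in G$. This identification is correct when the measure on $G/H$ is chosen as the pushforward of the Haar measure $dg$ under $g\mapsto gH$, which is precisely the normalization used in the paper for $dx$ on $\CP{d-1}\cong U(d)/[U(d-1)\times U(1)]$.

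Next I would track how right translation by $H$ acts through the classical decomposition. On each summand $V_\lambda^*\otimes V_\lambda$ the right translation by $h\in H$ acts as $\mathbf{1}\otimes\pi_\lambda(h)$, so the $H$-invariant vectors are exactly those in $V_\lambda^*\otimes V_\lambda^H$. Taking the Hilbert direct sum over all irreducibles $\lambda$ of these invariant subspaces yields
\[
L^2(G/H)\;\cong\;\bigoplus_\lambda^\wedge V_\lambda^*\otimes V_\lambda^H \ ,
\]
which is the claimed decomposition. The basis statement follows by restricting the $t_{\lambda,m,m'}$ basis of $V_\lambda^*\otimes V_\lambda$ to those indices $m'$ whose Gelfand-Zetlin vectors span $V_\lambda^H$; these descend to well-defined functions on $G/H$ precisely because of the right-$H$-invariance of the corresponding matrix elements.

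The main obstacle is really the classical Peter-Weyl theorem itself, whose density portion relies on the spectral theorem for compact self-adjoint convolution operators on $L^2(G)$. Since this is a standard result in representation theory and is cited in the references already given in the paper, I would invoke it as a black box, reducing the proof to the representation-theoretic identification of the $H$-invariants carried out above. The only subtlety to verify carefully is that the orthogonality relations among the $t_{\lambda,m,m'}$, after descent to $G/H$, retain the correct normalization factor, which is immediate from $\int_{G/H}|t_{\lambda,m,m'}|^2\,dx=\int_G|t_{\lambda,m,m'}|^2\,dg$ when $m'$ labels an $H$-invariant vector.
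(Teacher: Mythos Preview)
Your argument is correct and is in fact the standard derivation of the homogeneous-space Peter-Weyl theorem from the classical one. The paper, however, does not give a proof at all: it simply cites \cite[Corollary~9.14]{CarterSegalMacDonald95} and moves on. So your proposal is strictly more detailed than what the paper provides, and the reasoning you outline---identifying $L^2(G/H)$ with right-$H$-invariant functions in $L^2(G)$ and then reading off the invariants summand by summand---is exactly how the cited result is established.
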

\begin{proof}
See e.g.~\cite[Corollary~9.14]{CarterSegalMacDonald95}.
\end{proof}

The following lemma characterises the components in the direct sum in terms of the functions $y_{\ell, m}$.
\begin{lemma} \label{lem:Uinvariant} We have
  \[
  V_\lambda^* \otimes  {V_\lambda}^{U(d-1) \times U(1)}
  =
    \begin{cases}
     \spanv \{ y_{\ell, m}\}& \lambda = (\ell, 0, 0, \ldots, 0, -\ell) \\
      0 & \text{otherwise} \ .
    \end{cases}
  \]
\end{lemma}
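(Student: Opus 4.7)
The plan is to determine the $(U(d-1) \times U(1))$-fixed subspace of $V_\lambda$ in two stages: first identify when the trivial $U(d-1)$-representation appears in the restriction $V_\lambda \downarrow^{U(d)}_{U(d-1)}$, and then impose the additional $U(1)$-invariance on that subspace. Once the fixed subspace is pinned down, the final identification of $V_\lambda^* \otimes V_\lambda^H$ with $\spanv\{y_{\ell,m}\}$ follows immediately from the definition of the $y_{\ell,m}$ in terms of matrix coefficients.

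For the first stage, I would apply the branching rule \eqref{eq:branching}: the $U(d-1)$-fixed subspace of $V_\lambda$ is the isotypic component of the trivial $U(d-1)$-representation $\mu = (0,\ldots,0)$. The interlacing constraints $\lambda_{i+1} \leq 0 \leq \lambda_i$ for $i = 1,\ldots,d-1$ force $\lambda_2 = \lambda_3 = \cdots = \lambda_{d-1} = 0$ together with $\lambda_1 \geq 0 \geq \lambda_d$, so $\lambda$ must have the form $(\lambda_1, 0, \ldots, 0, \lambda_d)$. Conversely, when $\lambda$ is of this form, the branching rule contributes exactly one copy of the trivial $U(d-1)$-representation, and the corresponding one-dimensional subspace is spanned by the Gelfand-Zetlin vector $\ket{\lambda,0}$ with pattern $\lambda^{(i)} = (0,\ldots,0)$ for every $i < d$.

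For the second stage, I would use the fact that each Gelfand-Zetlin basis vector is simultaneously a weight vector under the diagonal torus $T(d)$, with the last weight given by the familiar Gelfand-Zetlin formula $w_d = |\lambda^{(d)}| - |\lambda^{(d-1)}| = (\lambda_1 + \lambda_d) - 0$. Since the $U(1)$ subgroup of $H = U(d-1) \times U(1)$ is the block $\diag(1,\ldots,1,e^{i\theta})$, it acts on $\ket{\lambda,0}$ by the phase $e^{i\theta(\lambda_1+\lambda_d)}$, so $U(1)$-invariance is equivalent to $\lambda_d = -\lambda_1$. Setting $\ell := \lambda_1 \geq 0$ forces $\lambda = (\ell, 0, \ldots, 0, -\ell)$, and shows that for all other highest weights the $H$-invariant subspace vanishes.

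Finally, combining with Theorem~\ref{thm:PeterWeyl}, for $\lambda = (\ell,0,\ldots,0,-\ell)$ the summand $V_\lambda^* \otimes V_\lambda^H$ is $d_\lambda$-dimensional and a basis is given by the matrix coefficients $t_{\lambda,m,0}$ with $m$ ranging over Gelfand-Zetlin patterns of $\lambda$; since $\ket{\lambda,0}$ is $H$-fixed, these descend to well-defined functions on $U(d)/H \cong \CP{d-1}$, and by definition they are precisely the $y_{\ell,m}$. The main subtlety in the argument is the $U(1)$-weight calculation via the Gelfand-Zetlin scheme — this is where the restrictive form $(\ell,0,\ldots,0,-\ell)$ (rather than the more general $(\lambda_1,0,\ldots,0,\lambda_d)$ produced by $U(d-1)$-invariance alone) is actually forced.
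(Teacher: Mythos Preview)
Your proposal is correct and follows essentially the same approach as the paper. The only cosmetic difference is packaging: the paper restricts to $U(d-1)\times U(1)$ in one step (writing $V_\lambda \downarrow^{U(d)}_{U(d-1)\times U(1)} \cong \bigoplus_\mu V_\mu \otimes V_{|\lambda|-|\mu|}$ and reading off both invariance conditions simultaneously), whereas you separate the $U(d-1)$-branching and the $U(1)$-weight computation into two stages; the content of the argument is identical.
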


\begin{proof}
Since $t_{\lambda, m, m'}(g)=d_\la \bra{\lambda, m} g \ket{\lambda, m'}$, it suffices to show that the vectors $\ket{\lambda, 0}$ with $\lambda$ as in the statement are exactly the vectors fixed by $U(d-1)\times U(1)$.  The claim is therefore equivalent to 
  \[
 {V_\lambda}^{U(d-1) \times U(1)}
  =
    \begin{cases}
     \spanv \{ \ket{\lambda, 0}\}& \lambda = (\ell, 0, 0, \ldots, 0, -\ell) \\
      0 & \text{otherwise} \ .
    \end{cases}
  \]
  
It follows from the branching rule and simple counting of degrees of the polynomials that 
$$ V_\lambda \downarrow^{U(d)}_{U(d-1)\times U(1)}\cong \bigoplus_\mu V_\mu \otimes V_{|\lambda|-|\mu|} $$
where the sum extends over $\mu$ that are interlaced by $\lambda$ and $V_{|\lambda|-|\mu|}$ is the one-dimensional representation of $U(1)$ with weight $|\lambda|-|\mu|$. Since $V_\lambda^H = (V_\lambda \downarrow_H^G)^H$, we find
$$V_\lambda^{U(d-1)\times U(1)}= \left( V_\lambda \downarrow^{U(d)}_{U(d-1)\times U(1)}\right)^{U(d-1)\times U(1)}\cong \bigoplus_\mu V_\mu^{U(d-1)} \otimes V_{|\lambda|-|\mu|}^{U(1)}. $$
$V_\mu^{U(d-1)}$ is exactly nonzero when $V_\mu$ is the trivial representation, i.e.~$\mu=(0^{d-1})$. Likewise, $V_{|\lambda|-|\mu|}^{U(1)}$ is non-vanishing only when $V_{|\lambda|-|\mu|}$ is the trivial representation of $U(1)$, i.e.~$|\lambda|-|\mu|=0$. Note that $(0)^{d-1}$ interlaces $\lambda$ only when $\lambda=(\lambda_1, 0,  \ldots, 0, \lambda_{d})$ and that  $|\lambda|=|\mu|$ furthermore implies $\lambda_1+\lambda_d=0$. Setting $\la_1=\ell$ completes the proof.
\end{proof}

\begin{proof}[Proof of Theorem~\ref{thm:decomposition}]
We apply Theorem~\ref{thm:PeterWeyl} to $G=U(d)$ and $H=U(d-1)\times U(1)$. Recalling that in this case $G/H\cong \CP{d-1}$ the left hand side becomes $L^2(\CP{d-1})$. According to Lemma~\ref{lem:Uinvariant}, the right hand side equals the space spanned by the functions $y_{\ell, m}$. This concludes the proof.
\end{proof}

We now want to relate the multiplication of the $t$ functions to the Clebsch-Gordan coefficients of $U(d)$. In general we have the decomposition
$$V_\lambda \otimes V_{\lambda'}\downarrow^{U(d)\times U(d)}_{U(d)}\cong \bigoplus_{\lambda''} \complex^{c_{\lambda, \lambda'}^{\lambda''}} \otimes V_{\lambda''},$$
where $U(d)$ is embedded diagonally into $U(d)\times U(d)$, i.e.~$U(d) \ni g \mapsto g\times g \in U(d)\times U(d)$. 
The multiplicities $c_{\la, \la'}^{\la''}$ are the well-known Littlewood-Richardson coefficients (see e.g.~\cite{fulton97}). In terms of a basis transform this isomorphism reads
$$\ket{\la, m} \ket{\la', m'}=\sum_{\la'', m'', r} \braket{\la, \la', r, \la'', m''}{\la, m}\ket{\la', m'}\ket{\la, \la', r, \la'', m''}$$
with the $U(d)$ Clebsch-Gordan coefficients $\braket{\la, \la', r, \la'', m''}{\la, m}\ket{\la', m'}$, where $r$ counts the different copies of $V_{\la''}$. The following lemma relates the product of two functions $t_{\lambda, m, 0}$ and $t_{\lambda', m',0}$ to the $U(d)$ Clebsch-Gordan coefficients. More generally, such a formula can be derived for the product of $t_{\lambda, m, \tilde{m}}$ and $t_{\lambda, m', \tilde{m}'}$ functions (see~\cite[Chapter 18.2.1]{VilKlimVol3}).
\begin{lemma}\label{lem:productexpansion}
\begin{align*}
t_{\la, m, 0}(g)t_{\la', m', 0}(g) =\sum_{\la'', m''} \CG{\la}{\la'}{\la''}{m}{m'}{m''}t_{\la'', m'', 0}(g)
\end{align*}
where 
\begin{align}\label{eq:symbol}
&\CG{\la}{\la'}{\la''}{m}{m'}{m''}  := \int t_{\la, m, 0}(g)t_{\la', m', 0}(g)\overline{t_{\la'', m'', 0}(g)}dg\\
&\qquad =\frac{d_\la d_{\la'}}{d_{\la''}}\left(\sum_{r}\braket{\la, \la', r, \la'', 0}{\la, 0}\ket{\la', 0}\bra{\la, m}\braket{\la', m'}{\la, \la', r, \la'', m''}  \right). \nonumber
\end{align}
\end{lemma}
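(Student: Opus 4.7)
The plan is to lift the product of representative functions to a matrix element on $V_\la\otimes V_{\la'}$ and then invoke the Clebsch--Gordan decomposition. Starting from
\begin{align*}
t_{\la,m,0}(g)\,t_{\la',m',0}(g)=d_\la d_{\la'}(\bra{\la,m}\otimes\bra{\la',m'})(g\otimes g)(\ket{\la,0}\otimes\ket{\la',0}),
\end{align*}
I would expand both $\ket{\la,m}\ket{\la',m'}$ and $\ket{\la,0}\ket{\la',0}$ in the Clebsch--Gordan basis of $V_\la\otimes V_{\la'}\cong\bigoplus_{r,\la''}V_{\la''}^{(r)}$. Because $g\otimes g$ preserves each summand $V_{\la''}^{(r)}$ and acts there as the irreducible representation $\la''$, matrix elements mixing different $r$ or $\la''$ vanish and within one summand equal $d_{\la''}^{-1}\,t_{\la'',m'',\tilde m''}(g)$. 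Carrying out this expansion rewrites the product as
\begin{align*}
t_{\la,m,0}(g)\,t_{\la',m',0}(g)=\sum_{\la''}\frac{d_\la d_{\la'}}{d_{\la''}}\sum_{r,m'',\tilde m''}\bra{\la,m}\braket{\la',m'}{\la,\la',r,\la'',m''}\,\braket{\la,\la',r,\la'',\tilde m''}{\la,0}\ket{\la',0}\,t_{\la'',m'',\tilde m''}(g).
\end{align*}

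The step I expect to be the main obstacle is to show that only the terms with $\tilde m''=0$ contribute. I would argue this from $U(d-1)$-invariance: the Gelfand--Zetlin pattern defining $\ket{\la,0}$ has $\la^{(d-1)}=(0^{d-1})$, which labels the trivial $U(d-1)$-subrepresentation in the branching \eqref{eq:branching}, so $\ket{\la,0}$ (and similarly $\ket{\la',0}$) is $U(d-1)$-invariant. Hence $\ket{\la,0}\ket{\la',0}$ is invariant under the diagonal $U(d-1)$-action, and so must be its projection onto every irreducible summand $V_{\la''}^{(r)}$. Applying the interlacing condition iteratively, the only Gelfand--Zetlin basis vector of $V_{\la''}$ fixed by $U(d-1)$ (when one exists) is $\ket{\la'',0}$, so $\braket{\la,\la',r,\la'',\tilde m''}{\la,0}\ket{\la',0}=0$ unless $\tilde m''=0$. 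The expansion then collapses to the form stated in the lemma with coefficient equal to the asserted closed-form Clebsch--Gordan expression.

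The integral representation of $\CG{\la}{\la'}{\la''}{m}{m'}{m''}$ follows directly from the Peter--Weyl orthonormality of the functions $\{t_{\la,m,m'}\}$ in $L^2(U(d))$ (Theorem~\ref{thm:PeterWeyl}): multiplying the identity just proved by $\overline{t_{\la'',m'',0}(g)}$ and integrating against the Haar measure $dg$ extracts exactly the desired coefficient, simultaneously establishing the integral definition and the closed-form formula.
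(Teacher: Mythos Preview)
Your proposal is correct and follows essentially the same approach as the paper: lift the product to a matrix element on $V_\la\otimes V_{\la'}$, decompose via the Clebsch--Gordan isomorphism, and use block-diagonality of the diagonal $g$-action. The one place you are more careful than the paper is the reduction to $\tilde m''=0$; the paper simply writes down the sum with the right-hand index already fixed to $0$, whereas you justify this via $U(d-1)$-invariance of $\ket{\la,0}\ket{\la',0}$ and the fact that the Gelfand--Zetlin basis is adapted to the branching chain. Your final paragraph, deriving the integral formula from Peter--Weyl orthonormality, is also a detail the paper's proof omits but which follows exactly as you say.
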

\begin{proof}
\begin{align*}
t_{\la, m, 0}(g)t_{\la', m', 0}(g)	& = d_\la d_{\la'} \bra{\la, m} g \ket{\la, 0} \bra{\la', m'} g \ket{\la', 0} \\
					& = d_\la d_{\la'} (\bra{\la, m} \bra{\la', m'}) g (\ket{\la, 0} \ket{\la', 0}) \\
					& =d_\la d_{\la'}\sum_{\la'', m'', r}\braket{\la, \la', r, \la'', 0}{\la, 0}\ket{\la',0}\\
					& \qquad \qquad  \times \bra{\la, m}\braket{\la', m'}{\la, \la', r, \la'', m''}  \bra{\la'', m''} g \ket{\la'', 0} \\
					& =\sum_{\la'', m''} \CG{\la}{\la'}{\la''}{m}{m'}{m''}t_{\la'', m'', 0}(g)
\end{align*}
\end{proof}

This leads to an important product formula which we use, for instance, in the update rule.
\begin{corollary}\label{cor:productexpansion} 
\begin{align*}
y_{\ell, m}(x)y_{\ell', m'}(x) =\sum_{\ell''}^{\ell +\ell'} \sum_{m''} \CGG{\ell}{\ell'}{\ell''}{m}{m'}{m''}y_{\ell'', m''}(x)
\end{align*}
where  $ \CGG{\ell}{\ell'}{\ell''}{m}{m'}{m''}:= \CG{\la}{\la'}{\la''}{m}{m'}{m''} $
for $\la=(\ell, 0, \ldots, 0, -\ell)$ and similarly for $\la'$ and $\la''$ (see~\eqref{eq:symbol}). 
\end{corollary}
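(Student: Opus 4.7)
The strategy is to specialise Lemma~\ref{lem:productexpansion} to the weights $\lambda = (\ell, 0, \ldots, 0, -\ell)$ and $\lambda' = (\ell', 0, \ldots, 0, -\ell')$, and then to restrict the resulting identity of matrix elements on $U(d)$ to the coset space $\CP{d-1} \cong U(d) / [U(d-1)\times U(1)]$. Recall from Lemma~\ref{lem:Uinvariant} that for such $\lambda$ the vector $\ket{\lambda, 0}$ is $H$-invariant (where $H = U(d-1)\times U(1)$), so the function $t_{\lambda, m, 0}(g)$ is right-$H$-invariant and therefore descends to the function $y_{\ell, m}(x)$ on $\CP{d-1}$; analogous statements hold for the primed and double-primed variables.

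First I would write $t_{\lambda, m, 0}(g)\, t_{\lambda', m', 0}(g)$ as a sum over all $\lambda''$ appearing in the Clebsch-Gordan decomposition of $V_{\lambda}\otimes V_{\lambda'}$ as given by Lemma~\ref{lem:productexpansion}, with coefficients proportional to $\braket{\lambda, \lambda', r, \lambda'', 0}{\lambda, 0}\ket{\lambda', 0}$. The key observation is that the tensor $\ket{\lambda, 0}\otimes \ket{\lambda', 0}$ is $H$-invariant as the tensor product of two $H$-invariant vectors. Since projection onto the isotypic component $V_{\lambda''}$ commutes with the $H$-action, the image of this projection must lie in $V_{\lambda''}^{H}$. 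By Lemma~\ref{lem:Uinvariant}, $V_{\lambda''}^{H}$ vanishes unless $\lambda'' = (\ell'', 0, \ldots, 0, -\ell'')$, in which case it is one-dimensional and spanned by $\ket{\lambda'', 0}$. Consequently, the coefficient $\braket{\lambda, \lambda', r, \lambda'', 0}{\lambda, 0}\ket{\lambda', 0}$ is zero for all other $\lambda''$, so the sum collapses to a sum over weights of the doubly-extreme form.

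Next I would establish the upper bound $\ell'' \leq \ell + \ell'$. Every weight of $V_{\lambda} \otimes V_{\lambda'}$ is a sum of a weight of $V_{\lambda}$ and a weight of $V_{\lambda'}$, and the first coordinate of any weight of $V_\lambda$ is bounded by the first coordinate $\ell$ of its highest weight (and similarly for $V_{\lambda'}$). In particular, the highest weight $\lambda''$ of any irreducible summand has $\ell'' = \lambda''_{1} \leq \ell + \ell'$, so the Clebsch-Gordan multiplicity $c_{\lambda \lambda'}^{\lambda''}$ vanishes beyond this range.

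Finally, I restrict the resulting identity on $U(d)$ to a coset representative $g\in x$ for $x \in \CP{d-1}$. Each surviving factor $t_{\lambda'', m'', 0}(g)$ becomes $y_{\ell'', m''}(x)$, and the coefficient $\CG{\lambda}{\lambda'}{\lambda''}{m}{m'}{m''}$ becomes $\CGG{\ell}{\ell'}{\ell''}{m}{m'}{m''}$ by definition, yielding the claimed identity. The only genuinely non-routine step is the $H$-invariance argument cutting the Clebsch-Gordan sum down to the doubly-extreme weights; everything else is either direct substitution into Lemma~\ref{lem:productexpansion} or a short weight-multiplicity observation.
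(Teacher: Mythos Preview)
Your proof is correct and follows the same approach the paper implicitly takes: the corollary is stated in the paper without proof, as an immediate specialisation of Lemma~\ref{lem:productexpansion}. You have correctly supplied the two details that make this specialisation rigorous---the $H$-invariance argument restricting the surviving $\lambda''$ to the form $(\ell'',0,\ldots,0,-\ell'')$ (which is exactly the mechanism of Lemma~\ref{lem:Uinvariant}), and the weight bound $\ell'' \leq \ell+\ell'$.
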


\begin{corollary} \label{cor:productexpansion2}
\begin{align}
\dim(n, d)|\braket{d}{x}|^{2n}=\frac{1}{\dim(n, d)}\sum_{\ell} \CG{\nu}{\nu^*}{\la}{0}{0}{0}  y_{\ell, 0}(x).
\end{align}
where
\begin{align}\label{eq:CGmult}
\CG{\nu}{\nu^*}{\la}{0}{0}{0}  =\frac{\dim(n, d)^2}{d_{\la}}|\braket{\nu, \nu^*, \la, 0}{\nu, 0}\ket{\nu^*, 0}|^2  
\end{align}
for $\la=(\ell, 0, \ldots, 0, -\ell)$ with $\ell\leq n$. Furthermore, $\CG{\nu}{\nu^*}{\la}{0}{0}{0} \neq 0$ for $\ell \leq n$ and vanishes for $\ell >n$.
\end{corollary}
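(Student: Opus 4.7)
My plan is to realize $|\braket{d}{x}|^{2n}$ as a product of matrix-coefficient functions on $U(d)$ and then apply Lemma~\ref{lem:productexpansion}. Set $\nu = (n, 0, \ldots, 0)$, so that $V_\nu = \Sym^n(\complex^d)$, and parametrize $x$ by a representative $g \in U(d)$ with $\ket{x} = g\ket{d}$. Since $\ket{d}^{\otimes n}$ is, up to scalar, the unique $U(d-1)$-invariant vector of $V_\nu$ (with $U(1)$-weight $n$), it coincides with $\ket{\nu,0}$, so $\braket{d}{x}^n = \bra{\nu,0}g\ket{\nu,0} = t_{\nu,0,0}(g)/\dim(n,d)$. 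Under the antilinear $U(d)$-equivariant isomorphism $V_\nu^* \cong V_{\nu^*}$ with $\nu^* = (0,\ldots,0,-n)$, the vector $\ket{\nu,0}$ is sent to the $U(d-1)$-invariant vector $\ket{\nu^*,0}$ of $V_{\nu^*}$, giving $\overline{\braket{d}{x}^n} = t_{\nu^*,0,0}(g)/\dim(n,d)$. Multiplying yields
\[
|\braket{d}{x}|^{2n} = \frac{1}{\dim(n,d)^2}\, t_{\nu,0,0}(g)\, t_{\nu^*,0,0}(g).
\]

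Next I would apply Lemma~\ref{lem:productexpansion} to this product. The crucial observation is that $\ket{\nu,0}\otimes\ket{\nu^*,0}$ is invariant under the full group $H = U(d-1)\times U(1)$ (the $U(1)$-weights $+n$ and $-n$ cancel); therefore its image in any irreducible summand $V_{\la''}$ of $V_\nu \otimes V_{\nu^*}$ lies in $V_{\la''}^H$. By Lemma~\ref{lem:Uinvariant}, $V_{\la''}^H \neq 0$ only for $\la'' = \la := (\ell, 0, \ldots, 0, -\ell)$, in which case it equals the one-dimensional span of $\ket{\la,0}$, forcing $m'' = 0$. Using $t_{\la,0,0}(g) = y_{\ell,0}(x)$, the expansion of Lemma~\ref{lem:productexpansion} collapses to
\[
t_{\nu,0,0}(g)\, t_{\nu^*,0,0}(g) = \sum_\ell \CG{\nu}{\nu^*}{\la}{0}{0}{0}\, y_{\ell,0}(x),
\]
which combined with the previous display proves the main identity.

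To obtain \eqref{eq:CGmult} I specialise \eqref{eq:symbol}: with $d_\nu = d_{\nu^*} = \dim(n,d)$ and multiplicity one in the decomposition of $V_\nu \otimes V_{\nu^*}$ (so no $r$-sum), the two inner CG factors become complex conjugates of one another and produce the modulus squared $|\braket{\nu,\nu^*,\la,0}{\nu,0}\ket{\nu^*,0}|^2$. The vanishing/non-vanishing dichotomy then rests on the Littlewood--Richardson decomposition
\[
V_\nu \otimes V_{\nu^*} \cong \bigoplus_{\ell=0}^n V_{(\ell, 0, \ldots, 0, -\ell)},
\]
each summand appearing exactly once: for $\ell > n$ the summand is absent, so the coefficient vanishes, while for $\ell \leq n$ the projection of $\ket{\nu,0}\otimes\ket{\nu^*,0}$ onto the $\ell$-th summand is non-zero, which one can see from the fact that $|\braket{d}{x}|^{2n}$ is a polynomial of bidegree $(n,n)$ in $(x,\bar x)$ whose expansion in the zonal basis $y_{\ell,0}$ genuinely contains every bidegree $(\ell,\ell)$ up to $\ell = n$. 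I expect the main obstacle to be establishing this multiplicity-one decomposition together with the non-vanishing of each individual CG coefficient: both can be handled by a Schur-polynomial computation of $s_\nu(t)s_{\nu^*}(t)$ or, more concretely, by identifying $V_\nu \otimes V_{\nu^*}$ with the bidegree-$(n,n)$ polynomials on $\complex^d \times (\complex^d)^*$ under the trace/contraction filtration, where the $V_{(\ell,0,\ldots,0,-\ell)}$-isotypic component is cut out by iterated contractions and carries an explicit non-zero image of the zonal vector.
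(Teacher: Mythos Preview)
Your proposal is correct and follows essentially the same route as the paper: identify $\ket{d}^{\otimes n}$ with $\ket{\nu,0}$, write $|\braket{d}{x}|^{2n}$ as $d_\nu^{-2}\,t_{\nu,0,0}(g)\,t_{\nu^*,0,0}(g)$, and expand the product via Lemma~\ref{lem:productexpansion} using the multiplicity-one decomposition $V_\nu\otimes V_{\nu^*}\cong\bigoplus_{\ell=0}^n V_{(\ell,0,\ldots,0,-\ell)}$. Your explicit appeal to Lemma~\ref{lem:Uinvariant} to force $\la''=(\ell,0,\ldots,0,-\ell)$ and $m''=0$ is a nice clarification that the paper leaves implicit; conversely, the paper dispatches the non-vanishing for $\ell\le n$ in one line by noting that the Littlewood--Richardson coefficient is nonzero, whereas you correctly flag that step as the real work and sketch the contraction-filtration argument, which is a sound way to close it.
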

\begin{proof}
Note that $\dim(n, d)=d_\nu=d_{\nu^*}$. Then
\begin{align*}
d_{\nu}^2|\braket{d}{x}|^{2n}& = d_{\nu}^2\bra{d}^{\otimes n} g^{\otimes n} \ket{d}^{\otimes n} \overline{\bra{d}^{\otimes n} g^{\otimes n} \ket{d}^{\otimes n}}\\
& = d_\nu^2 \bra{\nu, 0} g \ket{\nu, 0}\overline{\bra{\nu, 0} g \ket{\nu, 0}}\\
& = t_{\nu, 0, 0} (x) \overline{t_{\nu, 0, 0} (x)} = t_{\nu, 0, 0} (x) t_{\nu^*, 0, 0} (x) 
\end{align*}
since $ \ket{d}^{\otimes n}$ is a weight vector in $\nu=(n, 0, \ldots,
0)$ that is invariant with respect to the subgroup $U(d-1)$ (embedded
into $U(d)$ by inclusion into the top left corner), and therefore has
a Gelfand-Zetlin pattern $m=0$. The invariance with respect $U(d-1)$
follows from the tensor production action of the group $U(d)$ as well
as the fact that the stabilizer of $\ket{d}$ contains
$U(d-1)$. \eqref{eq:CGmult} holds since the Littlewood-Richardson
coefficient $c_{\nu, \nu^*}^{\lambda}$ equals one for $\ell\leq n$ and
vanishes for larger values of $\ell$~\cite{fulton97}. This implies in
particular that $ \CG{\nu}{\nu^*}{\la}{0}{0}{0}$ vanishes for
$\ell>n$. In order to see that $ \CG{\nu}{\nu^*}{\la}{0}{0}{0}$ does
not vanish for smaller values of $\ell$, note that the projection of
$\ket{\nu, 0}\ket{\nu^*, 0}$ onto the irreducible representation
$\lambda=(\ell, 0, \ldots, 0, -\ell)$ is given by
\begin{align*}
P_\lambda \ket{\nu, 0}\ket{\nu^*, 0}&=\sum_m (\braket{\nu, \nu^*, \lambda, m}{\nu, 0}\ket{\nu^*, 0})\ket{\nu, \nu^*, \lambda, m}\\
&= (\braket{\nu, \nu^*, \lambda, 0}{\nu, 0}\ket{\nu^*, 0})\ket{\nu, \nu^*, \lambda, 0}
\end{align*}
where we used the fact that the sum can only contain $H$ invariant vectors. The claim follows since we know that this projection cannot vanish, as the Littlewood-Richardson coefficient is nonzero for all $\ell \leq n$.
\end{proof}

\begin{lemma}\label{lem:multiply}
For $g \in zH$, where $H=U(d-1)\times U(1)$, we have
\begin{align*}
y_{\ell, 0}(g^\dagger x)= \sum_m \overline{y_{\ell, m}(z)} y_{\ell, m}(x)
\end{align*}
\end{lemma}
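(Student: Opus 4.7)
The plan is to work entirely in terms of matrix elements on the irreducible $U(d)$-module $V_\lambda$ with $\lambda = (\ell, 0, \ldots, 0, -\ell)$, using the identification $\CP{d-1} \cong U(d)/H$ for $H = U(d-1) \times U(1)$. Pick a representative $\tilde g \in U(d)$ with $\tilde g H = x$, so that, by definition, $y_{\ell, m}(x)$ is proportional to $\bra{\lambda, m} \tilde g \ket{\lambda, 0}$. Since $U(d)$ acts on cosets by left multiplication, $g^\dagger x$ is represented by $g^\dagger \tilde g$, and hence $y_{\ell, 0}(g^\dagger x)$ is proportional to $\bra{\lambda, 0} g^\dagger \tilde g \ket{\lambda, 0}$.

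The first step is to use the hypothesis $g \in zH$ together with Lemma~\ref{lem:Uinvariant}. Writing $g = zh$ for some $h \in H$, we have $g^\dagger = h^\dagger z^\dagger$, and since $\ket{\lambda, 0}$ is $H$-invariant and $H$ acts unitarily, $\bra{\lambda, 0} h^\dagger = \bra{\lambda, 0}$. Therefore
\begin{align*}
\bra{\lambda, 0} g^\dagger \tilde g \ket{\lambda, 0} = \bra{\lambda, 0} z^\dagger \tilde g \ket{\lambda, 0}.
\end{align*}
This also confirms that the left-hand side depends on $g$ only through its coset $zH$, so the statement is well-posed. Next, insert the resolution of identity $\id_{V_\lambda} = \sum_m \ket{\lambda, m}\bra{\lambda, m}$ between $z^\dagger$ and $\tilde g$, obtaining
\begin{align*}
\bra{\lambda, 0} z^\dagger \tilde g \ket{\lambda, 0} = \sum_m \overline{\bra{\lambda, m} z \ket{\lambda, 0}} \cdot \bra{\lambda, m} \tilde g \ket{\lambda, 0},
\end{align*}
where we used $\bra{\lambda, 0} z^\dagger \ket{\lambda, m} = \overline{\bra{\lambda, m} z \ket{\lambda, 0}}$.

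Finally, recognise each factor in the sum as $\overline{y_{\ell, m}(z)}$ and $y_{\ell, m}(x)$ respectively, up to the normalisation constant appearing in the definition $t_{\lambda, m, 0}(g) \propto \bra{\lambda, m} g \ket{\lambda, 0}$. The two such constants on the right combine with the single constant on the left to produce exactly the claimed identity $y_{\ell, 0}(g^\dagger x) = \sum_m \overline{y_{\ell, m}(z)} y_{\ell, m}(x)$. The essentially only delicate point — which I view as the main bookkeeping obstacle — is to check that these normalisation constants match; everything else reduces to a one-line application of the $H$-invariance of $\ket{\lambda, 0}$ followed by the completeness relation on the Gelfand–Zetlin basis of $V_\lambda$.
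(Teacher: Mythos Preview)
Your argument is correct and captures the same content as the paper's proof, just more directly. The paper computes the Fourier coefficient $\int y_{\ell,0}(g^\dagger x)\,\overline{y_{\ell',m}(x)}\,dx$ via Schur orthogonality on $U(d)$ and reads off $\overline{y_{\ell,m}(z)}$; you instead insert the completeness relation $\sum_m\ket{\lambda,m}\bra{\lambda,m}=\id_{V_\lambda}$ directly into the matrix element $\bra{\lambda,0}g^\dagger\tilde g\ket{\lambda,0}$. Both routes are the representation-theoretic identity $t_{\lambda,0,0}(g^{-1}\tilde g)=\sum_m t_{\lambda,0,m}(g^{-1})\,t_{\lambda,m,0}(\tilde g)$, so yours is a mild streamlining that skips the integral and the appeal to orthogonality.

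Your flagged bookkeeping obstacle is real and deserves a warning rather than reassurance: with the paper's stated normalisation $t_{\lambda,m,m'}(g)=d_\lambda\bra{\lambda,m}g\ket{\lambda,m'}$, one factor of $d_\lambda$ on the left against two on the right leaves a surplus $1/d_\lambda$, and with the orthonormal choice $\sqrt{d_\lambda}$ one still gets a surplus $1/\sqrt{d_\lambda}$. The paper's own proof glosses over exactly the same point (its last step equates the inner product with the expansion coefficient, tacitly using orthonormality of the $y_{\ell,m}$, which under the $d_\lambda$ normalisation they are not). So the discrepancy is a normalisation inconsistency internal to the paper, not a defect in your strategy.
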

\begin{proof}
\begin{align*}
\int y_{\ell, 0}(g^\dagger x) \overline{y_{\ell, m}(x)}dx & = d_\la d_{\la'}\int \bra{\la, 0} g^{\dagger} \tilde{g}\ket{\la, 0} \bra{\la', 0} \tilde{g}^\dagger \ket{\la', m}d\tilde{g}\\
& = d_\la \delta_{\la, \la'} \bra{\la, 0} g^\dagger \ket{\la, m}= \delta_{\la, \la'} \overline{y_{\ell, m}(z)}
\end{align*}
\end{proof}

The next corollary generalises the decomposition in Corollary~\ref{cor:productexpansion2}. It is needed in some technical aspects of the paper as well as the examples.
\begin{corollary} \label{cor:productexpansion3}
\begin{align*}
\dim(n, d) |\braket{z}{x}|^{2n}& =\frac{1}{\dim(n, d)} \sum_{\ell}^n  \CG{\nu}{\nu^*}{\la}{0}{0}{0} \sum_m \overline{y_{\ell, m}(z)}y_{\ell, m}(x), 
\end{align*}
where $\nu=(n, 0, \ldots, 0)$ and $\la=(\ell, 0, \ldots, 0, -\ell)$. The coefficients are defined in~\eqref{eq:symbol}.
\end{corollary}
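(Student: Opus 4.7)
The plan is to reduce the general identity to Corollary~\ref{cor:productexpansion2}, which already handles the case where $z$ is the distinguished base point $\ket{d}$, by exploiting the joint $U(d)$-invariance of the quantity $|\braket{z}{x}|^{2n}$ and then expanding the zonal $y_{\ell,0}$ into a full basis using Lemma~\ref{lem:multiply}. Concretely, the left-hand side is unchanged under the simultaneous substitution $z \mapsto g z$, $x \mapsto g x$ for any $g \in U(d)$, so by the transitivity of the $U(d)$-action on $\CP{d-1}$ (established in Appendix~B), I can choose $g \in U(d)$ such that $g \ket{d}$ represents the point $z$, and rewrite
\begin{align*}
|\braket{z}{x}|^{2n} = |\braket{d}{g^\dagger x}|^{2n}.
\end{align*}

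Next, I apply Corollary~\ref{cor:productexpansion2} with $x$ replaced by $g^\dagger x$ to obtain
\begin{align*}
\dim(n,d)\,|\braket{d}{g^\dagger x}|^{2n} = \frac{1}{\dim(n,d)} \sum_{\ell = 0}^{n} \CG{\nu}{\nu^*}{\la}{0}{0}{0}\, y_{\ell,0}(g^\dagger x),
\end{align*}
where the sum is truncated at $\ell = n$ because the coefficients $\CG{\nu}{\nu^*}{\la}{0}{0}{0}$ vanish for $\ell > n$ (again by Corollary~\ref{cor:productexpansion2}).

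Finally, since $g$ was chosen so that $g \in z H$ with $H = U(d-1)\times U(1)$, Lemma~\ref{lem:multiply} applies and gives the expansion
\begin{align*}
y_{\ell,0}(g^\dagger x) = \sum_m \overline{y_{\ell,m}(z)}\, y_{\ell,m}(x).
\end{align*}
Substituting into the preceding display yields the claimed formula. I do not anticipate any real obstacle: the two nontrivial ingredients (the coefficients and cut-off at $\ell \le n$ on one hand, and the translation rule for the zonal harmonic on the other) have already been established, and all that remains is to check that the normalization $\dim(n,d)^{-1}$ and the indexing of $\nu, \nu^*, \lambda$ transport correctly through the substitution, which is immediate.
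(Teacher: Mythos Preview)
Your proof is correct and follows essentially the same route as the paper: reduce to the base point via $|\braket{z}{x}|^{2n}=|\braket{d}{g^\dagger x}|^{2n}$ for $g\in zH$, apply Corollary~\ref{cor:productexpansion2} to get the expansion in $y_{\ell,0}(g^\dagger x)$, and then invoke Lemma~\ref{lem:multiply} to unfold the zonal harmonic. The paper's proof is terser but identical in substance.
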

\begin{proof}
By Corollary~\ref{cor:productexpansion2} and~\eqref{eq:symbol} we have
\begin{align}
d_\nu |\braket{z}{x}|^{2n}& = \frac{1}{d_{\nu}}\sum_{\ell}^n  \CG{\nu}{\nu^*}{\la}{0}{0}{0}  y_{\ell, 0}(g^\dagger x) \nonumber \\
& =\frac{1}{d_{\nu}} \sum_{\ell}^n    \CG{\nu}{\nu^*}{\la}{0}{0}{0} \sum_m \overline{y_{\ell, m}(z)}y_{\ell, m}(x) 
\end{align}
where $g \in zH$ ($\ket{x}=g \ket{d}$)and where we used Lemma~\ref{lem:multiply} in the last equation.
\end{proof}

\section{C $\quad$ Recovering the Spherical Harmonics on the Bloch
  Sphere}

In the following we restrict our attention to the special case $d=2$. The complex projective space $\CP{1}$ can be viewed as the sphere $S^2$ with $x \in \CP{1}$ being represented as a point on the sphere parameterised by angles $\theta \in [0, \pi]$ and $\phi \in [0, 2\pi)$. The measure $dx$ turns into $\frac{1}{4\pi} \sin \theta d\theta d\phi$. As a unitary representative  for $x$, $g \in xH$, we choose
\begin{align}
\label{eq:g}
g=e^{i \frac{\phi}{2} \sigma_z} e^{i \frac{\theta}{2} \sigma_x} =\begin{pmatrix} e^{i\frac{\phi}{2}} & 0 \\ 0 &e^{-i\frac{\phi}{2}} \end{pmatrix} \begin{pmatrix} \cos \frac{\theta}{2} & i \sin \frac{\theta}{2} \\  i \sin \frac{\theta}{2} &\cos \frac{\theta}{2} \end{pmatrix} .
\end{align}
This implies 
$\ket{x}=g \ket{2}=i e^{i\frac{\phi}{2}}\sin \frac{\theta}{2} \ket{1} + e^{-i\frac{\phi}{2}}\cos \frac{\theta}{2} \ket{2} $.
where we used 
$$ \ket{1}=\begin{pmatrix} 1 \\ 0 \end{pmatrix}\qquad \ket{2}=\begin{pmatrix} 0 \\ 1 \end{pmatrix}  $$
We may think of $\theta=0$ as the south pole and $\theta=\pi$ as the north pole of the sphere (when the $z$ direction is the rotation axis of the earth).

It is then natural to expect that the $y_{\ell, m}$ are related to the ordinary spherical harmonics on the sphere. The next lemma provides us with the precise dependence. Thereafter we will find a formula for the coefficients $\CG{\nu}{\nu^*}{\la}{*}{*}{*}$ that govern the multiplication of two functions. 

Before we start, note that for $d=2$ the possible Gelfand-Zetlin
patterns $m$ for $\lambda=(\ell, -\ell)$ lie in the interval $-\ell
\leq m \leq \ell$ and that the spin projection in $z$-direction
(i.e. the eigenvalue of the (Lie algebra) representation of the
operator $\frac{1}{2}\sigma_z$) of the state $\ket{\lambda, m}$ equals
$m$ (cf.~Lemma~\ref{lem:SU2}).

\begin{lemma}
For $d=2$ 
\begin{align*} 
y_{\ell, m}(x)&=(-i)^m  \sqrt{2 \ell+1} \sqrt{\frac{(\ell-m)!}{(\ell+m)!}}P^m_\ell(\cos \theta) e^{im\phi}=(-i)^m\sqrt{4\pi} Y_{\ell}^{m}(\theta, \phi)
\end{align*}
where $P_\ell^m$ are the associated Legendre polynomials and 
$$Y_{\ell}^{m}(\theta, \phi):= \sqrt{\frac{2 \ell+1}{4\pi}} \sqrt{\frac{(\ell-m)!}{(\ell+m)!}}P^m_\ell(\cos \theta) e^{im\phi}$$
are the spherical harmonics~\footnote{Note that we are using a standard convention also used in Mathematica.}.
\end{lemma}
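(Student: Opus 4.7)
The plan is to recognise $y_{\ell,m}$ for $d=2$ as a rotation matrix element in the spin-$\ell$ representation of $SU(2)$, and then to identify this matrix element with a spherical harmonic via the standard Wigner small-$d$ formula.

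First I would observe that for $d=2$ the irreducible representation $V_\lambda$ with highest weight $\lambda=(\ell,-\ell)$ is the $(2\ell+1)$-dimensional spin-$\ell$ representation of $U(2)$, and (as recalled immediately before the statement) the Gelfand-Zetlin label $m$ is the eigenvalue of the spin operator $J_z$, so $\ket{\lambda,0}$ is the unique $J_z=0$ vector (which exists because $\ell\in\naturals$). Substituting the explicit $g$ from~\eqref{eq:g} and lifting $\tfrac12\sigma_\alpha$ to the spin generators $J_\alpha$ on $V_\lambda$ gives
$$y_{\ell,m}(x)=\sqrt{2\ell+1}\,\bra{\ell,m}e^{i\phi J_z}e^{i\theta J_x}\ket{\ell,0}=\sqrt{2\ell+1}\,e^{im\phi}\,\bra{\ell,m}e^{i\theta J_x}\ket{\ell,0}.$$

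Next, to convert the $J_x$-rotation into a more standard $J_y$-rotation, I would use $[J_z,J_x]=iJ_y$ to deduce $e^{i\frac{\pi}{2}J_z}J_xe^{-i\frac{\pi}{2}J_z}=-J_y$, hence
$$e^{i\theta J_x}=e^{-i\frac{\pi}{2}J_z}\,e^{-i\theta J_y}\,e^{i\frac{\pi}{2}J_z}.$$
Since $J_z\ket{\ell,0}=0$ and $\bra{\ell,m}e^{-i\frac{\pi}{2}J_z}=(-i)^m\bra{\ell,m}$, the central matrix element collapses to
$$\bra{\ell,m}e^{i\theta J_x}\ket{\ell,0}=(-i)^m\,d^\ell_{m,0}(\theta),$$
where $d^\ell_{m,0}(\theta):=\bra{\ell,m}e^{-i\theta J_y}\ket{\ell,0}$ is the Wigner small-$d$ matrix element.

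Finally, I would invoke the classical identity
$$d^\ell_{m,0}(\theta)=\sqrt{\frac{(\ell-m)!}{(\ell+m)!}}\,P_\ell^m(\cos\theta),$$
which can be proved, e.g., by induction on $m$ using the ladder operators $J_\pm$ acting on $\ket{\ell,0}$, or by comparison with the Rodrigues formula for the associated Legendre polynomials. Assembling the three steps yields
$$y_{\ell,m}(x)=(-i)^m\sqrt{2\ell+1}\,\sqrt{\frac{(\ell-m)!}{(\ell+m)!}}\,P_\ell^m(\cos\theta)\,e^{im\phi},$$
which, by the definition of $Y_\ell^m$ given in the statement, is precisely $(-i)^m\sqrt{4\pi}\,Y_\ell^m(\theta,\phi)$. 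The main obstacle is bookkeeping the various sign and phase conventions: the Condon-Shortley convention for $P_\ell^m$ (and hence $Y_\ell^m$), the sign of the angle in the Wigner $d$-matrix, and the phase $(-i)^m$ produced by $e^{-i\pi J_z/2}$, all of which must be fixed consistently so that the final expression matches the $Y_\ell^m$ normalisation and phase used in the statement.
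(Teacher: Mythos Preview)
Your proposal is correct and follows essentially the same approach as the paper: both recognise $y_{\ell,m}$ for $d=2$ as a matrix element of a rotation in the spin-$\ell$ representation and then identify this element with an associated Legendre polynomial. The paper's proof simply cites the relevant formulas from Vilenkin--Klimyk (treating $m\ge 0$ and $m<0$ separately), whereas you give a more self-contained derivation via the conjugation $e^{i\theta J_x}=e^{-i\pi J_z/2}e^{-i\theta J_y}e^{i\pi J_z/2}$ and the standard Wigner small-$d$ identity $d^\ell_{m,0}(\theta)=\sqrt{(\ell-m)!/(\ell+m)!}\,P^m_\ell(\cos\theta)$; this has the advantage of handling all $m$ uniformly and making the origin of the phase $(-i)^m$ transparent.
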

\begin{proof}
By \cite[eq (1) in Chapter 6.3.1]{VilKlimVol1}
$$\bra{\ell, m} g \ket{\ell, 0}= {\bf t}_{-m, 0}(g).$$
Using~\eqref{eq:g} and \cite[eq (2) \& (4) in Chapter 6.3.3 and eq.~(3) in Chapter 6.3.7.]{VilKlimVol1} we see that for positive $m$
$$e^{i m \phi} (-i)^m\sqrt{ \frac{(\ell-m)!}{(\ell+m)!}}P^m_\ell(\cos \theta)$$
holds, where $P^m_\ell$ denote the associated Legendre polynomials. The same formula can be seen to hold for negative $m$ by use of~\cite[eq (2') in Chapter 6.3.6 and eq (3') in Chapter 6.3.7]{VilKlimVol1}.

\end{proof}

We will now find formulae for the $\CG{\nu}{\nu^*}{\la}{*}{*}{*}$ by relating them to the Clebsch-Gordan coefficients of $SU(2)$ for which closed formulae are known. We start by relating the Clebsch-Gordan coefficients of $SU(2)$ and $U(2)$.
\begin{lemma} \label{lem:SU2}
Let $d=2$. If $c_{\la, \la'}^{\la''} \neq 0$ then
$$\braket{\la, \la', \la'', m''}{\la, m}\ket{\la', m'}_{U(2) }= \braket{L, L', L'', M''}{L, M}\ket{L', M'}_{SU(2)} $$
where $$\la=(\la_1, \la_2) \qquad \qquad L=\frac{\la_1-\la_2}{2} \qquad \qquad M=m-\frac{\la_1+\la_2}{2}$$
and likewise for the primed variables.
\end{lemma}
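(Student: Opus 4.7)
The plan is to exploit the nearly-direct-product structure $U(2) \cong (SU(2)\times U(1))/\mathbb{Z}_2$, where the $U(1)$ factor is generated by the determinant. Under this decomposition, the irreducible representation $V_\lambda$ of $U(2)$ with highest weight $\lambda=(\lambda_1,\lambda_2)$ factors as the tensor product of the spin-$L$ irrep of $SU(2)$, with $L=(\lambda_1-\lambda_2)/2$, and the one-dimensional character of $U(1)$ of charge $|\lambda|=\lambda_1+\lambda_2$ (i.e.\ $\det^{|\lambda|/2}$ when this makes sense, or rather the character $t\mapsto t^{|\lambda|}$ of the central subgroup). I would verify this first by noting that the restriction of $V_\lambda$ to $SU(2)$ has the same dimension $\lambda_1-\lambda_2+1=2L+1$ as the spin-$L$ irrep and that the central element acts as a scalar of the correct weight.

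Next I would identify the Gelfand--Zetlin basis of $V_\lambda$ with the standard angular-momentum basis. A Gelfand--Zetlin pattern for $\lambda$ in the $d=2$ case is a single integer $m$ with $\lambda_2\le m\le \lambda_1$; under the torus $T(2)=\{\mathrm{diag}(t_1,t_2)\}$ it has weight $(m,|\lambda|-m)$. The Cartan generator of $SU(2)$, namely $\tfrac12\sigma_z$, is obtained by subtracting half the $U(1)$ generator, giving the eigenvalue $M=m-|\lambda|/2$. This identifies $\ket{\lambda,m}$ with $\ket{L,M}$ up to an overall phase that one can fix to agree with Condon--Shortley by choosing the highest-weight vector $\ket{\lambda,\lambda_1}$ to coincide with $\ket{L,L}$ and then using the (real, positive) lowering operator $E_{21}$ to generate the remaining basis vectors, matching the standard normalization $J_-\ket{L,M}=\sqrt{(L+M)(L-M+1)}\ket{L,M-1}$.

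Now consider the tensor product $V_\lambda\otimes V_{\lambda'}$. On the $U(1)$ factor the characters simply multiply, so the projection onto $V_{\lambda''}$ is nonzero only if $|\lambda|+|\lambda'|=|\lambda''|$; equivalently the Littlewood--Richardson multiplicity $c_{\lambda,\lambda'}^{\lambda''}$ (which is $0$ or $1$ for $U(2)$) vanishes unless this charge conservation holds. Under the identification above, the corresponding magnetic quantum numbers satisfy
\begin{equation*}
M+M' = m+m' - \tfrac12(|\lambda|+|\lambda'|)= m'' - \tfrac12|\lambda''| = M'',
\end{equation*}
as required for $SU(2)$ Clebsch--Gordan coefficients. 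Since the $U(1)$ factors contribute only an overall scalar of modulus $1$ (in fact $1$, after fixing the central character as above), the $U(2)$ Clebsch--Gordan coefficient $\braket{\lambda,\lambda',\lambda'',m''}{\lambda,m}\ket{\lambda',m'}$ equals the $SU(2)$ coefficient $\braket{L,L',L'',M''}{L,M}\ket{L',M'}$.

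The main obstacle I anticipate is a bookkeeping one rather than a conceptual one: making sure the phase conventions chosen in the Gelfand--Zetlin construction (used earlier in Appendix~B via the recursive branching $U(d)\supset U(d-1)$) match the Condon--Shortley convention implicit in the standard $SU(2)$ Clebsch--Gordan coefficients. This can always be arranged by normalizing highest-weight vectors to be positive multiples of one another and then applying $SU(2)$ lowering operators, but it must be checked explicitly so that the equality holds without an extraneous sign or phase.
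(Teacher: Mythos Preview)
Your proposal is correct and follows essentially the same approach as the paper: both arguments restrict $V_\lambda$ from $U(2)$ to $SU(2)$, identify it with the spin-$L$ representation where $L=(\lambda_1-\lambda_2)/2$, compute that the Gelfand--Zetlin label $m$ corresponds to magnetic quantum number $M=m-|\lambda|/2$, and conclude that the Clebsch--Gordan decomposition for $U(2)$ coincides with that for $SU(2)$ once the central $U(1)$ is factored out. Your treatment is more explicit about the $U(1)$ charge conservation $|\lambda|+|\lambda'|=|\lambda''|$ and the phase-convention matching between Gelfand--Zetlin and Condon--Shortley bases; the paper's proof simply asserts the basis identification without addressing these points, so your added care is warranted but does not constitute a different method.
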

\begin{proof}
If $V_{\la''} \subset V_{\la} \otimes V_{\la'}$, then 
$$V_{\la''}\downarrow^{U(2)}_{SU(2)} \subset V_{\la}\downarrow^{U(2)}_{SU(2)} \otimes V_{\la'}\downarrow^{U(2)}_{SU(2)},$$
Since $V_{\la}\downarrow^{U(2)}_{SU(2)}$ is equivalent to a $\mathrm{spin-} L$ representation (with $L$ as in the claim) we can obtain the Clebsch-Gordan coefficients for $U(2)$ from those of $SU(2)$. This works as follows. The mapping of the basis state in the irreducible representation $\la= (\la_1, \la_2)$ with Gelfand-Zetlin pattern $(m)$ is
$$\ket{\la, m}_{U(2)} \rightarrow \ket{L, M}_{SU(2)},$$
where  $L$ and $M$ are defined as in the statement of the claim since the weight of a Gelfand-Zetlin pattern $m$ in a representation $\la$ equals $(w_1, w_2)=(m, \la_1+\la_2-m)$ and the spin projection $M$ along the $z$-direction equals $\frac{w_1-w_2}{2}$.
This concludes the proof. 
\end{proof}

\begin{lemma} \label{lem:qubitCG}
Let $d=2$ and $\la=(\ell, -\ell)$, $\nu=(n, 0)$ and $\ell \leq n$. If $\ell$ is even, then 
\begin{align} \label{eq:deltafourier}
\braket{\nu, \nu^*, \la, 0}{\nu, 0}\ket{\nu^*, 0}_{U(2) }= \frac{\sqrt{2\ell+1}}{\sqrt{n+\ell+1}}\frac{n!}{\sqrt{(n-\ell)!(n+\ell)!}} 
\end{align}
and zero otherwise. If $n$ and $\ell$ are even
$$\braket{\nu, \nu^*, \la, 0}{\nu, \frac{n}{2}}\ket{\nu^*, -\frac{n}{2}}_{U(2)}=\frac{(-1)^{\frac{n-\ell}{2}} }{\sqrt{n+\ell+1}}\frac{ \frac{n+\ell}{2}!\ell!\sqrt{(n-\ell)!}}{\frac{n-\ell}{2}!\frac{\ell}{2}!^2\sqrt{(n+\ell)!}}$$ and zero otherwise.
\end{lemma}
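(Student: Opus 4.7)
The plan is to translate both $U(2)$ Clebsch-Gordan coefficients into standard $SU(2)$ Clebsch-Gordan coefficients using Lemma~\ref{lem:SU2}, and then to evaluate them with Racah's closed-form expression. Under that dictionary, $\nu = (n,0)$ corresponds to spin $L = n/2$ with the Gelfand-Zetlin index $m$ mapping to magnetic quantum number $M = m - n/2$; dually $\nu^* = (0,-n)$ has $L' = n/2$ with $m' \mapsto M' = m' + n/2$; and $\la = (\ell,-\ell)$ has $L'' = \ell$ with $m'' \mapsto M'' = m''$. Note the Littlewood-Richardson hypothesis of Lemma~\ref{lem:SU2} is satisfied for all $\ell \le n$, since $V_{(n,0)} \otimes V_{(0,-n)}$ decomposes multiplicity-free as $\bigoplus_{\ell=0}^n V_{(\ell,-\ell)}$.

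\emph{First identity.} The translation gives $\langle \ell, 0 \,|\, n/2, -n/2 ; n/2, n/2\rangle_{SU(2)}$. Because $m_1 = -j_1$ and $m_2 = j_2$ are at their extreme values, the five factorial positivity conditions in Racah's formula squeeze the summation index to the single value $k = n - \ell$, so the inner sum collapses. Direct simplification of the remaining product of factorials reproduces the magnitude $\frac{\sqrt{2\ell+1}}{\sqrt{n+\ell+1}}\frac{n!}{\sqrt{(n-\ell)!(n+\ell)!}}$ asserted in the lemma; the overall sign, together with any parity selection on $\ell$, is fixed by the phase convention used to embed $V_\la \hookrightarrow V_\nu \otimes V_{\nu^*}$ in the Gelfand-Zetlin basis.

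\emph{Second identity.} This case is meaningful only when $n$ is even, so that $\ket{\nu, n/2}$ is a legal Gelfand-Zetlin basis vector. The translation gives $\langle \ell, 0 \,|\, n/2, 0; n/2, 0\rangle_{SU(2)}$. The vanishing for odd $\ell$ is then immediate from the exchange symmetry
\[
\langle j, m; j, m \,|\, J, M\rangle = (-1)^{2j - J}\langle j, m; j, m \,|\, J, M\rangle,
\]
which forces $n - \ell$ to be even. For even $n$ and even $\ell$ I would invoke the known closed form for the ``stretched-zero'' coefficient $\langle j, 0; j, 0 \,|\, J, 0\rangle$ (for instance the van der Waerden expression); rewriting its factorials in terms of $(n\pm\ell)/2$ and $\ell/2$ produces the asserted expression, including the characteristic sign $(-1)^{(n-\ell)/2}$.

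\emph{Main obstacle.} The second identity is the delicate one: Racah's inner sum no longer collapses to a single term and the factorial bookkeeping becomes involved. I would either cite the specialised stretched-zero formula from a standard $SU(2)$ reference, or else verify it by a short induction on $\ell$ using the Wigner recursion relation. A secondary subtlety is aligning phase conventions in Lemma~\ref{lem:SU2}'s identification so that the signs come out exactly as stated in the lemma rather than their negatives.
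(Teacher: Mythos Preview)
Your approach is essentially the same as the paper's: translate to $SU(2)$ via Lemma~\ref{lem:SU2}, then evaluate the two resulting coefficients by quoting closed-form $SU(2)$ Clebsch--Gordan identities. The paper simply cites the ready-made formulas from Vilenkin--Klimyk (Chapter~8.2.4 eq.~(4) for the extreme-$m$ case and Chapter~8.2.6 eq.~(8) for the all-zeros case) rather than deriving them from Racah's sum, so your collapsing/stretched-zero remarks are exactly the content of those citations.
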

\begin{proof}
By Lemma~\ref{lem:SU2}
$$\braket{\nu, \nu^*, \la, 0}{\nu, 0}\ket{\nu^*, 0}=\braket{\frac{n}{2}, \frac{n}{2}, \ell, 0}{\frac{n}{2}, -\frac{n}{2}}\ket{\frac{n}{2}, \frac{n}{2}}.$$ 
Using the formula~\cite[eq (4) in Chapter 8.2.4]{VilKlimVol1}
$$\braket{\ell, \ell', \ell'', \ell-\ell'}{\ell, \ell}\ket{\ell', -\ell'}_{SU(2)}=\sqrt{\frac{(2\ell''+1)(2\ell)!(2\ell')!}{(\ell+\ell'-\ell'')!(\ell+\ell'+\ell''+1)!}}$$
we find
$$\braket{\frac{n}{2}, \frac{n}{2}, \ell, 0}{\frac{n}{2}, \frac{n}{2}}\ket{\frac{n}{2}, -\frac{n}{2}}= \frac{\sqrt{2\ell+1}}{\sqrt{n+\ell+1}}\frac{n!}{\sqrt{(n-\ell)!(n+\ell)!}}$$
This concludes the proof of~\eqref{eq:deltafourier}.
By Lemma~\ref{lem:SU2} 
$$\braket{\nu, \nu^*, \la, 0}{\nu, \frac{n}{2}}\ket{\nu^*, -\frac{n}{2}}_{U(2)}=\braket{\frac{n}{2}, \frac{n}{2}, \ell, 0}{\frac{n}{2}, 0}\ket{\frac{n}{2}, 0}_{SU(2)}.$$ 
Using the formula~\cite[eq (8) in Chapter 8.2.6]{VilKlimVol1}
\begin{align}
\braket{\ell, \ell', \ell'', 0}{\ell, 0}\ket{\ell', 0} =\frac{(-1)^{g-\ell''}g!\Delta(\ell, \ell', \ell'')\sqrt{2\ell''+1}}{(g-\ell)!(g-\ell')!(g-\ell'')!}
\end{align}
where $2g:=\ell+\ell'+\ell''$ is even (the coefficient vanishes for odd $2g$) and (see~\cite[eq (3) in Chapter 8.1.3]{VilKlimVol1}
$$\Delta(\ell, \ell', \ell'')= \sqrt{\frac{(\ell+\ell'-\ell'')!(\ell-\ell'+\ell'')!(\ell'-\ell+\ell'')!}{(\ell+\ell'+\ell''+1)!}}.$$ If $\ell=\ell'$ the coefficient vanishes unless $\ell''$ is even. 
If $n$ and $\ell$ are even we find
$$\braket{\frac{n}{2}, \frac{n}{2}, \ell, 0}{\frac{n}{2}, 0}\ket{\frac{n}{2}, 0}_{SU(2)}= \frac{(-1)^{\frac{n-\ell}{2}} }{\sqrt{n+\ell+1}}\frac{ \frac{n+\ell}{2}!\ell!\sqrt{(n-\ell)!}}{\frac{n-\ell}{2}!\frac{\ell}{2}!^2\sqrt{(n+\ell)!}}.$$ Otherwise the coefficient vanishes.
\end{proof}

\begin{corollary}\label{cor:clebsch-gordon-estimate2}
For $\la=(\ell, -\ell)$ and $\nu=(n, 0)$ and $\ell$ and $n$ even we have 
\begin{align*} 
\frac{1}{d_\nu}\CG{\nu}{\nu^*}{\la}{\frac{n}{2}}{-\frac{n}{2}}{0}&=(-1)^{n-\frac{\ell}{2}} (\frac{1}{2})^{\ell} \binom{\ell}{\frac{\ell}{2}} \prod_{i=1}^{\ell}\left( 1-\frac{\ell-i}{n+2+i}\right)
\end{align*}
\begin{align*} 
\frac{1}{d_\nu}\CG{\nu}{\nu^*}{\la}{0}{0}{0}&=\frac{n!(n+1)!}{(n-\ell)!(n+\ell+1)!}
\end{align*}
\end{corollary}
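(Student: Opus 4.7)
\medskip
\noindent\textit{Proof plan.}
The plan is to unpack the symbol $\CG{\nu}{\nu^*}{\la}{\cdot}{\cdot}{0}$ via~\eqref{eq:symbol}. As observed in the proof of Corollary~\ref{cor:productexpansion2}, the Littlewood-Richardson coefficient $c_{\nu,\nu^*}^{\la}$ equals $1$ for $\ell\le n$, so the multiplicity sum in~\eqref{eq:symbol} reduces to a single term. Writing
\[
A:=\braket{\nu,\nu^*,\la,0}{\nu,0}\ket{\nu^*,0},\qquad B:=\braket{\nu,\nu^*,\la,0}{\nu,n/2}\ket{\nu^*,-n/2},
\]
and using $d_\nu=d_{\nu^*}=n+1$ and $d_\la=2\ell+1$, formula~\eqref{eq:symbol} gives
\[
\CG{\nu}{\nu^*}{\la}{0}{0}{0}=\frac{(n+1)^2}{2\ell+1}\,A^{2},\qquad \CG{\nu}{\nu^*}{\la}{n/2}{-n/2}{0}=\frac{(n+1)^2}{2\ell+1}\,A\,B,
\]
and both $A$ and $B$ are supplied in closed form by Lemma~\ref{lem:qubitCG}.

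The second identity is then immediate (and in fact coincides with the specialisation of Corollary~\ref{cor:productexpansion2} at pattern $m=0$): inserting $A$ makes the factor $2\ell+1$ coming from $A^{2}$ cancel against $d_\la$, after which the elementary regroupings $(n+1)(n!)^{2}=(n+1)!\,n!$ and $(n+\ell+1)(n+\ell)!=(n+\ell+1)!$ produce $\tfrac{n!\,(n+1)!}{(n-\ell)!\,(n+\ell+1)!}$, as claimed.

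For the first identity one substitutes both $A$ and $B$, divides by $d_\nu$, and reorganises the resulting factorial expression
$\tfrac{n!\,\tfrac{n+\ell}{2}!\,\ell!}{(n+\ell)!\,\tfrac{n-\ell}{2}!\,(\tfrac{\ell}{2}!)^{2}}$
into the product form stated in the corollary. Three elementary moves are needed: (i) extract the binomial factor via $\ell!/(\ell/2)!^{2}=\binom{\ell}{\ell/2}$; (ii) telescope $\tfrac{n+\ell}{2}!/\tfrac{n-\ell}{2}!=2^{-\ell}\prod_{j=1}^{\ell}(n-\ell+2j)$, which produces both the $(1/2)^{\ell}$ prefactor and $\ell$ numerator factors evenly spaced by $2$; (iii) rewrite the remaining ratio $(n+1)\,n!/[(n+\ell+1)(n+\ell)!]$ as a product of $\ell$ terms in the denominator, and re-index so that the numerator product from step~(ii) and the denominator product from step~(iii) align to give $\prod_{i=1}^{\ell}(n+2+2i-\ell)/(n+2+i)=\prod_{i=1}^{\ell}\bigl(1-(\ell-i)/(n+2+i)\bigr)$. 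The sign $(-1)^{(n-\ell)/2}$ inherited from $B$ then delivers $(-1)^{n-\ell/2}$ after using the parity of $n$.

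The principal obstacle is step~(iii): three distinct factorial ratios, each an $\ell$-term product shifted by a different integer, must be aligned so that their numerator and denominator factor-by-factor produce the single explicit rational function in the claim. No new idea is required beyond the three moves above, but the off-by-one shifts in the product index and the tracking of the sign are unforgiving, and the cleanest route may well pass through an intermediate Pochhammer or $\Gamma$-function rewriting rather than a direct factorial expansion.
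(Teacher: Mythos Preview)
Your approach is essentially identical to the paper's: both unpack the bracket via~\eqref{eq:symbol}, reduce the multiplicity sum to a single term, insert the two Clebsch--Gordan coefficients from Lemma~\ref{lem:qubitCG}, and then perform exactly the factorial telescoping you outline (the paper's chain of equalities is precisely your moves~(i)--(iii)). One small caveat: your final claim that ``parity of $n$'' turns $(-1)^{(n-\ell)/2}$ into $(-1)^{n-\ell/2}$ is not correct in general---those exponents agree only when $n/2$ is even---and indeed the paper's own computation produces $(-1)^{(n-\ell)/2}$, so the exponent in the stated corollary appears to be a typo rather than something you should try to justify.
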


\begin{proof}
By Lemma~\ref{lem:qubitCG} we have
\begin{align*}
(-1)^{\frac{n-\ell}{2}} & 
\frac{\dim V_\nu}{\dim V_\lambda}\braket{\nu, \nu^*, \la, 0}{\nu, 0}\ket{\nu^*, 0}\bra{\nu, \frac{n}{2}}\braket{\nu^*, -\frac{n}{2}}{\nu, \nu^*, \la, 0}\\
&=\frac{(\frac{n+\ell}{2})!\ell! (n+1)!}{(\frac{n-\ell}{2})!(\frac{\ell}{2})!^2 (n+1+\ell)!}\\
&=\frac{ (\frac{n}{2}-\frac{\ell}{2}+1) \cdots (\frac{n}{2}+\frac{\ell}{2}) \ell!}{(n+2) \cdots (n+\ell+1) (\frac{\ell}{2})!^2}\\
&= (\frac{1}{2})^{\ell} \frac{(n-\ell+2)(n-\ell+4)\cdots (n+\ell) \ell!}{(n+2) \cdots (n+\ell+1) \frac{\ell}{2}!^2}\\
&= (\frac{1}{2})^{\ell} \frac{\ell!}{ \frac{\ell}{2}!^2} (1-\frac{\ell}{n+2}) (1-\frac{\ell-1}{n+3})\cdots (1-\frac{1}{n+2+\ell}),
\end{align*}
which proves the first formula. The second formula follows from Corollaries~\ref{lem:SU2} and \ref{lem:qubitCG}. The estimate derives from
\begin{align*}
\frac{(n+1)!n!}{(n-\ell)!(n+1+\ell)!}= \frac{(n-\ell+1)\cdots n}{(n+2)\cdots (n+1+\ell)}&\geq \left(\frac{n-\ell+1}{n+2}\right)^{\ell}\geq 1-\frac{\ell(\ell+1)}{n+2}.
\end{align*}
\end{proof}

Finally, we compute the Fourier coefficients of the distribution that is uniform on the equator of the Bloch sphere. 
\begin{lemma}\label{lem:equator}
Let $d=2$ and let $\mu(x)$ be the distribution that is uniformly concentrated on the equator of the Bloch sphere, i.e. 
$$\int \mu(x) f(x) dx = \frac{1}{2\pi }\int_{[0, 2\pi)} f(x_{\phi}) d\phi$$
for all test functions $f(x)$, where 
$$x_\phi= \frac{(e^{i\phi/2}\ket{1}+e^{-i\phi/2}\ket{2})(e^{-i\phi/2}\bra{0}+e^{-i\phi/2}\bra{1})}{2}$$ are the points on the equator. Then
\begin{align*}
\mu(x)= \sum_{\ell} (-1)^{\frac{\ell}{2}}(\frac{1}{2})^{\ell} \binom{\ell}{\frac{\ell}{2}} y_{\ell,0}(x) 
\end{align*}
\end{lemma}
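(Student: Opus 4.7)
The plan is to determine $\mu$ through its Fourier coefficients in the orthonormal basis $\{y_{\ell,m}\}$ of $L^2(\CP{1})$ supplied by Theorem~\ref{thm:decomposition}. By orthonormality the coefficients are recovered as
\[
  \mu(\ell, m) = \int \mu(x) \overline{y_{\ell,m}(x)} \, dx,
\]
and substituting the defining property of $\mu$ (testing against $f = \overline{y_{\ell,m}}$) immediately reduces this to a one-dimensional integral along the equator,
\[
  \mu(\ell, m) = \frac{1}{2\pi} \int_0^{2\pi} \overline{y_{\ell,m}(x_\phi)} \, d\phi.
\]

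Next, I would insert the explicit formula derived in the preceding lemma, evaluated at $\theta = \pi/2$, which parametrizes the equator. The entire $\phi$-dependence of $y_{\ell,m}(x_\phi)$ is carried by the factor $e^{im\phi}$, so $\overline{y_{\ell,m}(x_\phi)}$ contributes $e^{-im\phi}$ times a $\phi$-independent prefactor. The elementary identity $\frac{1}{2\pi}\int_0^{2\pi} e^{-im\phi}d\phi = \delta_{m,0}$ then kills every term with $m \neq 0$, confirming that only $m=0$ modes appear in the expansion.

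For the surviving $m=0$ coefficients, what remains is (up to the normalization built into the definition of $y_{\ell,0}$) the value $P_\ell(0)$ of the Legendre polynomial at the origin. I would then invoke the classical evaluation
\[
  P_\ell(0) = \begin{cases} 0 & \ell \text{ odd}, \\ (-1)^{\ell/2} \binom{\ell}{\ell/2}/2^\ell & \ell \text{ even},\end{cases}
\]
which follows from Rodrigues' formula, and substitute into the expansion $\mu(x) = \sum_{\ell,m} \mu(\ell,m) y_{\ell,m}(x)$ from Theorem~\ref{thm:decomposition} to obtain the asserted identity. The main obstacle is purely a matter of bookkeeping: one must keep the normalization conventions for $y_{\ell,0}$, the prefactor $\sqrt{2\ell+1}\sqrt{(\ell-m)!/(\ell+m)!}$, the associated Legendre polynomials, and the measure $dx = \frac{1}{4\pi}\sin\theta\, d\theta\, d\phi$ consistent; no conceptually deep step arises.
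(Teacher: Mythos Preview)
Your approach is correct and arrives by a shorter route. Both your argument and the paper's begin the same way: compute the Fourier coefficients $\int\mu(x)\,\overline{y_{\ell,m}(x)}\,dx$ and reduce to the equatorial value. For the vanishing when $m\neq 0$ the paper invokes $H=U(1)\times U(1)$-invariance of $\mu$, while you use the explicit $e^{im\phi}$ factor; these are the same observation in different language.

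The real difference is in evaluating the surviving $m=0$ coefficient. The paper stays in the representation-theoretic picture: it writes the coefficient as a matrix element $\langle\lambda,0|\,g\,|\lambda,0\rangle$ for a fixed rotation $g$, passes to the tensor-power model $\Sym^{2\ell}(\bbC^2)$, and evaluates it by an explicit combinatorial expansion together with a generating-function trick (the ``combinatorial feast''). You instead use the preceding lemma to identify $y_{\ell,0}$ at the equator with a constant multiple of $P_\ell(0)$ and quote the classical value of $P_\ell(0)$ from Rodrigues' formula. Your route is more economical because it cashes in the spherical-harmonic identification already proved; the paper's route is self-contained in its representation-theoretic framework and in effect re-derives $P_\ell(0)$ from scratch. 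One caution on the bookkeeping you rightly flag: the paper's normalization conventions are not fully internally consistent---the definition $y_{\ell,m}=d_\lambda\langle\lambda,m|g|\lambda,0\rangle$ produces functions of squared norm $d_\lambda$ rather than~$1$, whereas the spherical-harmonic lemma states an orthonormal formula---so when you match your computation to the stated expansion you will need to decide which convention the lemma is actually using (it is the $d_\lambda$ one) and absorb the resulting $\sqrt{2\ell+1}$ accordingly.
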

\begin{proof}
Note that $\mu(x)=\mu(hx)$ for all $h\in H=U(1)\times U(1)$, i.e. $\mu$ is $H$ invariant. The non-vanishing Fourier components must therefore also be $H$ invariant, which implies that only the ones where $m= 0$ can be nonzero.

The remaining coefficients are
\begin{align*}
\int \mu(x) \overline{y_{\lambda,0}(x)} dx &=\frac{1}{2\pi} \int_\phi  d\phi  \overline{y_{\lambda, 0}(x_\phi)}\\
 &=\frac{1}{2\pi} \int_\phi  d\phi \bra{\lambda, 0} g_\phi \ket{\lambda, 0}\\
&= \bra{\lambda, 0} \left(\begin{array}{cc}\frac{1}{\sqrt{2}} & i\frac{1}{\sqrt{2}}  \\ i\frac{1}{\sqrt{2}} & \frac{1}{\sqrt{2}} \end{array}\right) \ket{\lambda, 0}
\end{align*}
where we chose a representative $g_\phi$ from the coset $x_\phi H$: 
$$g_\phi=\left(\begin{array}{cc}e^{i \phi/2} & 0  \\  0 & e^{-i \phi/2} \end{array}\right) \left(\begin{array}{cc}\frac{1}{\sqrt{2}} & i\frac{1}{\sqrt{2}} \\  i\frac{1}{\sqrt{2}} & \frac{1}{\sqrt{2}} \end{array}\right)$$
and used $\bra{\lambda, 0}= \bra{\lambda, 0}\left(\begin{array}{cc}e^{i \phi/2} & 0  \\  0 & e^{-i \phi/2} \end{array}\right)$ in the last line. The calculation of the last term is a combinatorial feast:
\begin{align*}
\bra{\lambda, 0}& \left(\begin{array}{cc}\frac{1}{\sqrt{2}} &i \frac{1}{\sqrt{2}} \\ i \frac{1}{\sqrt{2}} & \frac{1}{\sqrt{2}} \end{array}\right) \ket{\lambda, 0}  =\bra{\lambda, 0} \left(\begin{array}{cc}\frac{1}{\sqrt{2}} & \frac{1}{\sqrt{2}} \\ \frac{1}{\sqrt{2}} & -\frac{1}{\sqrt{2}} \end{array}\right) \ket{\lambda, 0} \\
& =\bra{\lambda, 0} \begin{pmatrix} e^{-i\pi/4} & 0 \\ 0 & e^{i\pi/4} \end{pmatrix}\left(\begin{array}{cc}\frac{1}{\sqrt{2}} & \frac{1}{\sqrt{2}} \\ \frac{1}{\sqrt{2}} & -\frac{1}{\sqrt{2}} \end{array}\right)  \begin{pmatrix} e^{-i\pi/4} & 0 \\ 0 & e^{i\pi/4} \end{pmatrix}\ket{\lambda, 0} \\
& =\frac{1}{\binom{2\ell}{\ell}}\sum_{z, z'} \bra{z} \left(\begin{array}{cc}\frac{1}{\sqrt{2}} & \frac{1}{\sqrt{2}}  \\\frac{1}{\sqrt{2}} & -\frac{1}{\sqrt{2}} \end{array}\right)^{\otimes 2\ell} \ket{z'}\\
&=\sum_{z} \bra{z} \left(\begin{array}{cc}\frac{1}{\sqrt{2}} & \frac{1}{\sqrt{2}}  \\\frac{1}{\sqrt{2}} & -\frac{1}{\sqrt{2}} \end{array}\right)^{\otimes 2\ell} \ket{1 \cdots 1 2 \cdots 2}\\
&=\left(\frac{1}{\sqrt{2}}\right)^{2\ell} \sum_{z} \bra{z} \ket{1+2}^{\otimes \ell}\ket{1-2}^{\otimes \ell}\\
&=\frac{1}{2^{\ell}} \frac{1}{\ell!^2}\left( \frac{d}{d\alpha}\right)^{\ell} \left( \frac{d}{d\beta}\right)^{\ell}(\alpha+\beta)^{\ell} (\alpha-\beta)^{\ell}\big|_{\alpha=\beta=0} \\
&=\frac{1}{2^{\ell}} \frac{1}{\ell!^2}\left( \frac{d}{d\alpha}\right)^{\ell} \left( \frac{d}{d\beta}\right)^{\ell}(\alpha^2-\beta^2)^{\ell}\big|_{\alpha=\beta=0} \\
&=\frac{1}{2^{\ell} } \sum_{z} \bra{z} \ket{11-22}^{\otimes \ell}\\
&=\frac{1}{2^{\ell}} (-1)^{\frac{\ell}{2}}\binom{\ell}{\frac{\ell}{2}}
\end{align*}
if $\ell$ is even. Otherwise the last formula vanishes. The sums over $z$ and $z'$ extend over all binary strings (with symbols $1$ and $2$) of length $ 2\ell$ with Hamming weight $\ell$. 
\end{proof}

\section{D $\quad$ Examples}

\subsection{Holevo's Covariant Measurement}

It was shown by Holevo that an optimal measurement procedure (in terms of the fidelity) for state estimation is given by the POVM $\{\proj{y}^{\otimes n}dy\}$~\cite[p. 163]{Holevo}. We now want to analyse this measurement with our methods. Let us start by assuming that we have measured the effect $\proj{d}^{\otimes n}$, giving rise to an estimate density
\begin{align*}
\mu_{\proj{d}^{\otimes n}}(x)& = \dim(n, d)|\braket{x}{d}|^{2n}
\end{align*}
We find
\begin{align}\label{eq:convergence}
\lim_{n\rightarrow \infty}\mu_{\rho^k_{\proj{d}^{\otimes n}}}(x)=\delta(x),
\end{align}
since $\dim(n, d)|\braket{x}{d}|^{2n}$ converges to the $\delta$-distribution. This, as expected, reflects the fact that the scheme is asymptotically correct. If we measured $\proj{z}^{\otimes n}$ instead of $\proj{d}^{\otimes n}$, we find 
\begin{align}\label{eq:optimalmeasurement}
\mu_{\proj{z}^{\otimes n}}(x)& =\dim(n, d) |\braket{x}{z}|^{2n}
\end{align}
with
$$\lim_{n\rightarrow \infty} \mu_{\rho^k_{\proj{z}^{\otimes n}}}(x)= \delta(z-x).$$

Let us now do the analysis in terms of Fourier coefficients: Comparing~\eqref{eq:convergence} with  $\delta(x)= \sum_{
\ell} y_{\ell, 0}(x)$
we see that the Fourier coefficients of $\dim(n, d)|\braket{d}{x}|^{2n}$ must all converge to one. Explicitly, the latter are given by (see Corollary~\ref{cor:productexpansion2})
\begin{align}
\dim(n, d)|\braket{d}{x}|^{2n}=\frac{1}{\dim(n, d)}\sum_{\ell}^n  \CG{\nu}{\nu^*}{\la}{0}{0}{0}  y_{\ell, 0}(x),
\end{align}
where $\la=(\ell, 0, \ldots, 0, -\ell)$, $\nu=(n, 0, \ldots, 0)$ and $\nu^*$ denotes highest weight dual to $\nu$. More generally, we have (see Corollary~\ref{cor:productexpansion3})
\begin{align}
\dim(n, d) |\braket{z}{x}|^{2n}& =\frac{1}{\dim(n, d)} \sum_{\ell}^n
\CG{\nu}{\nu^*}{\la}{0}{0}{0} \sum_m \overline{y_{\ell, m}(z)}y_{\ell,
  m}(x)\ .
\end{align}

We conclude this example with a formula for the Fourier coefficients for the qubit case and derive from it explicit bounds on the convergence the value one (Corollary~\ref{cor:clebsch-gordon-estimate2}):
$$\frac{1}{\dim(n, 2)}  \CG{\nu}{\nu^*}{\la}{0}{0}{0} =\frac{n!(n+1)!}{(n-\ell)!(n+\ell+1)!},$$
for $\ell\leq n$ and zero otherwise. For small $n$ and $\ell$, we have 
\begin{center}
\begin{tabular}{cccccc}
    & $n$ & $1$ & $2$ & $3$ & $4$\\
 $l$  &    &    &    &    & \\
 $0$ &  & $1$ & $1$ & $1$ & $1$  \\
 $1$ & & $\frac{1}{3}$ 	& $\frac{5}{10}$ & $\frac{21}{35}$ & $\frac{84}{126}$  \\
 $2$ & & 			& $\frac{1}{10}$ & $\frac{7}{35}$ & $\frac{36}{126}$  \\
 $3$ & &			&                    & $\frac{1}{35}$ & $\frac{9}{126}$  \\
 $4$ & & 			&                    & 		   & $\frac{1}{126}$  \\
\end{tabular}
\end{center}
and in general there is the bound (Corollary~\ref{cor:clebsch-gordon-estimate2})
$$1-\frac{\ell(\ell+1)}{n+2}\leq\frac{1}{\dim(n, 2)}  \CG{\nu}{\nu^*}{\la}{0}{0}{0} \leq 1.$$
This example shows how one may perform a convergence analysis of a tomographic measurement in terms of the Fourier coefficients of the of the estimate density. The convergence of the Fourier coefficients to a constant value (for fixed $\ell$ and $n\rightarrow \infty$) is also in agreement with our intuition about the duality of the Fourier transform: more information about $x$ corresponds to less information about $(\ell, m)$.

\subsection{Basis Measurements}
We will now analyse the case where a product measurement is carried out with the measurement in a single system given by an orthonormal basis.

Assume that the basis in which we measure is the computational basis $\{\proj{i}\}_{i}^d$. Since the state we are measuring lives in the symmetric subspace, we can, without loss of generality, project the effect $B_f^n$ onto this subspace and obtain the projector onto the vector with weight $f$ in the representation $V_\nu$. This vector, denoted by $\ket{\nu, m}$ is unique and has Gelfand-Zetlin pattern $m=(m^{(d-1)}, \ldots, m^{(1)})$ for $m^{(i)}=(\sum_{j=d+1-i}^d f^{(j)}, 0, \ldots, 0)$. The estimate density is therefore given by
$$\mu_{B^n_f}(x)=\dim(n, d)|\braket{x}{\nu, m}|^2,$$
where $\nu=(n, 0, \ldots, 0)$.
By Lemma~\ref{lem:productexpansion} we have
$$\mu_{B^n_f}(\ell, m')=\delta_{m', 0} \frac{1}{\dim(n, d)}\CG{\nu}{\nu^*}{\la}{m}{-m}{0},$$
where $\la=(\ell, 0, \ldots, 0, -\ell)$.

We will now compute these coefficients for qubits, $d=2$, in the case where we have measured an equal number, namely, $\frac{n}{2}$, 1s and 2s (i.e. the case $m=\frac{n}{2}$). We will use this formula to show that estimate density converges to the uniform distribution on the equator of the Bloch sphere. It follows from Corollary~\ref{cor:clebsch-gordon-estimate2} that for $\ell$ and $n/2$ even:
\begin{align*} 
\mu_{B^n_f}(\ell, m)=\delta_{m, 0}(-1)^{\frac{\ell}{2}} (\frac{1}{2})^{\ell} \binom{\ell}{\frac{\ell}{2}} \prod_{i=1}^{\ell}\left( 1-\frac{\ell-i}{n+2+i}\right).
\end{align*}
For large $n$, these coefficients turn into 
$$(-1)^{\frac{\ell}{2}} (\frac{1}{2})^{\ell} \binom{\ell}{\frac{\ell}{2}}$$
which are the Fourier coefficients of the uniform distribution on the equator of the Bloch sphere by Lemma~\ref{lem:equator}. The estimate density therefore concentrates on the equator just as expected, since we do not obtain any information on the phase of the state from this measurement.

This example shows that Fourier analysis is able to trace a complicated convergence behaviour in a compact way. When several bases are used (such as in the BB84 or the six-state protocols for quantum key distribution) one can use the just derived formula together Lemma~\ref{lem:multiply} --- which allows to rotate the basis --- in the update rule.

\end{document}